\newtheorem{theorem}{\textbf{Theorem}}[section]
\newtheorem{proposition}[theorem]{\textbf{Proposition}}%[section]
\newtheorem{corollary}[theorem]{\textbf{Corollary}}%[section]
\newtheorem{assumption}[theorem]{\textbf{Assumption}}%[section]
\newtheorem{lemma}[theorem]{\textbf{Lemma}}%[section]
\newtheorem{remark}[theorem]{\textbf{Remark}}%[section]
\def\EndProof{\indent\vrule height5pt width6pt depth1pt\hfil\par\medbreak}
    \newcommand*\patchAmsMathEnvironmentForLineno[1]{%
      \expandafter\let\csname old#1\expandafter\endcsname\csname #1\endcsname
      \expandafter\let\csname oldend#1\expandafter\endcsname\csname end#1\endcsname
      \renewenvironment{#1}%
         {\linenomath\csname old#1\endcsname}%
         {\csname oldend#1\endcsname\endlinenomath}}%
    \newcommand*\patchBothAmsMathEnvironmentsForLineno[1]{%
      \patchAmsMathEnvironmentForLineno{#1}%
      \patchAmsMathEnvironmentForLineno{#1*}}%
\title{\bf
Chebyshev Polynomials in Distributed Consensus Applications
}
\author{\authorblockN{Eduardo Montijano, Juan I. Montijano, and Carlos Sagues}%
\thanks{E. Montijano and C. Sagues are with
Departamento de Inform\'{a}tica e Ingenier\'{\i}a de Sistemas - Instituto de Investigaci\'{o}n en Ingenier\'{\i}a de Arag\'{o}n (I3A),
Universidad de Zaragoza, Spain. {\tt\small emonti@unizar.es, \tt\small csagues@unizar.es}}
\thanks{J.I. Montijano is with Departamento de Matem\'{a}tica Aplicada - %IUMA,
Instituto Universitario de Matem\'{a}ticas y Aplicaciones (IUMA),
    Universidad de Zaragoza, Spain.
    {\tt\small monti@unizar.es}}%
}
\begin{document}
\maketitle

\thispagestyle{empty}
\pagestyle{empty}

\begin{abstract}
In this paper we analyze the use of Chebyshev polynomials
in distributed consensus applications.
We study the properties of these polynomials
to propose a distributed algorithm
that reaches the consensus in a fast way.
The algorithm is expressed in the form of a linear iteration
and, at each step, the agents only require to
transmit their current state to their neighbors.
The difference with respect to previous
approaches is that the update rule used by the network
is based on the second order difference equation that
describes the Chebyshev polynomials of first kind.
As a consequence,
we show that our algorithm achieves
the consensus using far less iterations than other approaches.
We characterize the main properties of the algorithm
for both, fixed and
switching communication topologies.
The main contribution of the paper is the study
of the properties of the Chebyshev polynomials
in distributed consensus applications,
proposing an algorithm
that increases the convergence rate with respect to existing approaches.
Theoretical results, as well as experiments with synthetic data, show the
benefits using our algorithm.
%This improvement reduces the number of messages
%between the agents, saving both power and time to the networked system.
\end{abstract}
\vskip 5pt
\textit{Index Terms} -  Chebyshev polynomials, distributed consensus, convergence rate.
%Distributed sensor fusion.

\IEEEpeerreviewmaketitle

%%%%%%%%%%%%%%%%%%%%%%%%%%%%%%%%%%%%%%%%%%%%%%%%%%%%%%%%%%%%%%%%%%%%%%%%%%%%%%%%%%%%%%%%%%%%%%%%
\section{Introduction}
\label{introduction}
Chebyshev polynomials~\cite{JCM-DCH:02} are a powerful
mathematical tool that has proven to be very helpful
in many different fields of science.
To name a few, they are used in
the modeling of complex chemical reaction systems~\cite{CN-HHC-AMD:02},
the simulation satellite orbits
around the Earth.~\cite{DLR-DS-JM:98},
the numerical solution of diffusion-reactions equations with severely stiff reaction terms~\cite{JGV-BPS:04}
or the recognition of patterns in images using Support Vector Machine classification~\cite{SO-CHC-HAC:11}.
In this paper we study the use of these polynomials in the
field of distributed consensus applications.
%presenting algorithms that require a fewer communications than other approaches.
%Several reasons motivate the study of these polynomials in the context
%of distributed consensus problems.

%Developments of wireless communications and
%small computational devices favor the appearance of
%large size sensor networks and multi-agent systems.
%Distributed algorithms are essential in these kind of systems
%in order to make them scalable and robust.
%Among them, the distributed consensus methods
%are of special interest in perception tasks.
In sensor networks and multi-agent systems,
the consensus problem consists of
making the whole group of agents to reach a common estimation
about a specific measurement.
Within the control community
many different distributed solutions have been proposed in the past years
\cite{FB-JC-SM:09,WR-RWB:08,ROS-RMM:04,AJ-JL-ASM:03,MZ-SM:10,MM-DS-JP-SHL-RMM:07,LX-SB-SL:05}.
%compute a common estimatio
%A common estimation of the world can be computed
%in a distributed fashion using
%these algorithms, see e.g.,
%\cite{FB-JC-SM:09,WR-RWB:08,ROS-RMM:04,AJ-JL-ASM:03,MZ-SM:10,MM-DS-JP-SHL-RMM:07,LX-SB-SL:05}.
It is well known that the number of messages
required to achieve the consensus depends on the network connectivity.
Interesting analysis of convergence have been done in~\cite{GFT-AB-MG:06,EL-FG-SZ:10},
where consensus methods have been shown to behave
in a similar manner as heat differential equations and electrical resistive networks respectively.
Other interesting approaches analyze the convergence with
stochastic link failures~\cite{SP-BB-AEA:10},
switching random networks~\cite{JZ-QW:09}
and asynchronous consensus~\cite{MZ-SM:08b}.
When the size of the network is large,
communications between different pairs of agents become more difficult
due to distance and power constraints.
Under these circumstances the number of iterations
required to reach the consensus is also large.
For that reason a lot of research has been devoted to mitigate this problem,
providing a variety of solutions that reduce the time to
achieve the consensus.

Some works present
continuous-time solutions to achieve consensus in finite time
using non linear methods~\cite{FJ-LW:09,JC:06,LW-FX:10}.
The use of numerical integrators affects the number of iterations
in these approaches
because they depend on the number of steps taken by the method.
The approach in~\cite{CKK-XG:09} proposes a link scheduling
that reaches the consensus in a finite number of steps.
However, in wireless networks, communications of direct neighbors
depend on the distance that separates them and therefore,
there might be situations in which this method cannot be used
because not all the links are feasible.
Other approaches speed up convergence by
sending additional information in the messages.
Following this idea a multi-hop protocol is presented in~\cite{ZJ-RMM:06}
and second order neighbors are considered in~\cite{DY-SX-HZ-YC:10}.
Unfortunately, the amount of additional information in both cases depends on the topology.
This implies that there might be situations in which
large messages must be sent.

The design of the adjacency matrix has been the focus of several works.
For instance,
the work in~\cite{LX-SB:04} provides the optimal weights for the matrix,
as well as good approximations that do not require any
global knowledge about the network topology.
Different algorithms to solve the optimization problem
of finding the best matrix are proposed in~\cite{YK-DWG-IP:09}.
Another optimization method is proposed in~\cite{BJ-MJ:08},
in this case considering a shift-registers method with a fixed gain.
These approaches indeed improve the convergence speed, nevertheless,
they can still be combined with additional techniques in order to accelerate even
more the consensus.
%For example, by sending additional information in the messages.
%convergence speed can also be improved.
%Following this idea a multi-hop protocol is presented in~\cite{ZJ-RMM:06}
%and second order neighbors are considered in~\cite{DY-SX-HZ-YC:10}.
%Unfortunately, the amount of additional information in both cases depends on the topology.
%This implies that there might be situations in which
%large messages must be sent.

The distributed evaluation of polynomials,
as well as the use of previous information in the algorithm,
have turned out to be easy ways to speed up the consensus,
also keeping the good properties found in standard methods.
The minimal polynomial of the adjacency matrix
is used in~\cite{SS-CNH:07b} and~\cite{YY-GBS-LS-JG:09}.
%In these approaches,
Once this polynomial is known, the network can achieve the consensus
in a finite number of communication rounds.
Unfortunately, when the topology of the network is time-varying
this algorithm does not work and for large networks the
computation of the polynomial can be inefficient.
The approach in~\cite{EK-PF:09} uses a polynomial of fixed degree
with coefficients computed assuming the network is known.
A consensus predictor is considered in~\cite{TCA-BNO-MJC:09}.
Different second order recurrences with fixed gains
are used in~\cite{BO-MC-MR:10,SM-BG-MHS:98}.
Finally, the distributed evaluation of Chebyshev polynomials
for consensus has been
proposed in~\cite{EM-JIM-CS:11,RLGC-AR-NRJ:11}.
Although the convergence of some of these algorithms
under switching topologies has been demonstrated in practice,
to the authors' knowledge there is still a gap in the theoretical analysis
of the behavior of polynomial evaluation in this case.

In this paper we try to fill this gap, extending the results presented in~\cite{EM-JIM-CS:11}
%We provide a more detailed analysis of the
about Chebyshev polynomials and their use in consensus applications.
%Our algorithms only require to transmit their current state to their neighbors.
%The algorithm, initially introduced in~\cite{EM-JIM-CS:11},
%is based on a second order difference equation and,
%at each step, the nodes only require to
%transmit their current state to their neighbors.
In~\cite{EM-JIM-CS:11} we introduced the algorithm,
based on a second order difference equation, and
%using not only the current local status but also the previous one.
%In~\cite{EM-JIM-CS:11}
we studied its convergence to consensus for
stochastic symmetric matrices in fixed graphs.
In this paper we extend the convergence result,
considering non-symmetric matrices that can have complex eigenvalues.
We also provide a complete study of the parameters that
make the algorithm achieve the optimal convergence rate
and we give bounds
on the selection of these parameters
to achieve a faster convergence than using the powers of the weighted adjacency matrix.
Regarding the case of switching communication topologies,
we are able to theoretically show that there always exist parameters that
make the proposed algorithm converge to the consensus.
Experiments with synthetic data show the
benefits of using our algorithm compared to other methods.

The structure of the paper is the following:
In section II we introduce some background
about the Chebyshev polynomials and distributed consensus.
In section III we present the new distributed consensus algorithm
using Chebyshev polynomials.
In sections IV and V we study the properties of the algorithm
with fixed and switching communication topologies respectively.
In section VI we analyze the behavior of the
algorithm in a simulated setup.
Finally in section VII the conclusions of the work
are presented.
In order to simplify the reading of the manuscript
we have moved to an appendix some of the proofs
of the theoretical results in sections III and IV.
We have left in the text only the proofs that contain
convenient information to follow the analysis. 

%%%%%%%%%%%%%%%%%%%%%%%%%%%%%%%%%%%%%%%%%%%%%%%%%%%%%%%%%%%%%%%%%%%%%%%%%%%%%%%%%%%%%%%%%%%%%%%%
\section{\bf Background on Chebyshev Polynomials and Distributed Consensus}
\label{consensus}
In this paper we consider
Chebyshev polynomials of the first kind~\cite{JCM-DCH:02}.
We denote the Chebyshev polynomial of degree $n$ by $T_n(x).$
These polynomials satisfy
\begin{equation}
\label{Property1}
T_n(x)= \cos(n \arccos x),\hbox{ for all }x\in[-1,1],
\end{equation}
and $|T_n(x)| >1$ when $|x| >1$, for all $n\in\mathbb{N}.$
A more general way to define these polynomials in the real domain is
%They can also be defined
using a second order recurrence,
\begin{equation}
\label{ChebyshevSimple}
\begin{array}{l}
T_0(x)= 1,\ T_1(x) = x\\
T_n(x)= 2 x T_{n-1}(x) - T_{n-2}(x), \quad n\geq 2.
\end{array}
\end{equation}
By the theory of difference equations~\cite{RPA:92},
the direct expression of \eqref{ChebyshevSimple} is determined by the roots
$\tau_1$ and $\tau_2$ of the characteristic equation,
%From \eqref{ChebyshevSimple},
%another direct expression of $T_n(x)$ can be obtained
%in terms of the roots, $\tau(x)$ and $1/\tau(x),$ of the characteristic equation
%of the homogeneous difference equation, $t^2-2xt-1=0.$
\begin{equation}
\label{direct_Expression2}
T_n(x)= \dfrac{1}{2}(\tau_1(x)^n + \tau_2(x)^{n}), %= \dfrac{1}{2}(\tau(x)^n + \tau(x)^{-n})=\dfrac{1}{2} \tau(x)^{-n}(1+\tau(x)^{2n}),
\end{equation}
where $\tau_1(x)=x-\sqrt{x^2-1}$ and $\tau_2(x)=x+\sqrt{x^2-1}=1/\tau_1(x)$.
In the paper we take
\begin{equation}
\tau(x) = \left\{
                   \begin{array}{ll}
                     x - \sqrt{x^2 -1}, & \hbox{if } x \ge 0\\
                     x + \sqrt{x^2 -1}, & \hbox{if } x < 0
                   \end{array}
                 \right. ,
\end{equation}
so that $|\tau(x)| < 1$ and $|\tau(x)|^{-1} >1$ for all $|x|>1$,
and therefore,
\begin{equation}
\label{direct_Expression}
T_n(x)= \dfrac{1}{2}(\tau(x)^n + \tau(x)^{-n})=\dfrac{1}{2} \tau(x)^{-n}(1+\tau(x)^{2n}).
\end{equation}
It is clear that if $|x|>1$,
then $T_n(x)$ goes to infinity as $n$ grows.
If $|x|<1,$ then $\tau(x)$ is a complex number with $|\tau(x)|=1$
and $|T_n(x)| \le 1,\ \forall n,$
as stated in eq. \eqref{Property1}.

For the analysis in the paper,
it is also convenient to describe the behavior of Chebyshev polynomials
evaluated in complex numbers.
For any  $z \in \mathbb{C},$
Chebyshev polynomials, $T_n(z)$, on the complex plane can also be
expressed by \eqref{direct_Expression}
%\begin{equation*}
%T_n(z)= \dfrac{1}{2}(\tau(z)^n + \tau(z)^{-n})=\dfrac{1}{2} \tau(z)^{-n}(1+\tau(z)^{2n}),
%\end{equation*}
where $\tau(z)$ is defined now by
\begin{equation}
\tau(z) = \left\{
                   \begin{array}{ll}
                     z - \sqrt{z^2 -1}, & \hbox{if }  |z - \sqrt{z^2 -1}| < 1\\
                     z + \sqrt{z^2 -1}, & \hbox{otherwise}
                   \end{array}
                 \right. ,
\end{equation}
and again $|\tau(z)| \le 1$ and $|\tau(z)|^{-1} \ge 1$ for all $z$.
However, note that Chebyshev polynomials evaluated in a complex number, $T_n(z),$
go always to infinity as $n$ grows.

Consider now a set of $N$ agents, $\mathcal{V} = \{1, \dots, N\},$
with limited communication capabilities.
A distributed algorithm achieves consensus if,
starting with initial conditions $x_i(0) \in \mathbb{R},$
and using only local interactions between agents,
$x_i(n) = x_j(n), \forall i,j\in\mathcal{V},$ as $n\to\infty.$
The interactions between the agents are modeled
using an undirected graph
$\mathcal{G}=\{\mathcal{V},\mathcal{E}\}$,
where $\mathcal{E} \subset \mathcal{V} \times \mathcal{V}$ describes the
communications between pairs of agents.
In this way, agents $i$ and $j$ can communicate if and only if $(i,j) \in \mathcal{E}$.
The neighbors of one agent $i\in\mathcal{V}$ are the subset of agents
that can directly communicate with it;
i.e., $\mathcal{N}_i = \{j \in \mathcal{V} \;|\; (i,j) \in \mathcal{E}\}$.
Initially, let us assume that the communication graph is fixed and connected.
%that is, there exists a path of one or more links between any two agents
%in the network.
%We treat the case of switching graphs in section \ref{switching}.

The discrete time distributed consensus algorithm
based on the weighted adjacency matrix associated to the communication
graph~\cite{FB-JC-SM:09} is
\begin{equation}
\label{Consensus_Laplacian_Individual}
x_i(n) = a_{ii} x_i(n-1) + \sum_{j\in\mathcal{N}_i} a_{ij}x_j(n-1),
\end{equation}
with $x_i(0) = x_i.$
The algorithm can also be expressed in vectorial form as
\begin{equation}
\label{Consensus_Laplacian}
%\textbf{x}(n+1) = \textbf{x}(n) +  \textbf{L} \textbf{x}(n) =\textbf{A} \textbf{x}(n),
\textbf{x}(n) = \textbf{A} \textbf{x}(n-1),
\end{equation}
where $\textbf{x}(n) = (x_1(n),\ldots,x_N(n))^T$ and
$\textbf{A}=[a_{ij}] \in \mathbb{R}^{N \times N}$,
is the weighted matrix.
\begin{assumption}[Stochastic Weights]
\label{RowStochastic}
$\textbf{A}$ is row stochastic
and compatible with the underlying graph, $\mathcal{G}$,
i.e., it is such that $a_{ii}\neq0,$ %\ a_{ij}=0$ if $(i,j)\not\in \mathcal{E},$
$a_{ij}\neq0$ only if  $(i,j)\in \cal{E}$
and $\textbf{A} \textbf{1} = \textbf{1}.$
\end{assumption}

Since the communication graph is connected,
by Assumption \ref{RowStochastic},
$\textbf{A}$ has one eigenvalue $\lambda_1=1$
with associated right eigenvector $\textbf{1}$
and algebraic multiplicity equal to one.
The rest of the eigenvalues, real or complex,
satisfy $|\lambda_i| < 1,\ i=2,\ldots,N.$
Without loss of generality, let us suppose that all the eigenvalues are simple.
We denote by $\lambda_2$ the second largest
and $\lambda_N$ the smallest real eigenvalues
and we assume that
$\max\{|\lambda_2|,|\lambda_N|\} > |\lambda_i|,\ i=3,\ldots,N-1.$
%The rest of the eigenvalues, real or complex,
%sorted in decreasing order, % by modulus,
%satisfy $1 > \lambda_2 \ge \ldots \ge \lambda_N > -1$.

Any initial conditions $\textbf{x}(0)$ can be expressed as a sum of eigenvectors of $\textbf{A},$
\[
\textbf{x}(0)= \textbf{v}_1 + \ldots + \textbf{v}_N,
\]
where $\textbf{v}_i$ is a right eigenvector associated to the eigenvalue $\lambda_i$.
Specifically, $\textbf{v}_1$ will be of the form $(\textbf{w}_1^T\textbf{x}(0)/\textbf{w}_1^T\textbf{1})\textbf{1},$
with $\textbf{w}_1$ a left eigenvector of $\textbf{A}$ associated to $\lambda_1.$
It is clear that
\[
\textbf{x}(n)= \textbf{A}^n \textbf{x}(0)=
\textbf{v}_1 + \lambda_2^n \textbf{v}_2 +\ldots + \lambda_N^n \textbf{v}_N,
\]
and since $|\lambda_i|<1,\ i \neq 1$,
%with $x(n)= 1/N \textbf{v}_1\textbf{x}(0),$
the consensus is asymptotically reached by all the agents in the network,
i.e., 
$\lim_{n\to \infty} \textbf{x}(n) = \textbf{v}_1 =
(\textbf{w}_1^T\textbf{x}(0)/\textbf{w}_1^T\textbf{1})\textbf{1}.$
The asymptotic convergence implies that
the exact consensus value will not be achieved in a finite
number of iterations.
In practice, the consensus is said to be achieved when
$|x_i(n) - x_j(n)|<\texttt{tol}$ for all $i$ and $j$,
and a prefixed error tolerance $\texttt{tol}$.
The convergence speed of \eqref{Consensus_Laplacian}
depends on $\max(|\lambda_2|,|\lambda_N|).$
When the size of the network is large
or the number of links is small this value is usually close to one,
which means that the algorithm requires many iterations
before obtaining a good approximation of the final solution.

When the communication topology changes with the time,
$\mathcal{G}(n)=\{\mathcal{V},\mathcal{E}(n)\},\ $
eq. \eqref{Consensus_Laplacian} becomes
$\textbf{x}(n) = \textbf{A}(n) \textbf{x}(n-1),$
where the different weight matrices are defined according to their respective
underlying communication graphs.
If the different weight matrices satisfy Assumption \ref{RowStochastic},
and the sequence of matrices is not degenerated,
the algorithm is still proved to achieve consensus.
We refer the reader to~\cite{FB-JC-SM:09} for further information
about this case.

\section{\bf Consensus algorithm using Chebyshev polynomials}
\label{algorithm}
The distributed evaluation of polynomials provides
an easy way to speed up the consensus,
keeping the good properties found in standard methods.
The main idea consists in designing a distributed linear iteration
such that the execution of a fixed number of $n$ steps is
equivalent to the evaluation of some polynomial, $P_n(x),$ in the fixed
matrix $\textbf{A}$~\cite{EK-PF:09,SS-CNH:07b}.
The polynomial must satisfy that $P_n(1)= 1$ and
$|P_n(x)|<1$ if $|x|<1.$
In this way, successive evaluations of the polynomial in $\textbf{A}$
will lead to the consensus.
The choice of the polynomial
%as well as the design of the update rule
determine the convergence speed of the algorithm,
given by $\max_{\lambda_i} |P_n(\lambda_i)|,$
with $\lambda_i$ the eigenvalues of $\textbf{A}$.
%In the next sections we propose a distributed algorithm
%based on Chebyshev Polynomials that lead to consensus,
%improving the convergence rate of existing approaches.

Two reasons motivate the choice of Chebyshev polynomials
for the consensus problem:
\begin{itemize}
\item By using the recurrent definition \eqref{ChebyshevSimple},
instead of considering a polynomial of fixed degree
we can evaluate Chebyshev polynomials of higher and higher
degree as successive iterations of the algorithm are executed.
\item Chebyshev polynomials have the mini-max property~\cite{JCM-DCH:02}.
This property says that, among all the monic polynomials of degree $n$, %$P_n(x),$
the polynomial $2^{1-n}T_n(x)$ is the one that minimizes the uniform norm
on the interval $[-1,1]$.
This property is indeed quite convenient for our purposes.
If the matrix $\textbf{A}$ is unknown, using the Chebyshev polynomials
we are minimizing $\max_{\lambda\in[-1,1]} P_n(\lambda),$
therefore, getting high chances to obtain a good convergence rate.
\end{itemize}

However, the monic version of the Chebyshev polynomials
does not satisfy $2^{1-n}T_n(1)=1.$
In order to keep this property
we perform a linear transformation of $T_n(x)$,
using two real coefficients $\lambda_m$, $\lambda_M$,
with $1>\lambda_M > \lambda_m >-1,$
bringing the interval $[\lambda_m,\lambda_M]$ to $[-1,1].$
%for a given $n,$ by eq. \eqref{Property1},
%Chebyshev polynomials have $n+1$ points for which $|T_n(x)| = 1$
%in the interval $x\in[-1,1].$
%In the case of Chebyshev polynomials,
%for a given $n,$ by eq. \eqref{Property1}, in the interval $x\in[-1,1]$
%there are $n+1$ points for which $|T_n(x)| = 1.$
%This implies that $T_n(x)$ is not smaller than 1 for all $x\in (-1,1)$
%and we can not ensure convergence to the consensus
%if we use this polynomial in the process.
%We overcome this limitation
%using two real coefficients $\lambda_m$, $\lambda_M$,
%satisfying $1>\lambda_M > \lambda_m >-1,$
%using them to execute a linear transformation
%that brings $[\lambda_m,\lambda_M]$ to $[-1,1].$
In this way, we define the polynomial
%\begin{equation}
%\label{ChebyConsensus}
%p_n(x) = \frac{T_n(cx-d)}{T_n(c-d)},
%\end{equation}
%with
%\begin{equation}
%\label{Parameters_Consensus}
%c=\dfrac{2}{\lambda_M - \lambda_m}, \quad d=\dfrac{\lambda_M + \lambda_m}{\lambda_M - \lambda_m},
%\end{equation}
\begin{equation}
\label{ChebyConsensus}
P_n(x) = \frac{T_n(cx-d)}{T_n(c-d)},
\hbox{ with }
c=\dfrac{2}{\lambda_M - \lambda_m}, \quad d=\dfrac{\lambda_M + \lambda_m}{\lambda_M - \lambda_m},
\end{equation}
which, for all $n,$ has the following properties:
\begin{itemize}
\item
if $x\in[\lambda_m,\lambda_M]$, then $cx-d\in[-1,1]$
\item
$P_n(1)=1$ and $P_n(\lambda_M+\lambda_m-1)=(-1)^n$
%\item
%$\max_{x \in (\lambda_M+\lambda_m-1, 1)} |P_n(x)|<1$
\item
$|P_n(x)| <1$ for all $x \in (\lambda_M+\lambda_m-1, 1)$
and $|P_n(x)| \geq1$ otherwise.
%\item
%$|P_n(x)| >1$ for all $x \not\in [\lambda_M+\lambda_m-1, 1]$
\end{itemize}

The polynomial defined in \eqref{ChebyConsensus} satisfies the recurrence
\begin{equation}
\label{RecurrencePn}
%\begin{split}
P_{n}(x) = 2\dfrac{T_{n-1}(c-d)}{T_{n}(c-d)}(cx-d)P_{n-1}(x)-%\\
\dfrac{T_{n-2}(c-d)}{T_{n}(c-d)}P_{n-2}(x)
%\end{split}
\end{equation}
and the consensus rule  $\textbf{x}(n)= P_n(\textbf{A})\textbf{x}(0)$
is defined by
\begin{equation}
\label{ConsensusAlg_Cheby}
\begin{split}
\textbf{x}(1) &= P_1(\textbf{A})\textbf{x}(0) = \dfrac{1}{T_1(c-d)}(c \textbf{A}  - d \textbf{I})\textbf{x}(0),\\[10pt]
\textbf{x}(n) &= P_n(\textbf{A})\textbf{x}(0) =
%\frac{2(c \textbf{A} -d \textbf{I})T_{n-1}(c \textbf{A} -d \textbf{I})}{T_n(c-d)}\textbf{x}(0) -
%\frac{T_{n-2}(c \textbf{A} -d \textbf{I})}{T_n(c-d)}\textbf{x}(0) =\\[10pt]
\left (2\dfrac{T_{n-1}(c-d)}{T_{n}(c-d)}(c \textbf{A} -d \textbf{I})P_{n-1}(\textbf{A})
-\dfrac{T_{n-2}(c-d)}{T_{n}(c-d)} P_{n-2}(\textbf{A})\right )\textbf{x}(0) \\[10pt]
&=2\dfrac{T_{n-1}(c-d)}{T_{n}(c-d)}(c \textbf{A} -d \textbf{I})\textbf{x}(n-1)
-\dfrac{T_{n-2}(c-d)}{T_{n}(c-d)} \textbf{x}(n-2),\ n\geq2,
\end{split}
\end{equation}
with $\textbf{I}$ the identity matrix of dimension $N$.
Notice that this consensus rule is well designed to be executed in a distributed
fashion.
%This consensus rule allows a stable computation of successive iterations
%in a distributed way only by transmitting the current state to the neighbors,
%as in \eqref{Consensus_Laplacian}.
%The only additional information required in the algorithm
%are $\lambda_m$ and $\lambda_M.$

When the topology of the network changes,
the recurrent evaluation of Chebyshev polynomials \eqref{ConsensusAlg_Cheby}
can still be used.
%polynomial approaches can still be used,
%provided that they are defined by a distributed linear iteration.
The time-varying version of the algorithm is equivalent to \eqref{ConsensusAlg_Cheby}
replacing the constant weight matrix $\textbf{A}$ by the weight matrix at each step $\textbf{A}(n).$
Although this is no longer equivalent to the distributed evaluation of a Chebyshev polynomial,
a theoretical analysis about its convergence properties is still possible.
Algorithm \ref{ConsensusAlg} shows a possible implementation of the algorithm.
%because there is no a unique matrix $\textbf{A}$.
In the rest of the paper we analyze, both in theory and practice,
the main properties of this algorithm
for fixed and switching communication topologies.
\begin{algorithm}[!ht]
\small
\caption{Consensus algorithm using Chebyshev polynomials - agent $i$}
\label{ConsensusAlg}
\begin{algorithmic}[1]
\Require $x_i(0),$ MaxIt $\in \mathbb{N},\ \lambda_m,\ \lambda_M,$
%\Require $x_i(0),$ MaxIt $\in \mathbb{N},\ \lambda_m,\ \lambda_M,\ a_{ij},\ j \in \mathcal{N}_i$
%\begin{itemize}
%\item $\textbf{A}=[a_{ij}]$ satisfies assumption \ref{RowStochastic}
%\item $1>\lambda_M > \lambda_m >-1,$ equal for all $i \in \mathcal{V}$
%with $\lambda_N > \lambda_m +\lambda_M-1$
%\end{itemize}
%\Ensure $x_i^{(2)} \to \bar{x} = 1/N \textbf{1}^T \textbf{x}(0)$
\State -- \textit{Initialization}
\State $c=2/(\lambda_M - \lambda_m); \quad d=(\lambda_M + \lambda_m)/(\lambda_M - \lambda_m);$
%\State $a^{(0)}=1$;\quad $x_i^{(0)}=x_i(0);$ $a^{(1)}=c-d$;
\State $T(0)=1;\quad T(1)=c-d$;
\State -- \textit{First Communication Round}
%\[x_i^{(1)} = \frac{1}{a^{(1)}}(c\sum_{j\in\mathcal{N}_i(n)} a_{ij}x_j^{(0)}+(c\ a_{ii}-d) x_i^{(0)});\]
\[x_i(1) = \frac{1}{T(1)}(c\sum_{j\in\mathcal{N}_i(n)} a_{ij}x_j(0)+(c\ a_{ii}-d) x_i(0));\]
%\Repeat
%\State $a^{(2)}= 2 (c-d) a^{(1)}- a^{(0)};$
%\State -- \textit{Communication Between Neighbors}
%\[x_i^{(2)} = 2 \dfrac{a^{(1)}}{a^{(2)}}(c\sum_{j\in\mathcal{N}_i(n)} a_{ij}x_j^{(1)}+(c\ a_{ii}-d) x_i^{(1)})- \dfrac{a^{(0)}}{a^{(2)}} x_i^{(0)};\]
%\State -- \textit{Update Parameters}
%\State $a^{(0)}=a^{(1)};\quad x_i^{(0)}=x_i^{(1)};\quad$
%$a^{(1)}=a^{(2)};\quad x_i^{(1)}=x_i^{(2)};$
%\Until{MaxIt}
%\EndFor
\For{$n=2,\ldots,$MaxIt}
%\State $a^{(n)}= 2 (c-d) a^{(n-1)}- a^{(n-2)};$
\State $T(n)= 2 (c-d) T(n-1)- T(n-2);$
\State -- \textit{Communication Between Neighbors}
%\[x_i^{(n)} = 2 \dfrac{a^{(n-1)}}{a^{(n)}}(c\sum_{j\in\mathcal{N}_i(n)} a_{ij}x_j^{(n-1)}+(c\ a_{ii}-d) x_i^{(n-1)})- \dfrac{a^{(n-2)}}{a^{(n)}} x_i^{(n-2)};\]
\[x_i(n) = 2 \dfrac{T(n-1)}{T(n)}(c\sum_{j\in\mathcal{N}_i(n)} a_{ij}x_j(n-1)+(c\ a_{ii}-d) x_i(n-1))- \dfrac{T(n-2)}{T(n)} x_i(n-2);\]
%\State -- \textit{Update Parameters}
%\State $a^{(0)}=a^{(1)};\quad x_i^{(0)}=x_i^{(1)};\quad$
%$a^{(1)}=a^{(2)};\quad x_i^{(1)}=x_i^{(2)};$
\EndFor
\end{algorithmic}
\end{algorithm}
%%%%%%%%%%%%%%%%%%%%%%%%%%%%%%%%%%%%%%%%%%%%%%%%%%%%%%%%%%%%%%
%END ALGORITHM
%%%%%%%%%%%%%%%%%%%%%%%%%%%%%%%%%%%%%%%%%%%%%%%%%%%%%%%%%%%%%% 

\section{\bf Analysis with a Fixed Communication Topology}
\label{polynomials}
In this section we analyze the main properties of
the proposed algorithm when the network topology is fixed.
In particular we first study the convergence conditions of the algorithm.
Next, we find the parameters that maximize the convergence speed.
Finally,
we give bounds on the selection of these parameters
to satisfy that our algorithm achieves the consensus faster
than \eqref{Consensus_Laplacian}.

%%%%%%%%%%%%%%%%%%%%%%%%%%%%%%%%%%%%%%%%%%%%%%%%%%%%%%%%%%%%%%%%%%%%%%%%%%%%%%
%THEOREM
%%%%%%%%%%%%%%%%%%%%%%%%%%%%%%%%%%%%%%%%%%%%%%%%%%%%%%%%%%%%%%%%%%%%%%%%%%%%%%
\begin{theorem}[Convergence of the algorithm]
\label{ConvergenceTh}
Let $\textbf{A}$
be diagonalizable, fulfilling Assumption \ref{RowStochastic},
and parameters $\lambda_m$ and $\lambda_M$ such
that $1 > \lambda_M > \lambda_m > -1$.
If the minimum real eigenvalue of $\textbf{A}$
satisfies $\lambda_N > \lambda_m + \lambda_M -1$ and the complex eigenvalues,
$\lambda_z,$ of $\textbf{A}$ satisfy $|\tau(c\lambda_z-d)|>\tau(c-d),$
%\begin{equation}
%\min \{|c\lambda-d - \sqrt{(c\lambda-d)^2 -1}|, |c\lambda-d + \sqrt{(c\lambda-d)^2 -1}|\}
%< c-d - \sqrt{(c-d)^2-1} \},
%\end{equation}
%\begin{equation}
%\begin{split}
%&\min \{|c\lambda-d - \sqrt{(c\lambda-d)^2 -1}|, |c\lambda-d + \sqrt{(c\lambda-d)^2 -1}|\}\\
%&< c-d - \sqrt{(c-d)^2-1} \},
%\end{split}
%\end{equation}
then the recurrence in eq. \eqref{ConsensusAlg_Cheby} converges
to the consensus state,
$\lim_{n\to \infty} \textbf{x}(n) = \textbf{w}_1^T\textbf{x}(0)\textbf{1}/\textbf{w}_1^T\textbf{1}.$
Besides,
the convergence rate is given by
\begin{equation}
\label{convergenceRate1}
%\Vert \textbf{x}(n) - \textbf{v}_1 \Vert_2 \le
\displaystyle\max_{\lambda_i \ne 1} \dfrac{\left|T_n(c\lambda_i -d)\right|}{T_n(c-d)}
%\Vert \textbf{x}(0) - \textbf{v}_1 \Vert_2.
\end{equation}
\end{theorem}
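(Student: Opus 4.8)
The plan is to exploit the diagonalizability of $\textbf{A}$ to reduce the matrix recurrence to a family of scalar recurrences, one per eigenvalue, and then to control the size of the corresponding Chebyshev ratios. Writing $\textbf{x}(0)=\textbf{v}_1+\cdots+\textbf{v}_N$ as a sum of right eigenvectors, the identity $\textbf{x}(n)=P_n(\textbf{A})\textbf{x}(0)$ immediately gives $\textbf{x}(n)=\sum_{i=1}^N P_n(\lambda_i)\textbf{v}_i$, where $P_n(x)=T_n(cx-d)/T_n(c-d)$. By construction $P_n(1)=1$, so the component along $\textbf{v}_1=(\textbf{w}_1^T\textbf{x}(0)/\textbf{w}_1^T\textbf{1})\textbf{1}$ is preserved exactly. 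Convergence to consensus is therefore equivalent to showing $P_n(\lambda_i)\to0$ for every $i\neq1$, and the convergence rate is by definition $\max_{\lambda_i\neq1}|P_n(\lambda_i)|$, which is the quantity in \eqref{convergenceRate1}. A preliminary observation I would record is that the hypothesis $\lambda_M<1$ forces $c-d=(2-\lambda_M-\lambda_m)/(\lambda_M-\lambda_m)>1$, so the denominator grows like $\tfrac12\tau(c-d)^{-n}\to\infty$; this is the mechanism that makes the ratios decay.

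For the real eigenvalues I would use the affine map $x\mapsto cx-d$, which sends $[\lambda_m,\lambda_M]$ onto $[-1,1]$ and, more importantly, sends $\lambda_m+\lambda_M-1$ to $-(c-d)$ and $1$ to $c-d$. The standing assumption $\lambda_i<1$ together with the hypothesis $\lambda_N>\lambda_m+\lambda_M-1$ places every real eigenvalue $\lambda_i\neq1$ in the open interval $(\lambda_m+\lambda_M-1,1)$, hence $|c\lambda_i-d|<c-d$. Two subcases then finish the estimate: if $|c\lambda_i-d|\le1$ then $|T_n(c\lambda_i-d)|\le1$ by \eqref{Property1} and the ratio is bounded by $1/T_n(c-d)\to0$; if $1<|c\lambda_i-d|<c-d$ then, using the parity $T_n(-y)=(-1)^nT_n(y)$ and the fact that $\tau$ is strictly decreasing in magnitude for arguments larger than one, the direct expression \eqref{direct_Expression} yields $|P_n(\lambda_i)|\sim(\tau(c-d)/\tau(|c\lambda_i-d|))^n\to0$.

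The genuinely delicate part is the complex eigenvalues, because for any $z\in\mathbb{C}$ the value $T_n(z)$ itself diverges, so decay of the ratio cannot be read off from boundedness of the numerator and must instead come from a comparison of exponential growth rates. Here I would insert the complex direct expression $T_n(z)=\tfrac12\tau(z)^{-n}(1+\tau(z)^{2n})$ with $|\tau(z)|\le1$, both in the numerator at $z=c\lambda_z-d$ and in the denominator at $c-d$. Taking moduli, the bracketed factors tend to $1$ and the dominant behaviour is $|P_n(\lambda_z)|\sim(\tau(c-d)/|\tau(c\lambda_z-d)|)^n$, which tends to zero precisely when $|\tau(c\lambda_z-d)|>\tau(c-d)$, the stated hypothesis. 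The step I expect to require the most care is making this asymptotic rigorous rather than heuristic: one must verify that $|\tau(c\lambda_z-d)|<1$ strictly, so the $\tau^{2n}$ correction is genuinely negligible, and that the $(1+\tau^{2n})$ factors stay bounded away from $0$, ruling out cancellation in the numerator. The combination of $|\tau(c\lambda_z-d)|>\tau(c-d)$ with $\tau(c-d)<1$ supplies exactly the gap needed for both.

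Collecting the three cases, every $P_n(\lambda_i)$ with $i\neq1$ tends to $0$, so $\textbf{x}(n)\to\textbf{v}_1=(\textbf{w}_1^T\textbf{x}(0)/\textbf{w}_1^T\textbf{1})\textbf{1}$, and since the slowest-decaying mode governs the transient, the convergence rate is $\max_{\lambda_i\neq1}|T_n(c\lambda_i-d)|/T_n(c-d)$ as claimed.
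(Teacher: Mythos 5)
Your proposal is correct and follows essentially the same route as the paper: reduce to showing $P_n(\lambda_i)\to0$ for every $\lambda_i\neq1$ via the eigendecomposition, handle real eigenvalues by the observation that $\lambda_i\in(\lambda_m+\lambda_M-1,1)$ implies $|c\lambda_i-d|<c-d$ (the paper's Lemma A.1), and handle complex eigenvalues by comparing $|\tau(c\lambda_z-d)|$ with $\tau(c-d)$ in the direct expression (the paper's Lemma A.2). The only cosmetic difference is that the paper routes the argument through the deflated matrix $\textbf{Q}=\textbf{A}-\textbf{1}\textbf{w}_1^T/\textbf{w}_1^T\textbf{1}$ and bounds $\Vert P_n(\textbf{Q})\Vert_2$ by the condition number of the eigenvector matrix times $\max_{i\neq1}|P_n(\lambda_i)|$, whereas you expand $\textbf{x}(0)$ directly in the eigenbasis; both are valid under the diagonalizability hypothesis.
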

%%%%%%%%%%%%%%%%%%%%%%%%%%%%%%%%%%%%%%%%%%%%%%%%%%%%%%%%%%%%%%%%%%%%%%%%%%%%%%
%PROOF
%%%%%%%%%%%%%%%%%%%%%%%%%%%%%%%%%%%%%%%%%%%%%%%%%%%%%%%%%%%%%%%%%%%%%%%%%%%%%%
\begin{proof}
See the Appendix.
\end{proof}
%%%%%%%%%%%%%%%%%%%%%%%%%%%%%%%%%%%%%%%%%%%%%%%%%%%%%%%%%%%%%%%%%%%%%%%%%%%%%%
%END PROOF
%%%%%%%%%%%%%%%%%%%%%%%%%%%%%%%%%%%%%%%%%%%%%%%%%%%%%%%%%%%%%%%%%%%%%%%%%%%%%%

Note that the conditions in Theorem \ref{ConvergenceTh} are easy to
fulfill without the necessity of knowing the eigenvalues of the matrix $\textbf{A}.$
For the real eigenvalues, any symmetric selection of
the parameters, i.e., $-\lambda_m = \lambda_M,\ 0<\lambda_M<1,$
satisfies the condition in Theorem \ref{ConvergenceTh}.
The condition on the complex eigenvalues has some
geometrical meaning~\cite{JCM-DCH:02}.
Imposing that $|\tau(c\lambda_z-d)|>\tau(c-d)$
is equivalent to require that $\lambda_z$ is
inside an ellipse in the complex plane centered at $(d/c, 0),$
or equivalently $((\lambda_M+\lambda_m)/2, 0),$
and with semi-axis $e_1 = (c-d)/c$ and $e_2 = (\sqrt{(c-d)^2-1})/c$
(see Fig \ref{EllipseConvergence}).
In practice, any parameters that ensure convergence for the real
eigenvalues also ensure convergence for the complex ones.
We have observed that if $\textbf{A}$ is defined
using well known distributed methods~\cite{LX-SB:04},
the complex eigenvalues,
when there are any of them, have always a very small modulus.
For that reason, in the rest of the section
we will assume that the matrix $\textbf{A}$ has only real eigenvalues.
\begin{figure}[!ht]
  \centering
 \includegraphics[width=0.6\linewidth]{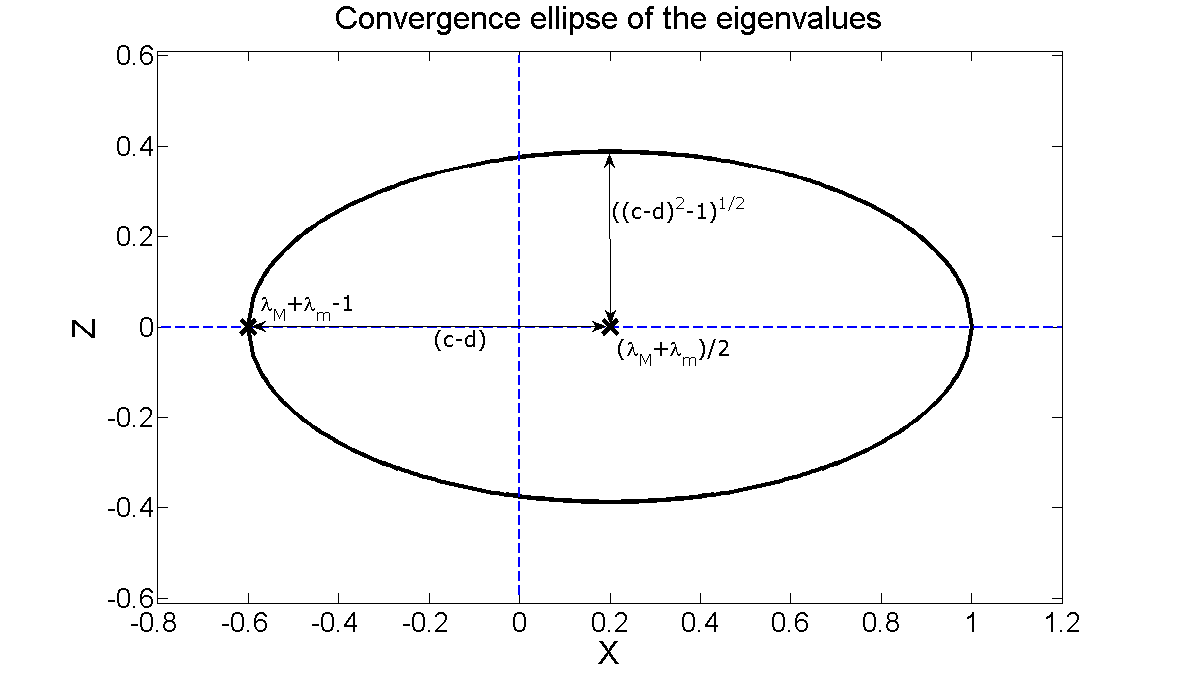}
  \caption{
  Ellipse where all the eigenvalues must be contained in order to achieve the consensus.
  In this particular example we have chosen $\lambda_M = 0.9$ and $\lambda_m = -0.5.$
  Note that when the imaginary part of the eigenvalues is zero convergence is achieved
  if $\lambda_M+\lambda_m-1>\lambda>1$ as stated in Theorem \ref{ConvergenceTh}.
  }
  \label{EllipseConvergence}
\end{figure}

Next, we are interested in knowing the optimal
selection of $\lambda_m$ and $\lambda_M$ to maximize the convergence speed.
From Theorem \ref{ConvergenceTh}
we know that the convergence rate is given by the factor
%\begin{equation*}
%\max _{\lambda_i\neq1} \dfrac{T_n(c\lambda_i-d)}{T_n(c-d)}.
%\end{equation*}
%Let us suppose that
%$[\lambda_N, \lambda_2] \not\subseteq [\lambda_n, \lambda_M]$.
%Then
\begin{equation}
\small
\max_{\lambda_i\neq1} \dfrac{|T_n(c\lambda_i-d)|}{T_n(c-d)}=
\max\left\{ \dfrac{|T_n(c\lambda_N-d)|}{T_n(c-d)},
 \dfrac{|T_n(c\lambda_2-d)|}{T_n(c-d)} \right\}.
\end{equation}
%and $\max\{ |c\lambda_N -d|, |c \lambda_2 -d|\} >1$.
%We are interested in the values of $\lambda_m$, $\lambda_M$
%that make minimum this value (for $n \to \infty$).

If the conditions in Theorem \ref{ConvergenceTh}
are satisfied, for any $\lambda,$
a simple calculation using eq. \eqref{direct_Expression} leads to
\begin{equation}
\label{TnFunctionTau}
\dfrac{|T_n(c\lambda-d)|}{T_n(c-d)}=
\left( \dfrac{\tau(c-d)}{|\tau(c\lambda - d)|} \right)^n
\dfrac{1+ \tau(c\lambda-d)^{2n}}{1+ \tau(c-d)^{2n}}.
\end{equation}
It is clear that when $n \to \infty$, the second fraction in the right side
of \eqref{TnFunctionTau} goes to 1.
Therefore, the convergence rate is determined by
\begin{equation}
\max\left\{ \dfrac{\tau(c-d)}{|\tau(c\lambda_N - d)|},
 \dfrac{\tau(c-d)}{|\tau(c\lambda_2 - d)|} \right\}
\end{equation}
If $[\lambda_N, \lambda_2] \subseteq [\lambda_m, \lambda_M]$, then
$\max_{\lambda_i} |T_n(c\lambda_i-d)| \le 1$ and therefore we can define
the convergence factor as
\begin{equation}
\nu(c,d)=
\left\{
\begin{split}
&\tau(c-d), &\hbox{ if } [\lambda_N, \lambda_2] \subseteq [\lambda_m, \lambda_M]\\[7pt]
&\max\left\{ \dfrac{\tau(c-d)}{|\tau(c\lambda_N - d)|},
\dfrac{\tau(c-d)}{|\tau(c\lambda_2 - d)|} \right\},  &\hbox{ otherwise.}
\end{split}
\right.
\end{equation}

The optimum values of $\lambda_m$ and  $\lambda_M$
will be those that lead to the minimum value of $\nu(c,d)$.
In~\cite{EM-JIM-CS:11} %(Proposition 3.1)
it was proved that among the values of the parameters
satisfying $[\lambda_N, \lambda_2] \subseteq [\lambda_n, \lambda_M],$
the ones that yield the minimum convergence factor are precisely
$\lambda_m=\lambda_N$ and $\lambda_M=\lambda_2$.
Let us see that they are also the optimum parameters in the case
$[\lambda_N, \lambda_2] \not\subseteq [\lambda_n, \lambda_M].$

%%%%%%%%%%%%%%%%%%%%%%%%%%%%%%%%%%%%%%%%%%%%%%%%%%%%%%%%%%%%%%%%%%%%%%%%%%%%%%
%THEOREM
%%%%%%%%%%%%%%%%%%%%%%%%%%%%%%%%%%%%%%%%%%%%%%%%%%%%%%%%%%%%%%%%%%%%%%%%%%%%%%
\begin{theorem}[Optimal parameters]
\label{Th_optimalParameters}
The convergence rate $\nu(c,d)$ attains its minimum value for
the parameters $c,\ d$ such that $\lambda_M=\lambda_2$ and $\lambda_m=\lambda_N$
\end{theorem}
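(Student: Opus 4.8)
The plan is to regard $\nu(c,d)$ as a function of the design interval $[\lambda_m,\lambda_M]$ and to show its minimum over all admissible parameters is attained at the corner $\lambda_m=\lambda_N,\ \lambda_M=\lambda_2$. The convenient bookkeeping is the hyperbolic form of \eqref{direct_Expression}: writing $c-d=\cosh\beta$ and, for any $\lambda$ with $|c\lambda-d|\ge1$, $|c\lambda-d|=\cosh\theta_\lambda$ with $\beta,\theta_\lambda\ge0$, one gets $\tau(c-d)=e^{-\beta}$ and $|\tau(c\lambda-d)|=e^{-\theta_\lambda}$, so the per-eigenvalue factor from \eqref{TnFunctionTau} is $\rho(\lambda):=\tau(c-d)/|\tau(c\lambda-d)|=e^{-(\beta-\theta_\lambda)}$. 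Since $\theta_\lambda=0$ exactly when $\lambda\in[\lambda_m,\lambda_M]$ and $\theta_\lambda$ grows as $\lambda$ leaves the interval, $\rho$ is flat on $[\lambda_m,\lambda_M]$ and strictly increasing outside; hence only the extreme eigenvalues are active and $\nu(c,d)=\max\{\rho(\lambda_N),\rho(\lambda_2)\}$. The case $[\lambda_N,\lambda_2]\subseteq[\lambda_m,\lambda_M]$ is already settled in \cite{EM-JIM-CS:11} (corner optimal), so it remains to rule out that a configuration with $\lambda_N$ or $\lambda_2$ outside the interval does better. I write the corner value as $\nu^\star=\tau(c^\star-d^\star)$ with $c^\star-d^\star=(2-\lambda_2-\lambda_N)/(\lambda_2-\lambda_N)$, and split the non-contained region by the dichotomy: (i) at least one of $\lambda_m\le\lambda_N$ or $\lambda_M\ge\lambda_2$ holds; (ii) $\lambda_N<\lambda_m$ and $\lambda_M<\lambda_2$.

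In case (i) a single endpoint push suffices. Suppose $\lambda_M<\lambda_2$ (so I push $\lambda_M$; the mirror situation $\lambda_m>\lambda_N$ is identical). First I would show that, with $\lambda_m$ held fixed, $\rho(\lambda_2)$ is strictly decreasing in $\lambda_M$: differentiating $\ln\rho(\lambda_2)$ using $\tfrac{d}{dt}\ln(t+\sqrt{t^2-1})=1/\sqrt{t^2-1}$ together with the factorizations $(c-d)^2-1$ and $(c\lambda_2-d)^2-1$, the sign of the derivative reduces to the elementary inequality $(1-\lambda_m)(\lambda_2-\lambda_M)<(\lambda_2-\lambda_m)(1-\lambda_M)$, whose two sides differ by $(\lambda_m-\lambda_M)(1-\lambda_2)<0$. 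Because in case (i) we have $\lambda_m\le\lambda_N\le\lambda_2$, the eigenvalue $\lambda_N$ is never farther out on the right than $\lambda_2$ and never out on the left, so $\rho(\lambda_N)\le\rho(\lambda_2)$ is maintained throughout the push; thus $\nu=\rho(\lambda_2)$ decreases monotonically. At $\lambda_M=\lambda_2$ the interval $[\lambda_m,\lambda_2]$ contains $[\lambda_N,\lambda_2]$, so \cite{EM-JIM-CS:11} gives $\nu\ge\nu^\star$, whence the original configuration also satisfies $\nu\ge\nu^\star$.

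Case (ii), where the interval sits strictly inside $(\lambda_N,\lambda_2)$, is the genuine obstacle: moving one endpoint lowers one of $\rho(\lambda_N),\rho(\lambda_2)$ but raises the other, so there is no monotone descent to the corner. I would resolve it by a two-step balancing argument. Step one: keep the half-width $r=(\lambda_M-\lambda_m)/2$ fixed and slide the center $m=(\lambda_m+\lambda_M)/2$ toward $\mu=(\lambda_N+\lambda_2)/2$; a short computation gives $\tfrac{d}{dm}\ln\rho(\lambda_2)<0$ and $\tfrac{d}{dm}\ln\rho(\lambda_N)>0$, so sliding toward $\mu$ lowers the larger factor and raises the smaller, the maximum does not increase, and the motion ends at the balanced configuration $m=\mu$, where $|c\lambda_N-d|=|c\lambda_2-d|$ and hence $\rho(\lambda_N)=\rho(\lambda_2)$. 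Step two: on the balanced family (center $\mu$, half-width $r\in(0,R]$ with $R=(\lambda_2-\lambda_N)/2$) one has $c-d=(1-\mu)/r$ and $|c\lambda_2-d|=R/r$, so $\nu=e^{-(\beta-\theta_2)}$ with $\beta-\theta_2=\cosh^{-1}\!\big((1-\mu)/r\big)-\cosh^{-1}\!\big(R/r\big)$; its $r$-derivative has the sign of $R/\sqrt{R^2-r^2}-(1-\mu)/\sqrt{(1-\mu)^2-r^2}$, which is positive precisely because $1-\mu>R$, i.e. $\lambda_2<1$. Hence $\beta-\theta_2$ increases with $r$, $\nu$ is minimized at $r=R$ (the corner) with value $\nu^\star$, and combining the two steps yields $\nu\ge\nu_{\text{balanced}}\ge\nu^\star$.

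Putting the cases together, no admissible choice beats the corner, so $\nu(c,d)$ attains its minimum at $\lambda_M=\lambda_2,\ \lambda_m=\lambda_N$. I would also note that the admissibility (convergence) constraint $\lambda_N>\lambda_m+\lambda_M-1$ is preserved along every move, since it is equivalent to $|c\lambda_N-d|<c-d$ and the balancing slide keeps the center at or below $(\lambda_N+1)/2$. I expect case (ii) to be the main difficulty: it is the only configuration for which a direct monotone reduction to the contained case fails, forcing the equioscillation-style balancing step, whereas everything else follows from the one-sided monotonicity of $\rho(\lambda_2)$ (and its mirror for $\rho(\lambda_N)$).
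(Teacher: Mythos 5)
Your proof is correct and follows essentially the same route as the paper's: split on whether $[\lambda_N,\lambda_2]\subseteq[\lambda_m,\lambda_M]$, reduce any non-contained configuration by a monotone move either to the contained case or to the centered (``balanced'') family with $\lambda_M+\lambda_m=\lambda_2+\lambda_N$, and then show that along that family the factor decreases as the interval widens to $[\lambda_N,\lambda_2]$ --- your step in $r$ is exactly the paper's derivative computation in $\alpha$ with $\lambda_M=\lambda_2-\alpha$, $\lambda_m=\lambda_N+\alpha$. The only cosmetic differences are that your one-sided descent holds an endpoint fixed where the paper's Lemmas A.3--A.4 hold the width $2/c$ fixed and vary $d$, and your $\cosh^{-1}$ bookkeeping replaces the paper's direct differentiation of $\tau$.
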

%%%%%%%%%%%%%%%%%%%%%%%%%%%%%%%%%%%%%%%%%%%%%%%%%%%%%%%%%%%%%%%%%%%%%%%%%%%%%%
%PROOF
%%%%%%%%%%%%%%%%%%%%%%%%%%%%%%%%%%%%%%%%%%%%%%%%%%%%%%%%%%%%%%%%%%%%%%%%%%%%%%
\begin{proof}
See the Appendix.
\end{proof}
%%%%%%%%%%%%%%%%%%%%%%%%%%%%%%%%%%%%%%%%%%%%%%%%%%%%%%%%%%%%%%%%%%%%%%%%%%%%%%
%END PROOF
%%%%%%%%%%%%%%%%%%%%%%%%%%%%%%%%%%%%%%%%%%%%%%%%%%%%%%%%%%%%%%%%%%%%%%%%%%%%%%
This implies that in order to achieve the maximum convergence speed,
some knowledge about the network is required.
However, even if the network topology is unknown,
it is important to study when the algorithm converges
in a faster way than \eqref{Consensus_Laplacian}.
Since the symmetric assignation of the parameters, $\lambda_M=-\lambda_m,$
always ensures convergence,
in the last result of this section we provide bounds for this particular case
that also converge faster than \eqref{Consensus_Laplacian}.
%%%%%%%%%%%%%%%%%%%%%%%%%%%%%%%%%%%%%%%%%%%%%%%%%%%%%%%%%%%%%%%%%%%%%%%%%%%%%%
%THEOREM
%%%%%%%%%%%%%%%%%%%%%%%%%%%%%%%%%%%%%%%%%%%%%%%%%%%%%%%%%%%%%%%%%%%%%%%%%%%%%%
\begin{theorem}[Faster convergence than $\textbf{A}^n$]
\label{Th_convFaster}
For any matrix $\textbf{A}$ satisfying
Assumption \ref{RowStochastic},
let $\lambda = \max(|\lambda_2|,|\lambda_N|)$
be the convergence rate in \eqref{Consensus_Laplacian}.
For any
\begin{equation}
\label{conditions_onM}
0 < \lambda_M < \frac{2\lambda}{\lambda^2+1},\hbox{ and } \lambda_m = -\lambda_M,
\end{equation}
$P_n(\lambda)$ goes to zero
faster than $\lambda^n$ when $n$ goes to infinity.
Therefore the algorithm in eq. \eqref{ConsensusAlg_Cheby} converges
to the consensus
faster than the one in eq. \eqref{Consensus_Laplacian}.
\end{theorem}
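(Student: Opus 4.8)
The plan is to specialize the general convergence factor to the symmetric choice $\lambda_m=-\lambda_M$ and then compare its exponential rate with $\lambda$. First I would note that $\lambda_m=-\lambda_M$ forces $c=1/\lambda_M$ and $d=0$ in \eqref{ChebyConsensus}, so that $P_n(x)=T_n(x/\lambda_M)/T_n(1/\lambda_M)$ and, for the dominant eigenvalue, $c-d=1/\lambda_M>1$ and $c\lambda-d=\lambda/\lambda_M$. Feeding these into \eqref{TnFunctionTau} and dividing by $\lambda^n$ gives
\[
\frac{|P_n(\lambda)|}{\lambda^n}
=\left(\frac{\tau(1/\lambda_M)}{\lambda\,|\tau(\lambda/\lambda_M)|}\right)^{\!n}
\frac{1+\tau(\lambda/\lambda_M)^{2n}}{1+\tau(1/\lambda_M)^{2n}} .
\]
Because $|\tau|\le 1$ everywhere, the trailing fraction is uniformly bounded (and tends to $1$ when $\lambda>\lambda_M$), so the ratio converges to zero exactly when the base $\tau(1/\lambda_M)/(\lambda\,|\tau(\lambda/\lambda_M)|)$ is strictly less than one. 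The whole theorem therefore reduces to the single inequality $\tau(1/\lambda_M)<\lambda\,|\tau(\lambda/\lambda_M)|$.

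To evaluate $|\tau(\lambda/\lambda_M)|$ I would split on the sign of $\lambda-\lambda_M$, since this decides which branch of $\tau$ is active. When $\lambda\le\lambda_M$ the argument $\lambda/\lambda_M$ lies in $[-1,1]$, so $|\tau(\lambda/\lambda_M)|=1$ and the target inequality collapses to $\tau(1/\lambda_M)<\lambda$. Writing $\tau(1/\lambda_M)=(1-\sqrt{1-\lambda_M^2})/\lambda_M$, isolating the radical (note $1-\lambda\lambda_M>0$) and squaring once, this is equivalent to $\lambda_M(\lambda^2+1)<2\lambda$, i.e. to the hypothesis $\lambda_M<2\lambda/(\lambda^2+1)$ in \eqref{conditions_onM}. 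So in this regime the stated bound is precisely the necessary and sufficient condition.

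When $\lambda>\lambda_M$ the argument exceeds one, $\tau(\lambda/\lambda_M)=(\lambda-\sqrt{\lambda^2-\lambda_M^2})/\lambda_M$ is real and positive, and the inequality becomes $\tau(1/\lambda_M)/\tau(\lambda/\lambda_M)<\lambda$. Clearing the positive denominator, moving the two radicals to opposite sides and squaring twice reduces it, after the $(1-\lambda^2)$ factor cancels, to $0<\lambda_M^4$, hence it holds for every admissible $\lambda_M$; moreover this regime is automatically compatible with \eqref{conditions_onM} because $\lambda<2\lambda/(\lambda^2+1)$ for all $\lambda<1$. Combining the two regimes yields $|P_n(\lambda)|/\lambda^n\to0$ throughout $0<\lambda_M<2\lambda/(\lambda^2+1)$; and since $\lambda=\max(|\lambda_2|,|\lambda_N|)$ is the slowest mode --- the same estimate applied to any $|\lambda_i|\le\lambda$ shows every nontrivial component of $P_n(\textbf{A})$ decays at a rate below $\lambda$ --- the dominant error term of \eqref{ConsensusAlg_Cheby} vanishes faster than that of \eqref{Consensus_Laplacian}, as claimed. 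I expect the genuine obstacle to be the case $\lambda>\lambda_M$: tracking the active branch of $\tau$ and discharging the nested radicals by two sign-controlled squarings is the only delicate bookkeeping, whereas the case $\lambda\le\lambda_M$ essentially reproduces the stated bound by construction.
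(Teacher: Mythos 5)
Your argument is correct. The paper itself gives no proof of this theorem --- it simply defers to the earlier conference version \cite{EM-JIM-CS:11} --- so there is nothing in the text to compare against line by line; what you have written is a self-contained proof built from the paper's own machinery, namely the asymptotic factorization \eqref{TnFunctionTau} specialized to $c=1/\lambda_M$, $d=0$. The key reduction to the single inequality $\tau(1/\lambda_M)<\lambda\,|\tau(\lambda/\lambda_M)|$ is sound (the trailing fraction is indeed uniformly bounded, with the caveat that for $\lambda\le\lambda_M$ the numerator should carry a modulus, $|1+\tau(\lambda/\lambda_M)^{2n}|\le 2$, since $\tau$ is then complex of unit modulus). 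I checked both branches of your case split: for $\lambda\le\lambda_M$ the chain $1-\sqrt{1-\lambda_M^2}<\lambda\lambda_M \Leftrightarrow 1-\lambda\lambda_M<\sqrt{1-\lambda_M^2} \Leftrightarrow \lambda_M(\lambda^2+1)<2\lambda$ is valid because $1-\lambda\lambda_M>0$, and this is exactly \eqref{conditions_onM}; for $\lambda>\lambda_M$ the two squarings do reduce, after cancelling $(1-\lambda^2)$ and noting $2\lambda^2-\lambda_M^2>0$, to $0<\lambda_M^4$, so that regime is unconditional and automatically lies inside \eqref{conditions_onM} since $\lambda<2\lambda/(\lambda^2+1)$. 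Your closing observation that the same estimate controls every eigenvalue $|\lambda_i|\le\lambda$ is also consistent with the section's standing assumption that $\textbf{A}$ has only real eigenvalues ($|\tau(x)|$ is nonincreasing in $|x|$), so the conclusion about the overall convergence rate follows. No gaps.
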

\vskip 0.2cm
%%%%%%%%%%%%%%%%%%%%%%%%%%%%%%%%%%%%%%%%%%%%%%%%%%%%%%%%%%%%%%%%%%%%%%%%%%%%%%
%PROOF
%%%%%%%%%%%%%%%%%%%%%%%%%%%%%%%%%%%%%%%%%%%%%%%%%%%%%%%%%%%%%%%%%%%%%%%%%%%%%%
\begin{proof}
See~\cite{EM-JIM-CS:11}.
\end{proof}
%%%%%%%%%%%%%%%%%%%%%%%%%%%%%%%%%%%%%%%%%%%%%%%%%%%%%%%%%%%%%%%%%%%%%%%%%%%%%%
%END PROOF
%%%%%%%%%%%%%%%%%%%%%%%%%%%%%%%%%%%%%%%%%%%%%%%%%%%%%%%%%%%%%%%%%%%%%%%%%%%%%%
\begin{remark}
The above result shows that there always exist parameters that make
the proposed algorithm faster than \eqref{Consensus_Laplacian}.
Therefore, if the algorithm is executed using the optimal parameters,
it will also converge to the average faster than \eqref{Consensus_Laplacian}.
\end{remark}

%\begin{comment}
\begin{figure}[!ht]
  \centering
 \includegraphics[width=0.49\linewidth]{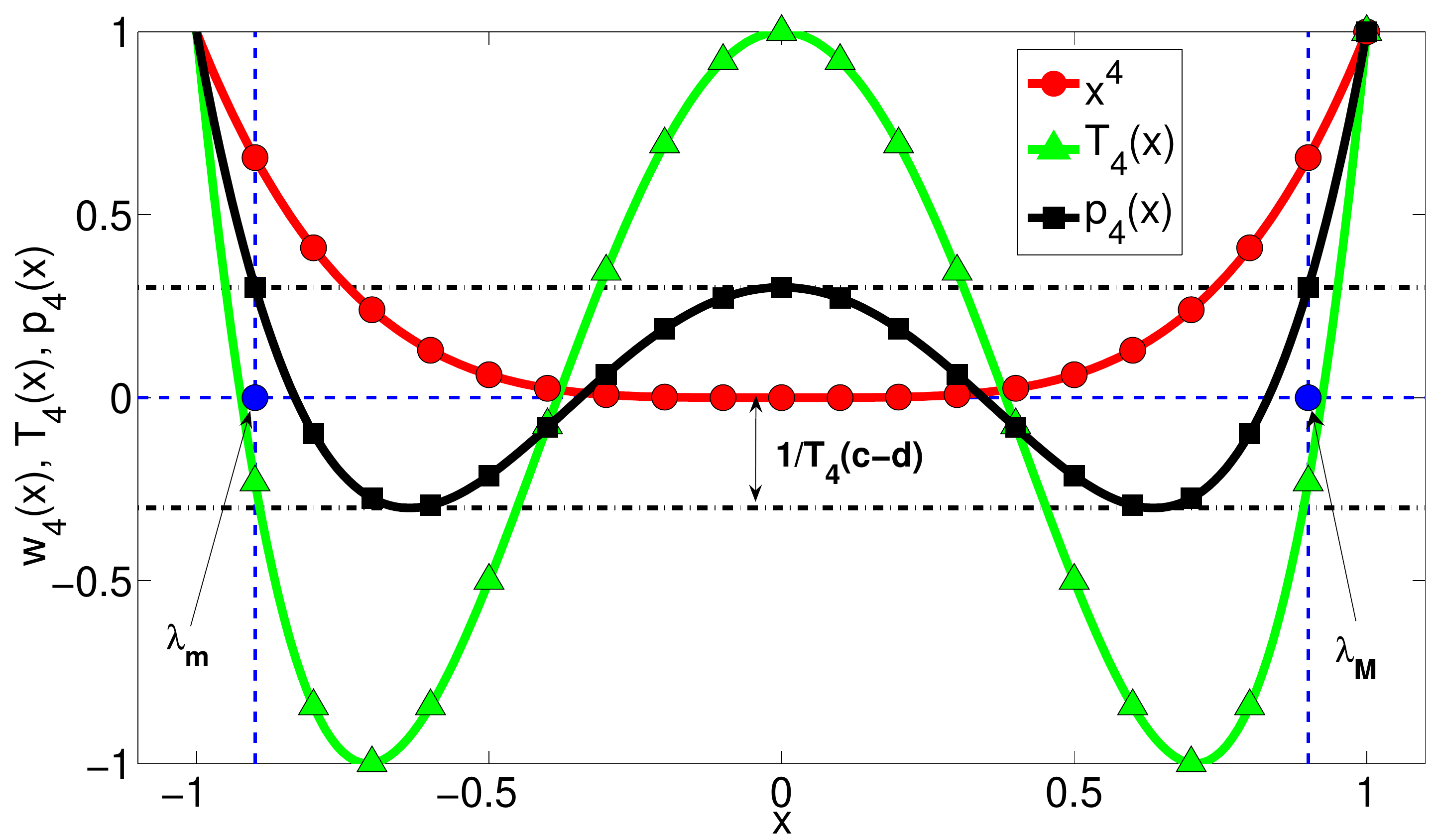}
  \caption{Plot of the polynomials $x^n,\ T_n(x)$ and $P_n(x)$.
  In the figure $n=4, \lambda_m = -0.95$ and $\lambda_M = 0.95.$
  }
  \label{ChebPoltransformed}
\end{figure}
Finally, a graphical comparison of $x^n,\ T_n(x)$
and $P_n(x)$ is depicted in Fig. \ref{ChebPoltransformed}
for $n=4,$ in the interval $[-1,1].$
%We observe
Note that $T_n(x)$ cannot be used in the consensus process
because at some points it would not reduce the error.
On the other hand, as we have shown along the section,
$P_n(x)$ satisfies the conditions required to achieve consensus.
Also notice that $P_n(x)$ has closer
values to zero than $x^n$ in points close to $-1$ and $1,$
which supports the theory that the error associated to eigenvalues
in that regions will be reduced faster.
%\end{comment}

%\section{\bf Analysis with Fixed Communication Topology}
%\label{fixedTopology}
%\input{fixedTopology}

\section{\bf Analysis with a Switching Communication Topology}
\label{switching}

%cannot be applied because there is no a unique matrix $\textbf{A}$.
%However, the recurrent evaluation of Chebyshev polynomials \eqref{ConsensusAlg_Cheby}
%is suitable for switching weight matrices.
%In this section we analyze the behavior of \eqref{ConsensusAlg_Cheby}
%when the communication topology changes.
We are interested now in the study of the recursive evaluation of \eqref{ConsensusAlg_Cheby}
when the topology of the network, and therefore the matrix $\textbf{A}$, changes
at different iterations.
Given initial conditions $\textbf{x}(0),$
the distributed recurrence now looks:
\begin{equation}
\label{ChebyshevRecurrenceSwitch}
\begin{split}
\textbf{x}(1) &= \dfrac{1}{T_1(c-d)}(c \textbf{A}(1)  - d \textbf{I})\textbf{x}(0),\\[10pt]
\textbf{x}(n) &=2\dfrac{T_{n-1}(c-d)}{T_{n}(c-d)}(c \textbf{A}(n) -d \textbf{I})\textbf{x}(n-1)
-\dfrac{T_{n-2}(c-d)}{T_{n}(c-d)} \textbf{x}(n-2),\ n\geq2.
\end{split}
\end{equation}
Note that this recurrence is suitable for switching weight matrices.
However, the evaluation of the recurrence is no
longer equivalent to $P_n(\textbf{A})\textbf{x}(0),$
for some matrix $\textbf{A}$.
This means that we are not exactly evaluating 
the transformed Chebyshev polynomials in the eigenvalues of some matrix anymore.
Nevertheless, a theoretical analysis is still possible.

For this analysis, the matrices $\textbf{A}(n)$ now require the following assumption.
\begin{assumption}[Non-Degenerate Stochastic Weights]
\label{TimeVaryingStochasticMatrices}
The matrices $\textbf{A}(n)$
are row stochastic, symmetric, non-degenerate and
compatible with the underlying graphs, $\mathcal{G}(n),$ for all $n$,
i.e., they are such that
$\textbf{A}(n) \textbf{1} = \textbf{1},$
$a_{ii}(n) > \epsilon$
and $a_{ij}(n)\in\{0\}\cup[\epsilon,1)$
with $0<\epsilon<1$ some fixed constant.
\end{assumption}

\begin{comment}
The iteration in \eqref{ChebyshevRecurrenceSwitch} is equivalent to
\begin{equation}
\label{ChebyshevRecurrenceSwitch3}
\textbf{x}(n) = \frac{\textbf{u}(n)}{{T_{n}(c-d)}},
\end{equation}
with
\begin{equation}
\label{ChebyshevRecurrenceSwitch2}
\begin{split}
\textbf{u}(0) &=\textbf{x}(0),\
\textbf{u}(1) =(c \textbf{A}(1)  - d \textbf{I})\textbf{x}(0),\\
\textbf{u}(n) &= 2(c \textbf{A}(n) -d \textbf{I})\textbf{u}(n-1)
-\textbf{u}(n-2),\ n\geq2,
\end{split}
\end{equation}
which will be more convenient for our analysis.
Let us note that by definition in \eqref{ChebyshevRecurrenceSwitch3},
$T_n(c-d)$ does not change, even if the network does.
\end{comment}

Recalling the analysis done in the previous section,
the evaluation of $P_n(\textbf{A})$ was separated into the evaluation
of its eigenvalues and eigenvectors, 
$P_n(\lambda_i)\textbf{v}_i=T_n(c\lambda_i-d)/T_n(c-d)\textbf{v}_i$.
In the switching case we must take into account that both $\lambda_i$ and $\textbf{v}_i$
change at each iteration. Moreover, since the eigenvectors
of different matrices are related we must also consider these relations.
For the moment, as a first simplification of the problem,
let us forget about the changes in $\textbf{v}_i$ and the parameters $c$ and $d$
and let us study the scalar evaluation of the Chebyshev recurrence \eqref{ChebyshevSimple}
with different $\lambda_i$ at each iteration.
%In order to study the evolution of \eqref{ChebyshevRecurrenceSwitch}, %\eqref{ChebyshevRecurrenceSwitch2},
%we analyze the scalar evaluation of the Chebyshev recurrence \eqref{ChebyshevSimple}
%with different $\lambda$ at each iteration as a first simplification of the problem.
That is,
\begin{equation}
T_0(\Lambda) = 1,\
T_1(\Lambda) = \lambda(1),\
T_{n}(\Lambda) = 2 \lambda(n) T_{n-1}(\Lambda) - T_{n-2}(\Lambda),
\end{equation}
where $\Lambda = \{\lambda(n)\},\ n\in\mathbb{N}$ is a succession of real numbers.
Specifically, we are interested in the behavior of $|T_{n}(\Lambda)|.$
\begin{proposition}
\label{PropositionBoundUn}
Suppose there exists values $\lambda_{\min}$ and $\lambda_{\max}$ such that
$\lambda(n) \in [\lambda_{\min},\lambda_{\max}],\ \forall n\in\mathbb{N},$
$\lambda_{\min} < 0 < \lambda_{\max}$
and $|\lambda_{\min}|\leq \lambda_{\max}$. Then
\begin{equation}
\label{BoundRecurrenceUn}
|T_n(\Lambda)| \leq |T_n(\Lambda^*)|
\end{equation}
where $\Lambda^* = \{\lambda^*(n)\}$ is a succession defined by
\begin{equation}
\label{BestSuccession}
\lambda^*(n) =
\begin{cases}
\lambda_{\max} &\hbox{ if } n \hbox{ odd,}\\
\lambda_{\min} &\hbox{ if } n \hbox{ even,}\\
\end{cases}
\end{equation}
\end{proposition}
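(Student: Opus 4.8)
The plan is to argue by induction on $n$, after two preliminary reductions. First, observe that $T_n(\Lambda)$ is a \emph{multilinear} function of the entries $\lambda(1),\dots,\lambda(n)$: the recurrence $T_n=2\lambda(n)T_{n-1}-T_{n-2}$ is affine in $\lambda(n)$, and $T_{n-1},T_{n-2}$ do not involve $\lambda(n)$, so induction gives degree at most one in each variable. Consequently $|T_n(\Lambda)|$, being the absolute value of a function that is affine in each coordinate separately, attains its maximum over the box $[\lambda_{\min},\lambda_{\max}]^n$ at a vertex, i.e.\ at a succession with every $\lambda(i)\in\{\lambda_{\min},\lambda_{\max}\}$. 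Thus it suffices to prove $|T_n(\Lambda)|\le|T_n(\Lambda^*)|$ for such endpoint successions.

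Second, I would record the structure of the extremal succession $\Lambda^*$. A short induction shows that $T_n(\Lambda^*)$ follows the sign pattern $+,+,-,-,+,+,\dots$ (period four) and that at every step the two contributions $2\lambda^*(n)T_{n-1}(\Lambda^*)$ and $-T_{n-2}(\Lambda^*)$ carry the \emph{same} sign. Hence they reinforce, and
\[
|T_n(\Lambda^*)| = 2\,|\lambda^*(n)|\,|T_{n-1}(\Lambda^*)| + |T_{n-2}(\Lambda^*)|,
\]
with $|\lambda^*(n)|=\lambda_{\max}$ for odd $n$ and $|\lambda^*(n)|=|\lambda_{\min}|$ for even $n$; in particular every $|T_n(\Lambda^*)|$ is positive and the sequence is nondecreasing. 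This identity is the quantity I want to dominate.

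The core is then a strong induction establishing $|T_n(\Lambda)|\le|T_n(\Lambda^*)|$. For an \emph{odd} step the triangle inequality already suffices: $|T_n(\Lambda)|\le 2|\lambda(n)|\,|T_{n-1}(\Lambda)|+|T_{n-2}(\Lambda)|\le 2\lambda_{\max}|T_{n-1}(\Lambda^*)|+|T_{n-2}(\Lambda^*)|=|T_n(\Lambda^*)|$, using $|\lambda(n)|\le\lambda_{\max}$ and the induction hypothesis. The delicate case is an \emph{even} step. If the succession uses $\lambda(n)=\lambda_{\min}$ there, the same triangle estimate with $|\lambda(n)|=|\lambda_{\min}|$ closes the bound. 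The genuine obstacle is an even step that uses $\lambda(n)=\lambda_{\max}$: now the naive estimate only gives $2\lambda_{\max}|T_{n-1}|+|T_{n-2}|$, which exceeds $|T_n(\Lambda^*)|=2|\lambda_{\min}|\,|T_{n-1}(\Lambda^*)|+|T_{n-2}(\Lambda^*)|$ precisely because $\lambda_{\max}>|\lambda_{\min}|$.

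I expect this even-step case to be the main difficulty, and the hypothesis $|\lambda_{\min}|\le\lambda_{\max}$ to be exactly what rescues it. The point is that a large multiplier $\lambda_{\max}$ at an even step can only enlarge $|T_n|$ if $T_{n-1}$ and $T_{n-2}$ are arranged with the sign that makes $2\lambda_{\max}T_{n-1}$ reinforce $-T_{n-2}$; but producing that sign configuration forces the neighbouring odd step to be used with the \emph{smaller} available magnitude $|\lambda_{\min}|$, so whatever is gained at the even step is paid back at the odd one. To turn this trade-off into a proof I would strengthen the induction from a single bound on $|T_n|$ to a coupled invariant on the pair $(T_{n-1}(\Lambda),T_n(\Lambda))$ --- equivalently, I would carry out the induction over a full (odd, even) period at once, comparing the two-step transfer of $(T_{n-2},T_{n-1})$ with that of the extremal succession. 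Establishing the correct coupled invariant (the correlation between consecutive iterates that prevents the even-step blow-up) is the crux; an alternative I would keep in reserve is a direct exchange argument on endpoint successions, showing that any deviation from the alternating pattern can be straightened, two adjacent entries at a time, without decreasing $|T_n|$.
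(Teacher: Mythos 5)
Your two reductions are sound and, in fact, sharper than what the paper records: the multilinearity of $T_n(\Lambda)$ in $(\lambda(1),\dots,\lambda(n))$ does reduce the claim to vertex successions, the period-four sign pattern $+,+,-,-,\dots$ of $T_n(\Lambda^*)$ with reinforcing terms is correct, and the odd steps and the even steps taken with $\lambda(n)=\lambda_{\min}$ do close by the triangle inequality. But your argument stops exactly at its crux, the even step taken with $\lambda(n)=\lambda_{\max}$: you describe two strategies (a coupled two-step invariant; an exchange argument) without establishing either, so what you have is a plan rather than a proof. Worse, the trade-off you are counting on (``whatever is gained at the even step is paid back at the odd one'') does not exist under the stated hypotheses. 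Take the constant succession $\lambda(n)\equiv\lambda_{\max}$ with $\lambda_{\max}=10$ and $\lambda_{\min}=-0.1$: then $|T_2(\Lambda)|=|2\cdot 10\cdot 10-1|=199$ while $|T_2(\Lambda^*)|=|2\cdot(-0.1)\cdot 10-1|=3$. In general the $n=2$ vertex comparison already forces $\lambda_{\max}^2-|\lambda_{\min}|\lambda_{\max}-1\le 0$, i.e.\ $\lambda_{\max}\le(|\lambda_{\min}|+\sqrt{\lambda_{\min}^2+4})/2$, which is not implied by $|\lambda_{\min}|\le\lambda_{\max}$. So no strengthening of the induction hypothesis can rescue this case; the difficulty you isolated is real and is only surmountable with an additional restriction on $\lambda_{\max}$.

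For comparison, the paper's own proof is a step-local greedy argument: for fixed $T_{n-1},T_{n-2}$ it compares the sign choice of $\lambda(n)$ that makes $2\lambda(n)T_{n-1}$ and $-T_{n-2}$ reinforce (giving $|2\lambda(n)T_{n-1}|+|T_{n-2}|$) against the cancelling choice, and reads off $\arg\max_{\lambda(n)}|T_n|$ from the sign pattern. That comparison is made at equal $|\lambda(n)|$ and never addresses the cross-magnitude case you identified, so your analysis actually pinpoints the weak step of the published argument rather than missing one of its ideas. Note also that what is used downstream is only the estimate of Lemma \ref{LemmaSwitch}, namely $|T_n(\Lambda)|\le T_n^*(\lambda_{\max})\le\kappa_1(\lambda_{\max})^n$ with $\lambda_{\max}$ now the largest \emph{modulus} attained by the succession; that bound follows from the plain triangle inequality by induction on the recurrence $T_n^*=2\lambda T_{n-1}^*+T_{n-2}^*$ and does not need the proposition at all. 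If your goal is the switching-topology convergence theorem, target that estimate directly.
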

\begin{proof}
For abbreviation,
in the proof we will denote the sign of $T_n(\Lambda)$ by $s(T_n).$
%We show the result by induction.
%Now let us suppose that \eqref{BoundRecurrenceUn} is also true for $n-1$ and $n-2.$
%Let us first consider the case with $n$ even.

Let us note that, if $s(T_{n-1})=s(T_{n-2})$, by choosing $\lambda(n) < 0,$ then
\begin{equation}
\label{equalitySigns}
|T_n(\Lambda)| = |2 \lambda(n) T_{n-1}(\Lambda) - T_{n-2}(\Lambda)| = |2 \lambda(n) T_{n-1}(\Lambda)| + |T_{n-2}(\Lambda)|,
\end{equation}
independently of $n.$
The choice of $\lambda(n) > 0$ when $s(T_{n-1})=s(T_{n-2})$
implies that
\begin{equation}
\label{equalitySigns2}
|T_n(\Lambda)| = |2 \lambda(n) T_{n-1}(\Lambda) - T_{n-2}(\Lambda)| < |2 \lambda(n) T_{n-1}(\Lambda)| + |T_{n-2}(\Lambda)|.
\end{equation}
Taking these two facts into account we can see that
\begin{equation}
\label{minimumDifSign}
s(T_{n-1})=s(T_{n-2}) \Rightarrow \arg \max_{\lambda(n)} |T_n(\Lambda)| = \lambda_{\min}.
\end{equation}
Besides, in this situation, choosing $\lambda(n)<0$ yields $s(T_n)\neq s(T_{n-1})$.

Now, if $s(T_{n-1})\neq s(T_{n-2})$ and $\lambda(n) > 0,$ then
eq. \eqref{equalitySigns} is again true.
On the other hand, choosing $\lambda(n) < 0$ in this situation implies \eqref{equalitySigns2}.
Thus,
\begin{equation}
\label{maximumDifSign}
s(T_{n-1})\neq s(T_{n-2}) \Rightarrow \arg \max_{\lambda(n)} |T_n(\Lambda)| = \lambda_{\max}.
\end{equation}
Also, if $s(T_{n-1})\neq s(T_{n-2})$ and $\lambda(n)>0,$ then $s(T_{n}) = s(T_{n-1}).$

Finally, noting that
inequality \eqref{BoundRecurrenceUn} holds for $n=0$ and $1,$
and $s(T_{0}(\Lambda^*))=s(T_{1}(\Lambda^*)),$ then using \eqref{minimumDifSign} and
\eqref{maximumDifSign} the succession \eqref{BestSuccession} is obtained
and the result is proved.
\end{proof}
\begin{corollary}
If $|\lambda_{\min}| > \lambda_{\max}$ then the bound in eq. \eqref{BoundRecurrenceUn}
is true taking $\Lambda^* = \{\lambda^*(n)\}$ with
\begin{equation}
\lambda^*(n) =
\begin{cases}
\lambda_{\max} &\hbox{ if } n \hbox{ even,}\\
\lambda_{\min} &\hbox{ if } n \hbox{ odd,}\\
\end{cases}
\end{equation}
\end{corollary}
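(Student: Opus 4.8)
The plan is to deduce the corollary from Proposition \ref{PropositionBoundUn} by exploiting the reflection symmetry $\lambda \mapsto -\lambda$, rather than rerunning the sign-tracking induction. First I would establish the elementary fact that negating the whole succession flips the sign of every odd-indexed term while preserving magnitudes. Concretely, given $\Lambda = \{\lambda(n)\}$, set $\tilde{\Lambda} = \{-\lambda(n)\}$; I claim that $T_n(\tilde{\Lambda}) = (-1)^n T_n(\Lambda)$ for all $n$. This holds for $n=0,1$ since $T_0 = 1$ and $T_1(\tilde{\Lambda}) = -\lambda(1) = -T_1(\Lambda)$, and it propagates through the recurrence: assuming it for $n-1$ and $n-2$,
\begin{equation}
T_n(\tilde{\Lambda}) = 2(-\lambda(n))(-1)^{n-1}T_{n-1}(\Lambda) - (-1)^{n-2}T_{n-2}(\Lambda) = (-1)^n\left(2\lambda(n)T_{n-1}(\Lambda) - T_{n-2}(\Lambda)\right) = (-1)^n T_n(\Lambda),
\end{equation}
using $(-1)^{n-2}=(-1)^n$. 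In particular $|T_n(\tilde{\Lambda})| = |T_n(\Lambda)|$ for every succession and every $n$.

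Next I would observe that negation carries the feasible interval $[\lambda_{\min}, \lambda_{\max}]$ onto $[-\lambda_{\max}, -\lambda_{\min}]$, and that the hypothesis $|\lambda_{\min}| > \lambda_{\max}$ is exactly what makes the reflected interval satisfy the assumptions of Proposition \ref{PropositionBoundUn} applied with $\lambda_{\min}$ replaced by $-\lambda_{\max}$ and $\lambda_{\max}$ replaced by $-\lambda_{\min}$: the negative endpoint is $-\lambda_{\max} < 0$, the positive endpoint is $-\lambda_{\min} > 0$, and $|-\lambda_{\max}| = \lambda_{\max} \le -\lambda_{\min}$. Invoking the proposition for the reflected problem then gives, for every succession $\tilde{\Lambda}$ valued in $[-\lambda_{\max}, -\lambda_{\min}]$, the bound $|T_n(\tilde{\Lambda})| \le |T_n(\tilde{\Lambda}^*)|$, where the reflected extremal succession $\tilde{\Lambda}^*$ takes the reflected maximum $-\lambda_{\min}$ at odd $n$ and the reflected minimum $-\lambda_{\max}$ at even $n$.

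Finally I would translate this back. Since $\tilde{\Lambda} = -\Lambda$ ranges bijectively over all successions in $[-\lambda_{\max},-\lambda_{\min}]$ as $\Lambda$ ranges over all successions in $[\lambda_{\min},\lambda_{\max}]$, and since magnitudes are preserved by the first step, the inequality $|T_n(\tilde{\Lambda})| \le |T_n(\tilde{\Lambda}^*)|$ becomes $|T_n(\Lambda)| \le |T_n(\Lambda^*)|$ with $\Lambda^* = -\tilde{\Lambda}^*$, which is precisely the succession taking $\lambda_{\min}$ at odd $n$ and $\lambda_{\max}$ at even $n$ asserted in the statement. The only point needing care is the bookkeeping of the reflection, namely checking that under $\lambda \mapsto -\lambda$ the maximizing and minimizing endpoints genuinely exchange roles, so that the odd/even assignment of Proposition \ref{PropositionBoundUn} emerges swapped exactly as claimed; everything else reduces to the magnitude-preserving identity. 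If a self-contained argument is preferred, the same conclusion follows by rerunning the induction of Proposition \ref{PropositionBoundUn} with the sole change in the base case: because $|\lambda_{\min}| > \lambda_{\max}$, the quantity $|T_1|$ is now maximized by $\lambda^*(1) = \lambda_{\min}$, so the sign of $T_1(\Lambda^*)$ now disagrees with that of $T_0(\Lambda^*)$, which shifts the alternation dictated by \eqref{minimumDifSign}--\eqref{maximumDifSign} by one step and yields the swapped succession.
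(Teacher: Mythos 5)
Your proof is correct. The paper gives no proof of this corollary at all --- it is stated immediately after Proposition \ref{PropositionBoundUn} as an evident adaptation, which is essentially the self-contained variant you sketch at the end: with $|\lambda_{\min}|>\lambda_{\max}$ the base case gives $T_1(\Lambda^*)=\lambda_{\min}<0$, so $s(T_1(\Lambda^*))\neq s(T_0(\Lambda^*))$ and the alternation dictated by \eqref{minimumDifSign}--\eqref{maximumDifSign} starts one step out of phase, producing the swapped parity. Your primary route is genuinely different and arguably preferable: the identity $T_n(\tilde{\Lambda})=(-1)^n T_n(\Lambda)$ for $\tilde{\Lambda}=\{-\lambda(n)\}$ turns the corollary into a literal corollary of the proposition, applied on the reflected interval $[-\lambda_{\max},-\lambda_{\min}]$, whose endpoints satisfy the proposition's hypotheses precisely because $|\lambda_{\min}|>\lambda_{\max}$; the odd/even swap in the extremal succession then falls out mechanically from undoing the reflection, with no need to re-track signs. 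What this buys is robustness (no duplicated induction, hence no chance of a bookkeeping slip in the sign analysis) and it isolates the one fact that matters, namely that negating the succession preserves $|T_n|$. The only caveat is that your reduction uses Proposition \ref{PropositionBoundUn} as a black box and therefore inherits whatever strength its greedy maximization argument has, but that is a feature of the paper's own development rather than of your argument.
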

\vskip 0.5pt
%\begin{remark}
%Let us note that
%$|U_n(x(n))| \leq 2 |x(n)| U_{n-1}(x(n-1)) + U_{n-2}(x(n-2)).$
%\end{remark}

The previous proposition reveals that the Chebyshev recurrence
evaluated in a succession of different real numbers
does not keep the behavior shown when it is evaluated with a constant value.
The next Lemma provides a bound for the direct expression of this behavior.
\begin{lemma}
\label{LemmaSwitch}
Let us suppose that the conditions of Proposition \ref{PropositionBoundUn}
are true. Then
\begin{equation}
\label{directRecuSwitch}
|T_n(\Lambda^*)| \leq \kappa_1(\lambda_{\max})^n, \hbox{ where } \kappa_1(\lambda_{\max}) = \lambda_{\max} + \sqrt{\lambda_{\max}^2+1}
\end{equation}
\end{lemma}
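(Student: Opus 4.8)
The plan is to bound $|T_n(\Lambda^*)|$ by a constant-coefficient linear recurrence and then recognize $\kappa_1(\lambda_{\max})$ as the dominant root of its characteristic polynomial. The key observation is that the detailed sign analysis of Proposition \ref{PropositionBoundUn} is not needed here; a plain triangle inequality suffices. Indeed, every entry of the succession $\Lambda^*$ defined in \eqref{BestSuccession} satisfies $|\lambda^*(n)| \le \lambda_{\max}$ (this uses the standing hypothesis $|\lambda_{\min}| \le \lambda_{\max}$), so applying the triangle inequality to the defining recurrence gives, for all $n \ge 2$,
\begin{equation}
|T_n(\Lambda^*)| \le 2\lambda_{\max}|T_{n-1}(\Lambda^*)| + |T_{n-2}(\Lambda^*)|.
\end{equation}

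First I would introduce the scalar $\kappa_1 = \kappa_1(\lambda_{\max}) = \lambda_{\max} + \sqrt{\lambda_{\max}^2+1}$ and note that it is the positive root of the characteristic equation $t^2 - 2\lambda_{\max} t - 1 = 0$, so that it satisfies the identity $\kappa_1^2 = 2\lambda_{\max}\kappa_1 + 1$. This quadratic identity is exactly what is needed to close an induction on $n$.

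Then I would prove $|T_n(\Lambda^*)| \le \kappa_1^n$ by (two-step) induction. The base cases are immediate, since $|T_0(\Lambda^*)| = 1 = \kappa_1^0$ and $|T_1(\Lambda^*)| = \lambda_{\max} \le \kappa_1$. For the inductive step, assuming the bound for indices $n-1$ and $n-2$, the displayed inequality yields
\begin{equation}
|T_n(\Lambda^*)| \le 2\lambda_{\max}\kappa_1^{n-1} + \kappa_1^{n-2} = \kappa_1^{n-2}\left(2\lambda_{\max}\kappa_1 + 1\right) = \kappa_1^{n-2}\kappa_1^2 = \kappa_1^n,
\end{equation}
where the penultimate equality uses the characteristic identity. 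This completes the induction and establishes \eqref{directRecuSwitch}.

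There is essentially no hard obstacle: the proof reduces to choosing the right comparison recurrence and exploiting the quadratic identity satisfied by $\kappa_1$. The only point requiring care is the justification that $|\lambda^*(n)| \le \lambda_{\max}$ for every $n$, which relies on $|\lambda_{\min}| \le \lambda_{\max}$; without that hypothesis the even-indexed terms $\lambda^*(n) = \lambda_{\min}$ could exceed $\lambda_{\max}$ in magnitude, and the bound would have to be stated with $\max\{\lambda_{\max}, |\lambda_{\min}|\}$ instead, which is precisely the regime treated by the corollary following Proposition \ref{PropositionBoundUn}. Optionally, one can sharpen the analysis by solving the comparison recurrence explicitly as $b_n = \tfrac12(\kappa_1^n + \kappa_2^n)$ with $\kappa_2 = \lambda_{\max} - \sqrt{\lambda_{\max}^2+1}$; since $|\kappa_2| < 1 < \kappa_1$ this shows $|T_n(\Lambda^*)|$ is in fact asymptotic to $\tfrac12\kappa_1^n$, but the stated bound $\kappa_1^n$ already follows from the cleaner induction above.
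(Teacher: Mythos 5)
Your proof is correct and follows essentially the same route as the paper: both arguments replace the Chebyshev-type recurrence by the comparison recurrence with a plus sign (bounding $|\lambda^*(n)|$ by $\lambda_{\max}$ via the triangle inequality) and identify $\kappa_1(\lambda_{\max})$ as the dominant root of the characteristic equation $t^2 - 2\lambda_{\max}t - 1 = 0$. The only difference is presentational — the paper solves the comparison recurrence $T_n^*(\lambda_{\max})$ in closed form and then bounds it by $\kappa_1^n$, whereas you verify the bound directly by a two-step induction using $\kappa_1^2 = 2\lambda_{\max}\kappa_1 + 1$.
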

\begin{proof}
Let us define the recurrence
\begin{equation}
\label{recuUn}
T^*_{0}(\lambda)=1,\ T^*_{1}(\lambda)=\lambda,\ T_n^*(\lambda)=2\lambda T^*_{n-1}(\lambda)+T^*_{n-2}(\lambda),
\end{equation}
which satisfies that
\begin{equation}
|T_n(\Lambda^*)| \leq T_n^*(\lambda_{\max}). %= 2x_{\max}T^*_{n-1}(x_{\max})+T^*_{n-2}(x_{\max}),
\end{equation}
%with $T^*_{1}(x_{\max})=x_{\max}$ and $T^*_{0}(x_{\max})=1.$

According to recurrence \eqref{recuUn}, the succession
$\{T_n^*(\lambda_{\max}),\ n=0, 1, \ldots\}$ satisfies the homogeneous difference equation
$T_n^*(\lambda_{\max}) - 2\lambda_{\max}T^*_{n-1}(\lambda_{\max})-T^*_{n-2}(\lambda_{\max}) = 0$.
%whose characteristic polynomial is $t^2 - 2x_{\max}t -1$.
By the theory of difference equations~\cite{RPA:92},
the solution to this equation is determined by the roots
$\kappa_1$ and $\kappa_2$ of the characteristic polynomial.
In this case
\begin{equation}
\kappa_1(\lambda_{\max}) = \lambda_{\max}+\sqrt{\lambda^2_{\max}+1} > 1,
\hbox{  and }
\kappa_2(\lambda_{\max}) = \lambda_{\max}-\sqrt{\lambda^2_{\max}+1}=-1/\kappa_1(\lambda_{\max}).
\end{equation}
Since $\kappa_1(\lambda_{\max})\ne \kappa_2(\lambda_{\max}),$ the direct expression of $T_n^*(\lambda_{\max})$ is
\begin{equation}
T_n^*(\lambda_{\max})= A \kappa_1(\lambda_{\max})^n + B \kappa_2(\lambda_{\max})^n
\end{equation}
where $A$ and $B$ depend on the initial conditions $T_0^*(\lambda_{\max})$ and $T_1^*(\lambda_{\max})$.
In our case $A=B=1/2$ and
\begin{equation}
|T_n(\Lambda^*)| \leq T_n^*(\lambda_{\max})= \dfrac{1}{2}(\kappa_1(\lambda_{\max})^n +
(-1/\kappa_1(\lambda_{\max}))^{n}) \leq \kappa_1(\lambda_{\max})^n.
\end{equation}
\end{proof}

This direct expression \eqref{directRecuSwitch} 
will be helpful in the development of the convergence
analysis dealing with changing matrices and the parameters $c$ and $d$.
We provide now the main result, 
showing the convergence of the algorithm for the switching case.
\begin{theorem}
\label{ConvergeSwitchingTopTh}
Allow the communication graph, $\mathcal{G}(n),$
to arbitrarily change in such a way that it is connected
for all $n$,
with the weight matrices, $\textbf{A}(n),$ designed according to
Assumption \ref{TimeVaryingStochasticMatrices}.
Let us denote $\lambda_i(n),\ i=1,\ldots,N,$ the eigenvalues of $\textbf{A}(n)$ and
\begin{equation}
\lambda_{\max} = \max_{n} \max_{i=2,\ldots,N} \lambda_i(n), \hbox{ and }
\lambda_{\min} = \min_{n} \min_{i=2,\ldots,N} \lambda_i(n).
\end{equation}
Given fixed parameters $c$ and $d$,
a sufficient condition to guarantee convergence to
consensus of iteration \eqref{ChebyshevRecurrenceSwitch} is
\begin{equation}
\label{convergenceConditionSwitching}
\kappa_1(\max \{ |c\lambda_{\max}-d|, |c\lambda_{\min} - d| \}) \tau(c-d) < 1.
\end{equation}
%with $x_{\max} = \max \{ |c\lambda_{\max}-d|, |c\lambda_{\min} - d| \}.$
\end{theorem}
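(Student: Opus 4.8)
The plan is to reduce the matrix recurrence \eqref{ChebyshevRecurrenceSwitch} to the scalar envelope recurrence already analysed in Lemma \ref{LemmaSwitch}, after isolating the (exactly preserved) consensus direction. First I would clear the normalising factors by setting $\textbf{u}(n)=T_n(c-d)\textbf{x}(n)$, which turns \eqref{ChebyshevRecurrenceSwitch} into the homogeneous matrix Chebyshev recurrence $\textbf{u}(0)=\textbf{x}(0)$, $\textbf{u}(1)=(c\textbf{A}(1)-d\textbf{I})\textbf{x}(0)$ and $\textbf{u}(n)=2(c\textbf{A}(n)-d\textbf{I})\textbf{u}(n-1)-\textbf{u}(n-2)$ for $n\ge2$ (this is the reformulation sketched in the commented block). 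Introduce the averaging projector $\textbf{J}=\textbf{1}\textbf{1}^T/N$. Because each $\textbf{A}(n)$ is symmetric and row stochastic (Assumption \ref{TimeVaryingStochasticMatrices}), it is doubly stochastic, so $\textbf{J}\textbf{A}(n)=\textbf{A}(n)\textbf{J}=\textbf{J}$ and $\textbf{J}$ commutes with $c\textbf{A}(n)-d\textbf{I}$. Projecting the recurrence onto the consensus line with $\textbf{J}$ and using the scalar Chebyshev identity $T_n(c-d)=2(c-d)T_{n-1}(c-d)-T_{n-2}(c-d)$, a short induction shows $\textbf{J}\textbf{x}(n)=\textbf{J}\textbf{x}(0)=\bar x\textbf{1}$ for all $n$, i.e. the average $\bar x=\textbf{1}^T\textbf{x}(0)/N$ is exactly conserved.

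Next I would analyse the disagreement component $\tilde{\textbf{u}}(n)=(\textbf{I}-\textbf{J})\textbf{u}(n)$, which lives in the invariant subspace $\mathcal{S}=\{\textbf{v}:\textbf{1}^T\textbf{v}=0\}$. Since $\textbf{J}$ commutes with $c\textbf{A}(n)-d\textbf{I}$, the quantity $\tilde{\textbf{u}}(n)$ obeys the same recurrence, $\tilde{\textbf{u}}(n)=2(c\textbf{A}(n)-d\textbf{I})\tilde{\textbf{u}}(n-1)-\tilde{\textbf{u}}(n-2)$. On $\mathcal{S}$ the symmetric operator $c\textbf{A}(n)-d\textbf{I}$ has eigenvalues $c\lambda_i(n)-d$, $i=2,\ldots,N$, so its spectral norm is at most $\mu:=\max\{|c\lambda_{\max}-d|,|c\lambda_{\min}-d|\}$, uniformly in $n$. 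Taking norms and applying the triangle inequality to the recurrence then gives $a_n\le 2\mu\,a_{n-1}+a_{n-2}$ for $a_n:=\|\tilde{\textbf{u}}(n)\|$, with $a_0=\|\tilde{\textbf{x}}(0)\|$ and $a_1\le\mu\,\|\tilde{\textbf{x}}(0)\|$. This is precisely the comparison recurrence $T^*_n$ in the proof of Lemma \ref{LemmaSwitch} with $\lambda_{\max}$ replaced by $\mu$, so a term-by-term induction yields $a_n\le\|\tilde{\textbf{x}}(0)\|\,T^*_n(\mu)\le\|\tilde{\textbf{x}}(0)\|\,\kappa_1(\mu)^n$.

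Finally I would convert this bound on $\textbf{u}$ back to $\textbf{x}$. From \eqref{direct_Expression}, $T_n(c-d)=\tfrac12\tau(c-d)^{-n}(1+\tau(c-d)^{2n})\ge\tfrac12\tau(c-d)^{-n}$ because $\tau(c-d)<1$, whence $\|\tilde{\textbf{x}}(n)\|=\|\tilde{\textbf{u}}(n)\|/T_n(c-d)\le 2\|\tilde{\textbf{x}}(0)\|\,(\kappa_1(\mu)\,\tau(c-d))^n$. Under the hypothesis \eqref{convergenceConditionSwitching} we have $\kappa_1(\mu)\tau(c-d)<1$, so $\tilde{\textbf{x}}(n)\to\textbf{0}$; combined with $\textbf{J}\textbf{x}(n)=\bar x\textbf{1}$ this gives $\textbf{x}(n)\to\bar x\textbf{1}$, the average consensus.

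The main obstacle is that the matrices $\textbf{A}(n)$ do not commute, so there is no common eigenbasis and one cannot follow a single eigenvalue/eigenvector pair as in the fixed-topology Theorem \ref{ConvergenceTh}. The device that removes this difficulty is to pass to the operator norm on the invariant subspace $\mathcal{S}$: symmetry (Assumption \ref{TimeVaryingStochasticMatrices}) guarantees both that $\mathcal{S}$ is invariant and that the spectral norm of $c\textbf{A}(n)-d\textbf{I}$ restricted to $\mathcal{S}$ equals its largest eigenvalue modulus, which collapses the non-commuting matrix recurrence onto the single scalar envelope recurrence of Lemma \ref{LemmaSwitch}. Proposition \ref{PropositionBoundUn} plays the analogous worst-case role at the scalar level and could be invoked if one insisted on an eigenvalue-by-eigenvalue argument, but the norm estimate is cleaner, needs no sign conditions on $\lambda_{\min},\lambda_{\max}$, and sidesteps commutativity entirely.
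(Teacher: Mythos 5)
Your proof is correct and follows essentially the same route as the paper's: isolate the exactly conserved consensus component, bound the disagreement by the scalar comparison recurrence of Lemma \ref{LemmaSwitch} to obtain the $\kappa_1(\mu)^n$ envelope, and divide by $T_n(c-d)$ to get the condition $\kappa_1(\mu)\tau(c-d)<1$. The only difference is bookkeeping: the paper expands the error in the orthogonal eigenbasis of each $\textbf{Q}(n)$ and uses that the change-of-basis matrices $\textbf{P}(n),\textbf{R}(n)$ have unit spectral norm, whereas you take the operator norm of $c\textbf{A}(n)-d\textbf{I}$ directly on the zero-sum invariant subspace --- the same estimate, packaged slightly more cleanly.
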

\begin{proof}
See the Appendix.
\end{proof}
%In the rest of the section we discuss in detail the
%meaning of the theorem and its implications.
%We start by giving more specific values of $\lambda_M$ and $\lambda_m,$
%and therefore on $c$ and $d$,
%that satisfy the condition in the theorem.
The next corollaries give more specific values of $\lambda_M$ and $\lambda_m,$
and therefore on $c$ and $d$,
that satisfy the condition in the theorem
to achieve convergence.

\begin{comment}
Let us fist analyze the conditions on the parameters for
a symmetric assignation, $-\lambda_m=\lambda_M=\lambda,$
because it is a simple case that we have seen that
with a fixed topology always ensures convergence.
Recall that with this assignation $c=1/\lambda$ and $d=0$.
Assuming $|c\lambda_{\max}-d|>|c\lambda_{\min}-d|$,
substituting $\kappa$ and $\tau$ by their values
in eq. \eqref{convergenceConditionSwitching}
and doing some simplifications we get
\begin{equation}
\label{parametersCond1}
\lambda^2 < (1-\lambda_{\max}^2).
\end{equation}
\end{comment}
\begin{corollary}
\label{CorollaryParam1}
Assume $|c\lambda_{\max}-d|>|c\lambda_{\min}-d|$
and a symmetric assignation, $-\lambda_m=\lambda_M=\lambda,$
of the parameters.
Then if
\begin{equation}
\label{parametersCond1}
\lambda^2 < (1-\lambda_{\max}^2),
\end{equation}
the algorithm converges.
\end{corollary}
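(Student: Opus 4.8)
The plan is to reduce the general sufficient condition \eqref{convergenceConditionSwitching} of Theorem \ref{ConvergeSwitchingTopTh} to the stated inequality by a direct substitution of the symmetric parameters followed by a single monotonicity argument. First I would record that the symmetric assignation $-\lambda_m=\lambda_M=\lambda$ forces
\[
c=\frac{2}{\lambda_M-\lambda_m}=\frac{1}{\lambda},\qquad d=\frac{\lambda_M+\lambda_m}{\lambda_M-\lambda_m}=0 .
\]
Since $0<\lambda<1$ we have $c-d=1/\lambda>1$, and under the standing hypothesis $|c\lambda_{\max}-d|>|c\lambda_{\min}-d|$ the maximum inside $\kappa_1$ is attained at $|c\lambda_{\max}-d|=\lambda_{\max}/\lambda$ (here $\lambda_{\max}>0$, which is consistent with that hypothesis). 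Thus the condition to verify becomes $\kappa_1(\lambda_{\max}/\lambda)\,\tau(1/\lambda)<1$.

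Next I would expand the two factors in closed form. From Lemma \ref{LemmaSwitch}, $\kappa_1(\mu)=\mu+\sqrt{\mu^2+1}$, giving $\kappa_1(\lambda_{\max}/\lambda)=(\lambda_{\max}+\sqrt{\lambda_{\max}^2+\lambda^2})/\lambda$. For $\tau$, since $1/\lambda\ge 0$ the definition gives $\tau(1/\lambda)=(1-\sqrt{1-\lambda^2})/\lambda$; it is cleaner to rewrite the target as $\kappa_1(\lambda_{\max}/\lambda)<1/\tau(1/\lambda)$ and use $1/\tau(1/\lambda)=1/\lambda+\sqrt{1/\lambda^2-1}=(1+\sqrt{1-\lambda^2})/\lambda$ to clear the nested radicals. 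Multiplying through by $\lambda>0$, the inequality collapses to
\[
\lambda_{\max}+\sqrt{\lambda_{\max}^2+\lambda^2}<1+\sqrt{1-\lambda^2}.
\]

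Finally I would show this is exactly equivalent to $\lambda^2<1-\lambda_{\max}^2$. View the left side as $g(a)=a+\sqrt{a^2+\lambda^2}$ with $a=\lambda_{\max}\ge 0$; since $g'(a)=1+a/\sqrt{a^2+\lambda^2}>0$, the function $g$ is strictly increasing. Evaluating $g$ at the threshold $a=\sqrt{1-\lambda^2}$ gives $g(\sqrt{1-\lambda^2})=\sqrt{1-\lambda^2}+\sqrt{(1-\lambda^2)+\lambda^2}=1+\sqrt{1-\lambda^2}$, which is precisely the right side. Hence the displayed inequality holds if and only if $\lambda_{\max}<\sqrt{1-\lambda^2}$, i.e. $\lambda^2<1-\lambda_{\max}^2$, at which point \eqref{convergenceConditionSwitching} is satisfied and convergence is guaranteed by Theorem \ref{ConvergeSwitchingTopTh}.

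I do not expect a genuine obstacle: the argument is a substitution plus a monotonicity observation that pins down the boundary case $\lambda_{\max}^2+\lambda^2=1$ as equality. The only points requiring minor care are the direction of the $\kappa_1$/$\tau$ manipulation (rewriting the product bound as $\kappa_1<1/\tau$ so the radicals simplify) and the sign convention $\sqrt{\lambda_{\max}^2}=\lambda_{\max}$ used at the threshold, which is legitimate because $\lambda_{\max}\ge 0$ under the corollary's hypotheses.
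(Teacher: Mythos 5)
Your proof is correct and follows exactly the route the paper takes: the paper's own proof is just the one-line instruction to substitute $c=1/\lambda$, $d=0$ into \eqref{convergenceConditionSwitching} and simplify, and you have carried out precisely those simplifications (correctly identifying that the hypothesis forces $\lambda_{\max}>0$ and that the boundary case is $\lambda_{\max}^2+\lambda^2=1$). No gaps.
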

\begin{proof}
Recall that with this assignation $c=1/\lambda$ and $d=0$.
Substituting $\kappa_1$ and $\tau$ by their values
in eq. \eqref{convergenceConditionSwitching}
and doing some simplifications eq. \eqref{parametersCond1} is obtained.
\end{proof}

\begin{comment}
If we prefer to give non-symmetric values for the parameters
we can impose a constraint on $\lambda_M$ and $\lambda_m$ such
that $x_{\max}$ is as small as possible.
If we know the values of $\lambda_{\max}$ and $\lambda_{\min},$
or some approximations,
the choice of $\lambda_m$ and $\lambda_M$
should be done in such a way that
\begin{equation}
|c\lambda_{\min}-d| = |c\lambda_{\max}-d|.
\end{equation}
In this way we are reducing the value of $x_{\max}$.
Clearing the equation yields
\begin{equation}
\lambda_M + \lambda_m = \lambda_{\max}+\lambda_{\min}.
\end{equation}
Doing again some, rather tedious, calculations
in eq. \eqref{convergenceConditionSwitching}
the second condition is obtained
%\begin{equation}
%\begin{split}
%(c-d)^2 - (x_{\max})^2 &> 1\\
%c(c-d + c\lambda-d) &> \frac{1}{(1-\lambda)}
%c^2(1+\lambda-(\lambda_M+\lambda_m)) &> \frac{1}{(1-\lambda)}
%\end{split}
%\end{equation}
\begin{equation}
\label{parametersCond2}
(\lambda_M-\lambda_m)^2 < 4(1-\lambda_{\max})(1-\lambda_{\min})
\end{equation}
\end{comment}

If we prefer to assign non-symmetric values to the parameters,
the following corollary provides a possible assignation that
satisfies Theorem \ref{ConvergeSwitchingTopTh}.
\begin{corollary}
\label{CorollaryParam2}
Assume now that the values of $\lambda_{\max}$ and $\lambda_{\min},$
or some bounds, are known.
If $\lambda_M$ and $\lambda_m$ satisfy that
\begin{equation}
\label{parametersCond3}
\lambda_M + \lambda_m = \lambda_{\max}+\lambda_{\min},
\end{equation}
and
\begin{equation}
\label{parametersCond2}
\lambda_M-\lambda_m < \sqrt{4(1-\lambda_{\max})(1-\lambda_{\min})},
\end{equation}
then the algorithm achieves the consensus.
\end{corollary}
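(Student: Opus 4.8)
The plan is to specialize the sufficient condition \eqref{convergenceConditionSwitching} of Theorem \ref{ConvergeSwitchingTopTh} under the constraint \eqref{parametersCond3} and reduce it to the single inequality \eqref{parametersCond2}. First I would exploit \eqref{parametersCond3}: writing $c\lambda-d = (2\lambda-(\lambda_M+\lambda_m))/(\lambda_M-\lambda_m)$ and substituting $\lambda_M+\lambda_m=\lambda_{\max}+\lambda_{\min}$ gives $c\lambda_{\max}-d = (\lambda_{\max}-\lambda_{\min})/(\lambda_M-\lambda_m)$ and $c\lambda_{\min}-d = -(\lambda_{\max}-\lambda_{\min})/(\lambda_M-\lambda_m)$, so the two quantities inside the maximum coincide and
\[
\max\{|c\lambda_{\max}-d|,\,|c\lambda_{\min}-d|\}
= x_{\max} := \frac{\lambda_{\max}-\lambda_{\min}}{\lambda_M-\lambda_m}.
\]
This is precisely the choice that makes $x_{\max}$ as small as possible for a fixed width $\lambda_M-\lambda_m$. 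I would also record that $c-d = (2-(\lambda_M+\lambda_m))/(\lambda_M-\lambda_m) > 1$, since $\lambda_M < 1$ forces $2-\lambda_M-\lambda_m > \lambda_M-\lambda_m$; this guarantees $0 < \tau(c-d) < 1$ and lets me use the real, monotone branch of $\tau$.

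The crux is to linearize condition \eqref{convergenceConditionSwitching}, namely $\kappa_1(x_{\max})\,\tau(c-d) < 1$, into a comparison of squares. I would write $\tau(c-d) = 1/\big((c-d)+\sqrt{(c-d)^2-1}\big)$ and introduce the auxiliary function $\phi(t) = t + \sqrt{t^2-1}$, so that $\phi(c-d) = \tau(c-d)^{-1}$ and $\phi$ is strictly increasing on $[1,\infty)$. The useful observation is that $\kappa_1(x_{\max}) = x_{\max} + \sqrt{x_{\max}^2+1} = \phi\big(\sqrt{x_{\max}^2+1}\big)$, because $\sqrt{(x_{\max}^2+1)-1} = x_{\max}$ for $x_{\max}\ge 0$. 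Condition \eqref{convergenceConditionSwitching} then reads $\phi\big(\sqrt{x_{\max}^2+1}\big) < \phi(c-d)$, and since both arguments lie in $[1,\infty)$ and $\phi$ is increasing, it is equivalent to $\sqrt{x_{\max}^2+1} < c-d$, that is, to $(c-d)^2 - x_{\max}^2 > 1$.

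What then remains is a short computation. Using the expressions above, $(c-d)^2 - x_{\max}^2 = \big[(2-(\lambda_{\max}+\lambda_{\min}))^2 - (\lambda_{\max}-\lambda_{\min})^2\big]/(\lambda_M-\lambda_m)^2$, and factoring the numerator as a difference of squares gives $(2-2\lambda_{\max})(2-2\lambda_{\min}) = 4(1-\lambda_{\max})(1-\lambda_{\min})$. Hence $(c-d)^2 - x_{\max}^2 > 1$ is exactly $4(1-\lambda_{\max})(1-\lambda_{\min}) > (\lambda_M-\lambda_m)^2$, and since both sides are positive (as $\lambda_{\max},\lambda_{\min} < 1$ and $\lambda_M > \lambda_m$) this is equivalent, after taking square roots, to \eqref{parametersCond2}. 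Thus the hypotheses \eqref{parametersCond3} and \eqref{parametersCond2} imply \eqref{convergenceConditionSwitching}, and convergence to consensus follows from Theorem \ref{ConvergeSwitchingTopTh}.

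The main point to get right is the reduction in the second paragraph: the identity $\kappa_1(x) = \phi(\sqrt{x^2+1})$ combined with the monotonicity of $\phi$ is what avoids an awkward case analysis on the sign of $\sqrt{x_{\max}^2+1}-\sqrt{(c-d)^2-1}$ that would arise from squaring the raw inequality $\kappa_1\,\tau < 1$ directly. The remaining care is purely sign bookkeeping, namely confirming $c-d>1$ and $\lambda_M-\lambda_m>0$ under the standing assumptions so that both $\tau$ and $\phi$ are evaluated on their real, increasing branches; these follow immediately from $\lambda_M<1$ and $\lambda_M>\lambda_m$.
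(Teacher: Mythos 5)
Your proof is correct and follows the same route as the paper: impose $|c\lambda_{\min}-d|=|c\lambda_{\max}-d|$ to get \eqref{parametersCond3}, then substitute into \eqref{convergenceConditionSwitching} to extract \eqref{parametersCond2}. The paper dismisses this last step as ``rather tedious calculations,'' whereas you actually carry it out cleanly via the identity $\kappa_1(x)=\phi(\sqrt{x^2+1})$ with $\phi(t)=t+\sqrt{t^2-1}=\tau(t)^{-1}$ and its monotonicity, which reduces the condition to $(c-d)^2-x_{\max}^2>1$ and hence to the stated bound.
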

\begin{proof}
If we know the values of $\lambda_{\max}$ and $\lambda_{\min},$
the choice of $\lambda_m$ and $\lambda_M$
can be done in such a way that
\begin{equation}
\label{parametersCond5}
|c\lambda_{\min}-d| = |c\lambda_{\max}-d|.
\end{equation}
With this assignation we are minimizing the value of $\max \{ |c\lambda_{\max}-d|, |c\lambda_{\min} - d| \}$
and therefore, the convergence condition is easier to fulfill.
Clearing \eqref{parametersCond5} yields \eqref{parametersCond3}.
With this first condition,
doing some, rather tedious, calculations
in eq. \eqref{convergenceConditionSwitching}
the second condition \eqref{parametersCond2} is obtained.
\end{proof}

%In the rest of the section we discuss in detail the
%meaning of the theorem and its implications.
We discuss now in detail the
meaning of the theorem and its implications.
\begin{remark}
Note that the theorem provides just a sufficient condition to ensure
convergence. This means that although the given bounds
seem very restrictive, in practice,
even if we choose large values of $\lambda_M$ and $\lambda_m,$
there will be convergence.
Moreover, an important consequence of corollaries
\ref{CorollaryParam1} and \ref{CorollaryParam2} is that,
independently on the changes of the network topology, there are always
parameters such that the method converges to the consensus.
\end{remark}

%At the view of the results,
\begin{remark}
It is also interesting to note the different behavior
of the algorithm when the topology changes
with respect to the fixed case.
In the latter case, in general it is better to select the parameters
$\lambda_M$ and $\lambda_m$ with large modulus to ensure
that all the eigenvalues of the weight matrix are included
in the interval $[\lambda_m,\lambda_M]$.
However, in the switching case, it is necessary to choose
them small so that $c-d$ is large enough to guarantee convergence.
This happens because the more variation on the eigenvalues
of the weight matrices, the larger $\kappa_1(\max \{ |c\lambda_{\max}-d|, |c\lambda_{\min} - d| \})$ is.
Therefore, the larger $N$, the smaller (in modulus) $\lambda_M$ and $\lambda_m$
should be chosen.
\end{remark}

\begin{remark}
The analysis followed to proof convergence of our algorithm
is also interesting because it can be applied to more general
consensus algorithms based on recurrences of order greater than one.
Given a recurrence similar to \eqref{ChebyshevRecurrenceSwitch},
if a scalar difference equation is found such that its solution bounds
the original one in the worst case,
a convergence result using the behavior of this recurrence can be obtained.
To the authors' knowledge,
this is the first theoretical result proving convergence
of a distributed algorithm based on polynomials
under switching communication topologies.
\end{remark}

%However, note that the theorem provides just a sufficient condition to ensure
%convergence. This implies that although the given bounds
%seem very restrictive, in practice,
%even if we choose large values of $\lambda_M$ and $\lambda_m,$
%there will be convergence.
%Moreover, considering eq. \eqref{parametersCond1} and
%\eqref{parametersCond2} we can see that, independently on
%the changes of the network topology, there are always
%parameters such that the method converges to the consensus.
%To the authors' knowledge,
%it is the first theoretical result proving convergence
%of a distributed algorithm based on the evaluation of polynomials.

Finally, we provide a discussion about the assumptions we have made
to proof convergence.
\begin{itemize}
\item
%Compared to the fixed topology case, here we have required
%the weight matrices to be symmetric.
\emph{Symmetric weight matrices:}
If the weight matrices are not symmetric, then we cannot ensure
that the norm of the matrices used to change the base of eigenvectors is equal to $1$.
In such a case the convergence condition in Theorem \ref{ConvergeSwitchingTopTh}
would be $K\kappa_1(\max \{ |c\lambda_{\max}-d|, |c\lambda_{\min} - d| \}) \tau(c-d) < 1,$ with $K\geq1$ some positive constant.
It is also important to remark that, in this situation, the left eigenvector
associated to $\lambda_1(n)$ is not constant anymore for different matrices.
This makes the theoretical analysis of the behavior more tedious
because at each iteration it is affected by these eigenvectors,
which do not tend to zero with $n$.
However, convergence can still be achieved.

\item
\emph{Connectivity of the graphs:}
The assumption about the connectivity of each graph
is more restrictive than in other approaches, e.g.,~\cite{AJ-JL-ASM:03},
where only joint connectivity is imposed.
In our analysis, if one graph is disconnected, then
$\lambda_{\max}=1$ and the sufficient condition \eqref{convergenceConditionSwitching}
is never satisfied.
This, of course, is caused because we are considering
the worst case scenario, so that we can model the behavior of the
Chebyshev recurrence as the $n$th power of some quantity.
However, in practice, even if some graphs are disconnected,
the errors associated to the
eigenvectors associated to the eigenvalue $1$ are also canceled.
We show this in simulations in section \ref{simulations}.
\end{itemize}

\section{\bf Simulations}
\label{simulations}
In this section we analyze our algorithm in a simulated environment.
Monte Carlo experiments have been designed to
study the convergence of the method and the influence of
the parameters $\lambda_m$ and $\lambda_M$ in the algorithm.

\subsection{Evaluation with a fixed communication topology}
In a first step we study the algorithm when
the topology of the network is fixed.
We analyze the convergence speed for different weight matrices,
comparing it with other approaches,
and the influence of the parameters $\lambda_M$ and $\lambda_m$
in the performance of the algorithm.

In the experiments we have considered 100
random networks of 100 nodes.
For each network the nodes have been randomly positioned
in a square of $200\times200$ meters.
Two nodes communicate if they are at a distance lower than 20 meters.
The networks are also forced to be connected so that the algorithms converge.
After that, 100 different random initial values
have been generated in the interval $(0,1)^N,$
giving a total of 10000 trials to test the algorithm.

\subsubsection{Convergence speed of the algorithm}
We evaluate how our algorithm behaves compared to other methods
using different weighted adjacency matrices.
For each communication network we have computed 4 different
weighted adjacency matrices.
The first one, $\textbf{A}_{ld},$ uses the ``local degree weights'',
the second one, $\textbf{A}_{bc},$ uses the ``best constant factor''
and the third one, $\textbf{A}_{os},$ computes an approximation
of the ``optimal symmetric weights''.
For more information about these matrices we refer the reader
to~\cite{LX-SB:04}.
These three matrices are symmetric, for that reason we have included
in the experiment a fourth non-symmetric matrix, $\textbf{A}_{ns},$
computed by $a_{ij}=1/(\mathcal{N}_i+1)$ if $j\in\mathcal{N}_i\cup i$
and $a_{ij}=0$ otherwise.

We have compared our method with
the powers of the matrices using \eqref{Consensus_Laplacian},
the Newton's interpolation polynomial of degree 2 proposed in~\cite{EK-PF:09},
$N_2(x) = (x-\alpha)^2/(1-\alpha)^2$,
and the second order recurrence with fixed weights proposed in~\cite{SM-BG-MHS:98},
$F_n(x) = \beta x F_{n-1}(x) + (1-\beta) F_{n-2}(x)$.
%$\textbf{x}(n) = \beta \textbf{A}_{xx} \textbf{x}(n-1) + (1-\beta) \textbf{x}(n-2)$.
We have used the values $\alpha=(\lambda_2+\lambda_N)/2$
and $\beta=2/(1+\sqrt{1-\lambda_2^2})$,
which give the best convergence rate for the two algorithms.
For the Chebyshev polynomials we have also assigned
%We have analyzed the convergence speed of the powers of these
%matrices and the speed of our algorithm using the same matrices,
%$P_n(\textbf{A}_{ld}),\ P_n(\textbf{A}_{bc}),\ P_n(\textbf{A}_{os})$
%and $P_n(\textbf{A}_{ns}).$
%In each case we have assigned
the optimal parameters $\lambda_M=\lambda_2$ and $\lambda_m = \lambda_N$.
%to the algorithm.
We have measured the average number of iterations required to obtain
an error,
$e=\|\textbf{x}(n)-(\textbf{w}_1^T\textbf{x}(0)/\textbf{w}_1^T\textbf{1})\textbf{1}\|_\infty,$ smaller than a given tolerance.

\begin{table}[ht]
\caption{\small Number of iterations for different algorithms and tolerances}
\label{table_comparison_powers} \centering
\begin{tabular}{|c|c|c|c|c||c|c|c|c|c|}
\hline
Method$\backslash$Tolerance&$10^{-2}$&$10^{-3}$&$10^{-4}$&$10^{-5}$&
Method$\backslash$Tolerance&$10^{-2}$&$10^{-3}$&$10^{-4}$&$10^{-5}$\\[2pt]
\hline
$\textbf{A}^n_{ld}$      &396.1   &899.0   &1422.9  &1902.9&
$N_2(\textbf{A}_{ld})$   &381.4   &748.9   &1120.8  &1474.5\\[2pt]
$\textbf{A}^n_{bc}$      &470.5   &892.4   &1307.4  &1691.5&
$N_2(\textbf{A}_{bc})$   &475.7   &897.0   &1109.9  &1493.7\\[2pt]
$\textbf{A}^n_{os}$      &390.8   &735.1   &1092.0  &1446.0&
$N_2(\textbf{A}_{os})$   &426.8   &792.4   &964.3   &1225.2\\[2pt]
$\textbf{A}^n_{ns}$      &308.9   &698.4   &1116.7  &1521.2&
$N_2(\textbf{A}_{ns})$   &302.6   &604.1   &911.5   &1216.4\\[2pt]
$F_n(\textbf{A}_{ld})$   &45.7    &71.9    &98.0    &124.2&
$P_n(\textbf{A}_{ld})$   &41.8    &62.2    &82.6    &103.0 \\[2pt]
$F_n(\textbf{A}_{bc})$   &45.2    &67.4    &91.2    &114.6&
$P_n(\textbf{A}_{bc})$   &44.6    &66.4    &88.1    &109.9\\[2pt]
$F_n(\textbf{A}_{os})$   &42.2    &62.9    &83.3    &103.6&
$P_n(\textbf{A}_{os})$   &42.1    &62.6    &83.0    &103.4\\[2pt]
$F_n(\textbf{A}_{ns})$   &40.8    &63.9    &86.8    &109.8&
$P_n(\textbf{A}_{ns})$   &38.6    &57.1    &75.6    &94.1\\[2pt]
\hline
\end{tabular}
\end{table}

\begin{comment}
Miter =
        49.13       93.341       94.405       37.745        21.46       22.824        21.59       20.083
       396.18       470.58       390.84       308.91       41.865       44.636       42.128       38.638
       899.02       892.49       735.12        698.4       62.276       66.451       62.618       57.171
       1422.9       1307.4         1092       1116.7       82.619       88.168        83.05       75.639
       1902.9       1691.5         1446       1521.2       103.02       109.92       103.48       94.147

Miter =
        64.87       97.753       103.02       49.216       19.491       18.468       17.328       17.842
       381.44       475.71       426.85        302.6       45.743       43.295       39.263       40.898
       748.98       897.03       792.48       604.15       71.918       67.477        60.91       63.908
       1120.8       1309.9       1164.3       911.59       98.094       91.202       82.329       86.872
       1474.5       1693.7       1525.2       1216.4       124.21       114.61       103.61       109.81
\end{comment}
Table \ref{table_comparison_powers} shows the results of the experiment.
%We can observe that
For any matrix our algorithm is the one that reaches the consensus first.
It is remarkable the speed up compared to the powers and the Newton method.
%reaches the desired precision in far less iterations
%than \eqref{Consensus_Laplacian}.
%Using the same matrix, our algorithm obtains
%the same results reducing by one order of magnitude
%the number of iterations.
%(e.g., for $\textbf{A}_{ld}$ and
%a tolerance error of $e<10^{-3},$
%\eqref{Consensus_Laplacian} requires 823.8 iterations and $p_n(\textbf{A}_{ld})$
%only requires 57.5).
Moreover, considering that the initial error is upper bounded by 1,
note that our algorithm is able to reduce the error by five orders of magnitude ($10^{-5}$)
in around $N=100$ iterations ($103.0, 109.9, 103.4$ and $94.1$ iterations in the table),
which is the size of the network.

An interesting detail is that our algorithm
converges faster using the ``local degree weights'', $\textbf{A}_{ld} (103.0),$
and the ``non-symmetric weights'', $\textbf{A}_{ns} (94.1),$
than using the other two matrices $(109.9$ and $103.4)$,
even though the second largest eigenvalue of the other two
matrices is smaller.
This behavior happens because the eigenvalues of
$\textbf{A}_{bc}$ and $\textbf{A}_{os}$ are symmetrically placed
with respect to zero whereas for $\textbf{A}_{ld}$ and $\textbf{A}_{ns}$
$|\lambda_N|<\lambda_2$ (an example can be found in~\cite{LX-SB:04}).
%interval
%that contains all the eigenvalues of $\textbf{A}_{ld}$ and $\textbf{A}_{ns}$
%is smaller than the intervals that contain the eigenvalues of
%$\textbf{A}_{bc}$ and $\textbf{A}_{os}$ (an example can be found in~\cite{LX-SB:04}).
As a consequence, $c-d$ is larger and the algorithm converges faster.
This is indeed very convenient because the ``local degree weights''
and the ``non-symmetric weights''
can be easily computed in a distributed way without global information, whereas the other
two require the knowledge of the whole topology.

Regarding the non-symmetric weights, we have observed that $\lambda_2$ is,
in general, small compared to the second eigenvalue of the symmetric matrices.
Since the eigenvalues of $\textbf{A}_{ns}$ also satisfy that $|\lambda_N|<\lambda_2$,
the convergence for this matrix is the fastest.
Also note that these matrices are the easiest to compute.
On the other hand, when using symmetric weight matrices the convergence value
is known to be the average of the initial conditions whereas when using
non-symmetric weights the convergence value depends on the matrix.

\subsubsection{Dependence on the parameters $\lambda_M$ and $\lambda_m$}
So far we have evaluated the convergence speed of our algorithm
only considering the optimal parameters,
which implies the knowledge of the eigenvalues of the weight matrix.
However, in most situations the nodes will have no knowledge about
these eigenvalues.
We analyze now the convergence rates of our algorithm
when it is run using sub-optimal parameters.
In this case, for simplicity we have only
considered $\textbf{A}_{ld}$ in the experiment.

\begin{table}[ht]
\caption{\small Number of iterations using sub-optimal parameters and tolerance $10^{-3}$}
\label{table_suboptimal_parameters} \centering
\begin{tabular}{|c|c|c|c|c|c|c|}
\hline
$\lambda_m\backslash\lambda_M$&0.2&0.5&0.8&0.9&0.95&0.999\\[2pt]
\hline
-0.2&          713.8&       563.7&       355.2&    $\infty$&    $\infty$&     $\infty$\\[2pt]
-0.5&          798.1&       630.4&       397.2&       279.0&       194.5&         75.9\\[2pt]
-0.8&          874.3&       690.6&       435.2&       305.6&       213.1&         83.1\\[2pt]
-0.9&          898.3&       709.5&       447.0&       314.0&       219.0&         85.4\\[2pt]
-0.95&         910.0&       718.8&       453.0&       318.1&       221.8&         86.5\\[2pt]
-0.999&        919.4&       726.0&       457.6&       321.3&       224.0&         87.4\\[2pt]
\hline
\hline
$F_n$ &        757.5&       672.4&       463.9&       320.9&       227.1&         93.0\\[2pt]
\hline
\end{tabular}
\end{table}
The results are in Table \ref{table_suboptimal_parameters}.
The table shows the average number of iterations required to have
an error lower than $10^{-3}.$
The number of iterations is in all the cases larger than in
Table \ref{table_comparison_powers} ($62.2$ iterations)
but anyway,
the results are in most cases also good.
The only problem appears when $\lambda_M+\lambda_m-1>\lambda_N$
because the algorithm diverges (cells with $\infty$ in the table).
Nevertheless, the number of iterations is almost always smaller than using the powers
of $\textbf{A}_{ld}$ and the Newton polynomial ($899.0$ and $748.9$ iterations
in Table \ref{table_comparison_powers} respectively).
The results compared to $F_n$ evaluated with
the optimal parameter ($71.9$ it. in Table \ref{table_comparison_powers}) seem to be poor.
However, the optimal $\beta$ requires the knowledge of $\lambda_2$
which, right now, we are assuming it is unknown.
For that reason, in the last row of Table \ref{table_suboptimal_parameters}
we have included the results using $F_n$ evaluated with
$\beta=2/(1+\sqrt{1-\lambda_M^2})$,
i.e., with the same estimation
of $\lambda_2$ used for the Chebyshev polynomials.
In this case we observe again that both methods
present a similar performance when using the same parameters.
The degree of freedom given by $\lambda_m$ is what differs in the algorithms.
By adjusting this parameter we can reduce the number
of iterations in our algorithm.

Another advantage of using our algorithm with the
weight matrix $\textbf{A}_{ld},$
besides the computation using local information,
is that usually its smallest eigenvalue, $\lambda_N,$
is a negative value close to zero
(in our simulations it has never valued less than -0.5).
The second largest eigenvalue depends on how many nodes has the network
and the number of links, but in general
this eigenvalue is close to one.
Therefore by choosing $\lambda_m = -0.5$ and $\lambda_M \simeq 1$
there is a great chance to obtain a good convergence rate
and almost no risk of divergence,
see for example the cell in the second row and sixth column of
Table \ref{table_suboptimal_parameters} ($153.7$).
A safer choice of parameters is $\lambda_m = -\lambda_M,$
which we know that has good convergence rates.
In this case it is also convenient to choose $\lambda_M \simeq 1$
to ensure that all the eigenvalues are contained in $[\lambda_m,\lambda_M].$

\subsection{Evaluation with a switching communication topology}
Let us see how the algorithm behaves when the
topology of the network changes at different iterations.
We start by showing the convergence in an illustrative example
where the conditions of Theorem \ref{ConvergeSwitchingTopTh} are satisfied.
After that we run again Monte Carlo experiments to analyze
the algorithm in more realistic situations, where the conditions of
Theorem \ref{ConvergeSwitchingTopTh} do not always hold.

\subsubsection{Illustrative Example}
The communication network considered, composed by 20 nodes, is depicted in Fig.
\ref{IllustrativeExampleFig} (top left), which is connected.
In order to satisfy the conditions of Theorem \ref{ConvergeSwitchingTopTh}
at each iteration we have randomly added some links to the network.
In this way all the topologies remain connected and the parameters
$\lambda_{\max}$ and $\lambda_{\min}$ correspond to the second maximum
and the smallest eigenvalues of the initial weight matrix.
Using the local degree weights, which return a symmetric matrix,
these parameters are $\lambda_{\max}=0.9477$ and $\lambda_{\min}=-0.1922$.
%The evolution of $x(n)$ using \eqref{Consensus_Laplacian} is shown in the top left figure.
Figure \ref{IllustrativeExampleFig} top middle and top right
depict the evolution of $x(n)$ using \eqref{ChebyshevRecurrenceSwitch}
with the parameters of Corollary \ref{CorollaryParam1}, $\lambda_M = -\lambda_m = 0.3190$,
and Corollary \ref{CorollaryParam2},
$\lambda_M=0.6274,\ \lambda_m=0.1282$, respectively.
The evolution of $\textbf{x}(n)$ using \eqref{Consensus_Laplacian} is shown
in Fig. \ref{IllustrativeExampleFig} bottom left.
It is interesting to note the similarity of this graphic with the
Chebyshev recurrence using the symmetric parameters given by Corollary \ref{CorollaryParam1} (top middle).
Finally, to remark that the condition of Theorem \ref{ConvergeSwitchingTopTh}
is a sufficient condition in Fig. \ref{IllustrativeExampleFig}
bottom middle and bottom right we show that the algorithm also
converges to the consensus choosing parameters with larger modulus.
In the example we have chosen the parameters using the criteria analyzed for the
fixed topology situation.
Moreover, we can see in the graphics that the consensus is achieved in both cases in less iterations
(the lines overlap earlier in the graphics).
Finally, note that the symmetry in all the weight matrices implies that,
in all the cases, the value of the consensus is the average of the initial conditions.
\begin{figure*}
  \centering
  \begin{tabular}{ccc}
 \includegraphics[width=0.20\linewidth]{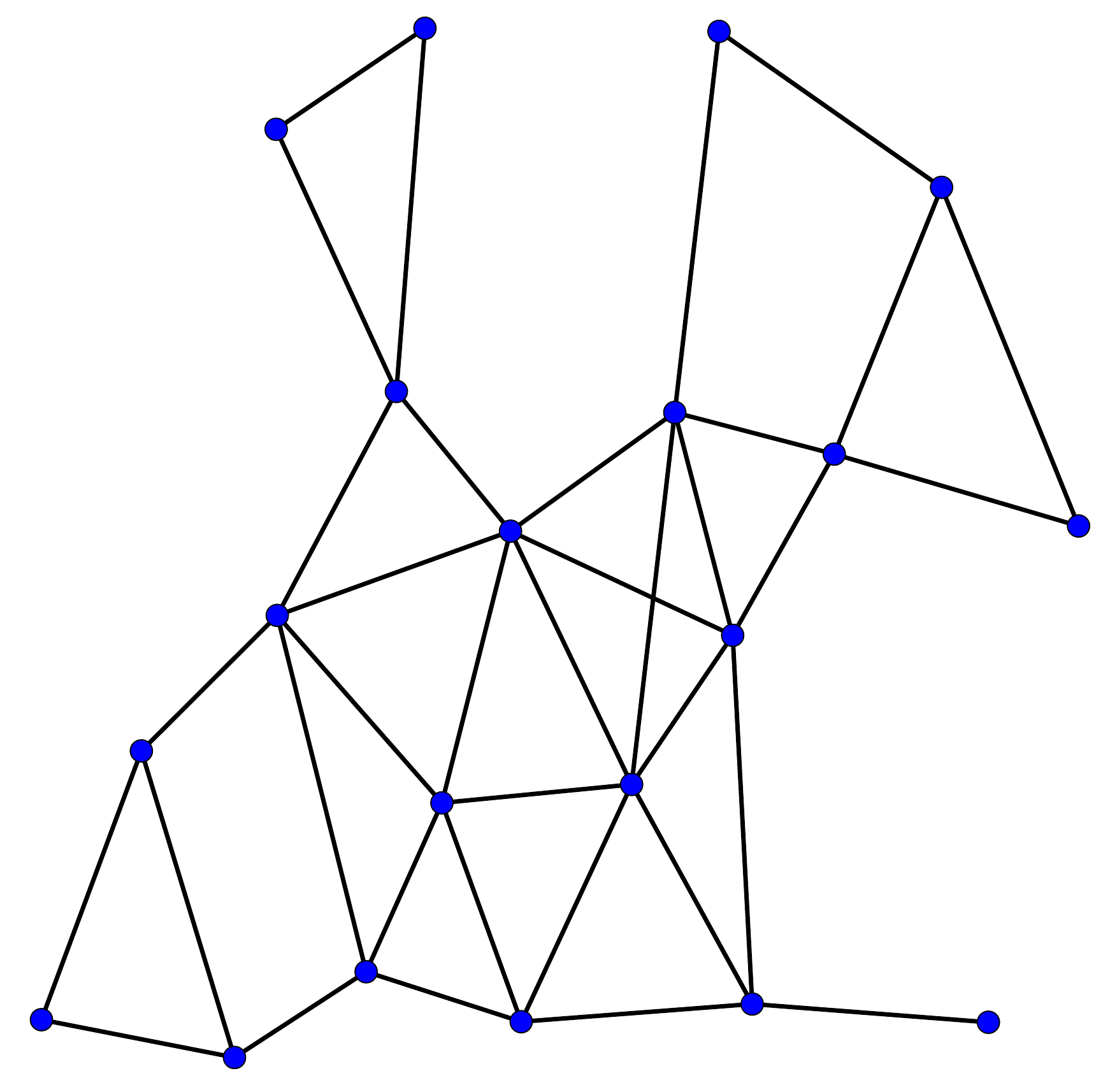}&
 \includegraphics[width=0.32\linewidth]{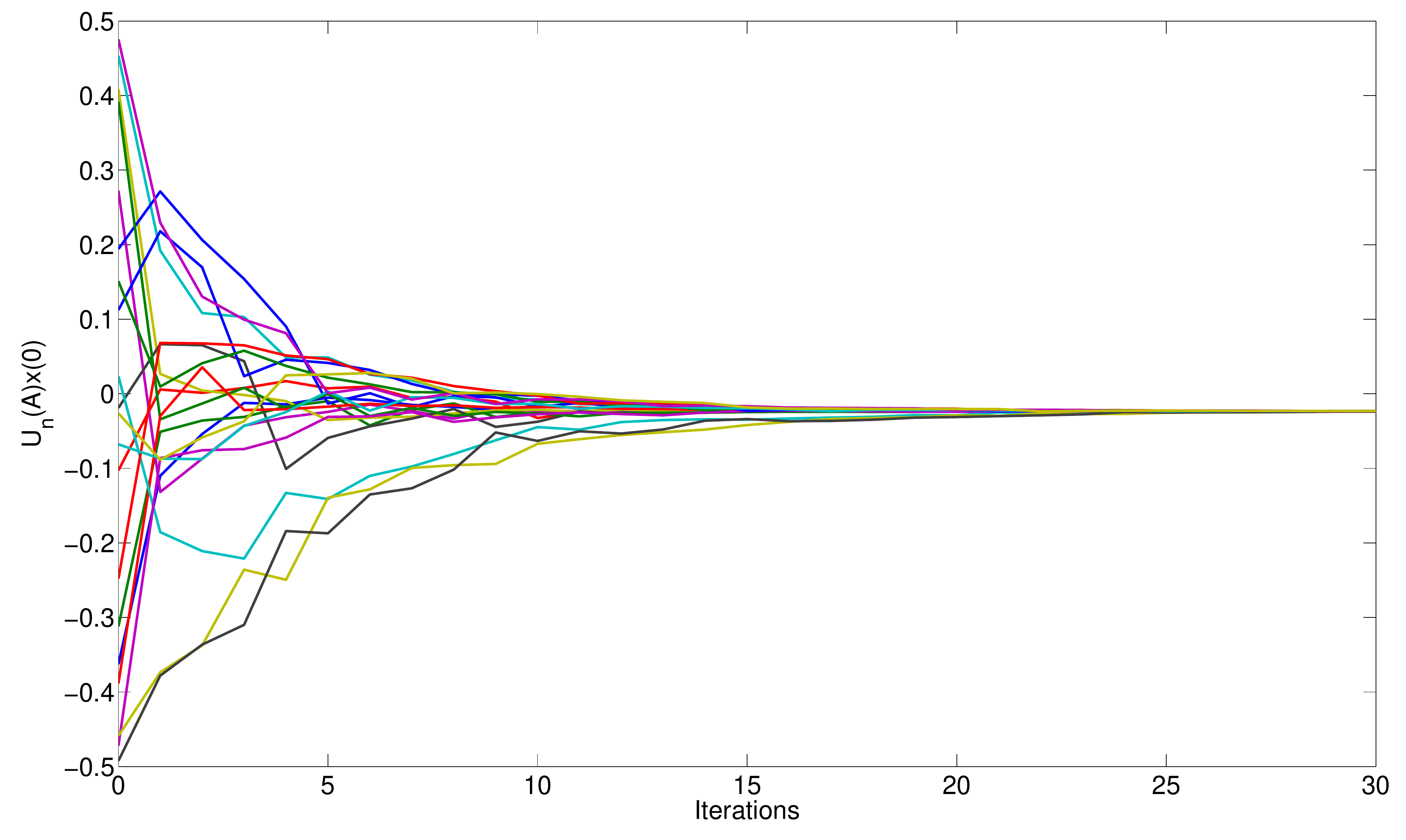}&
 \includegraphics[width=0.32\linewidth]{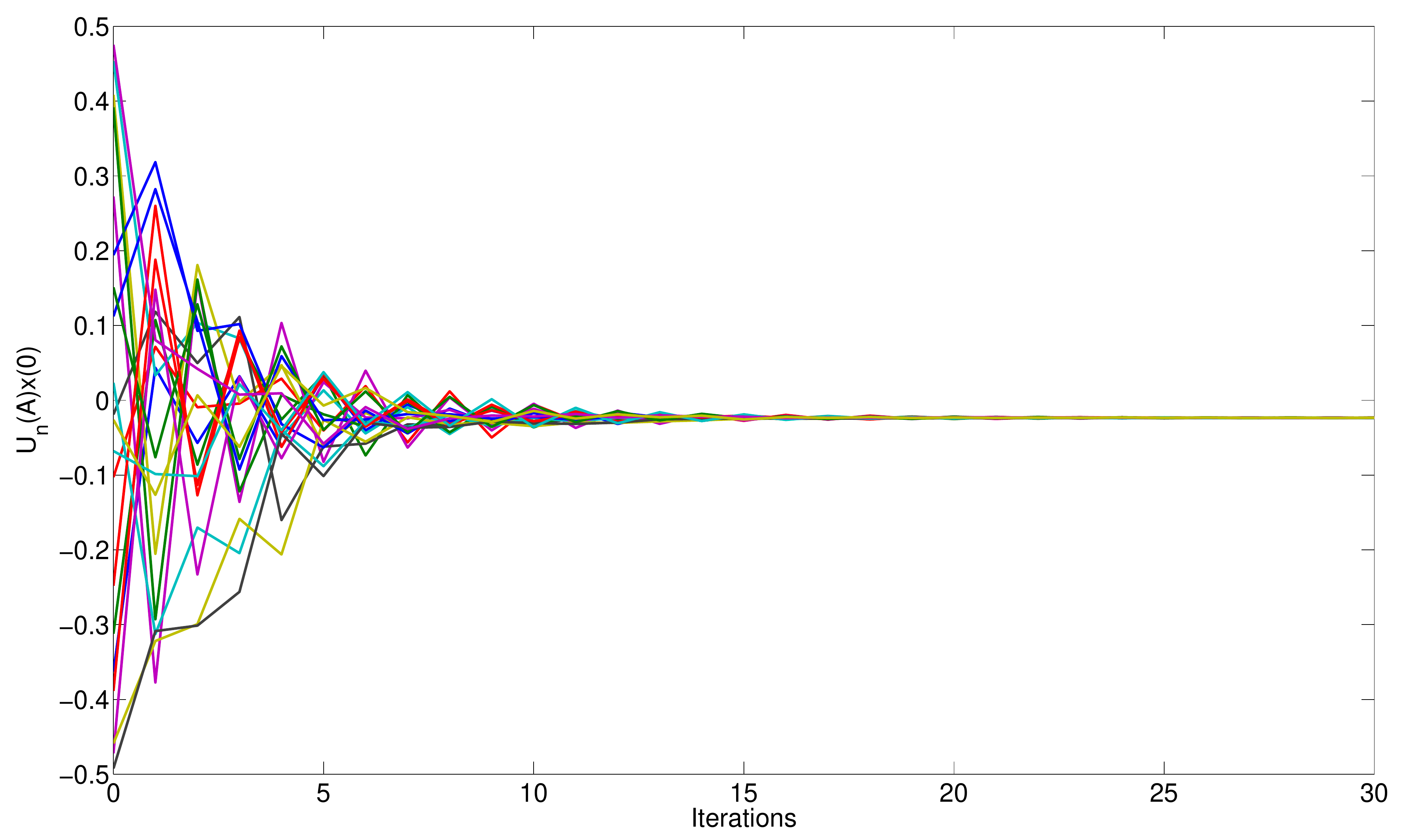}\\
  Communication Network&
  $\lambda_M = -\lambda_m = 0.3190$&
  $\lambda_M=0.6274,\ \lambda_m=0.1282$\\
  \includegraphics[width=0.32\linewidth]{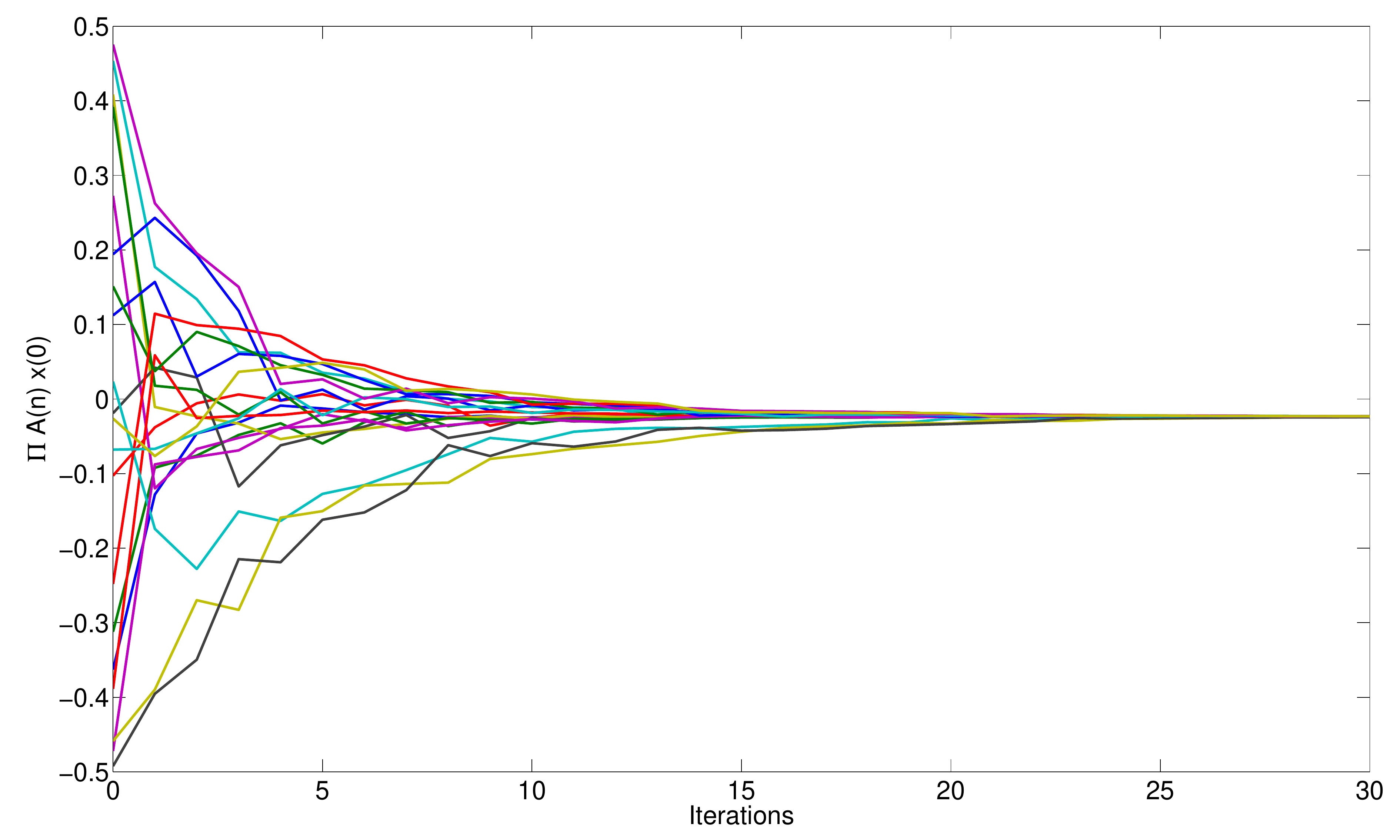}&
 \includegraphics[width=0.32\linewidth]{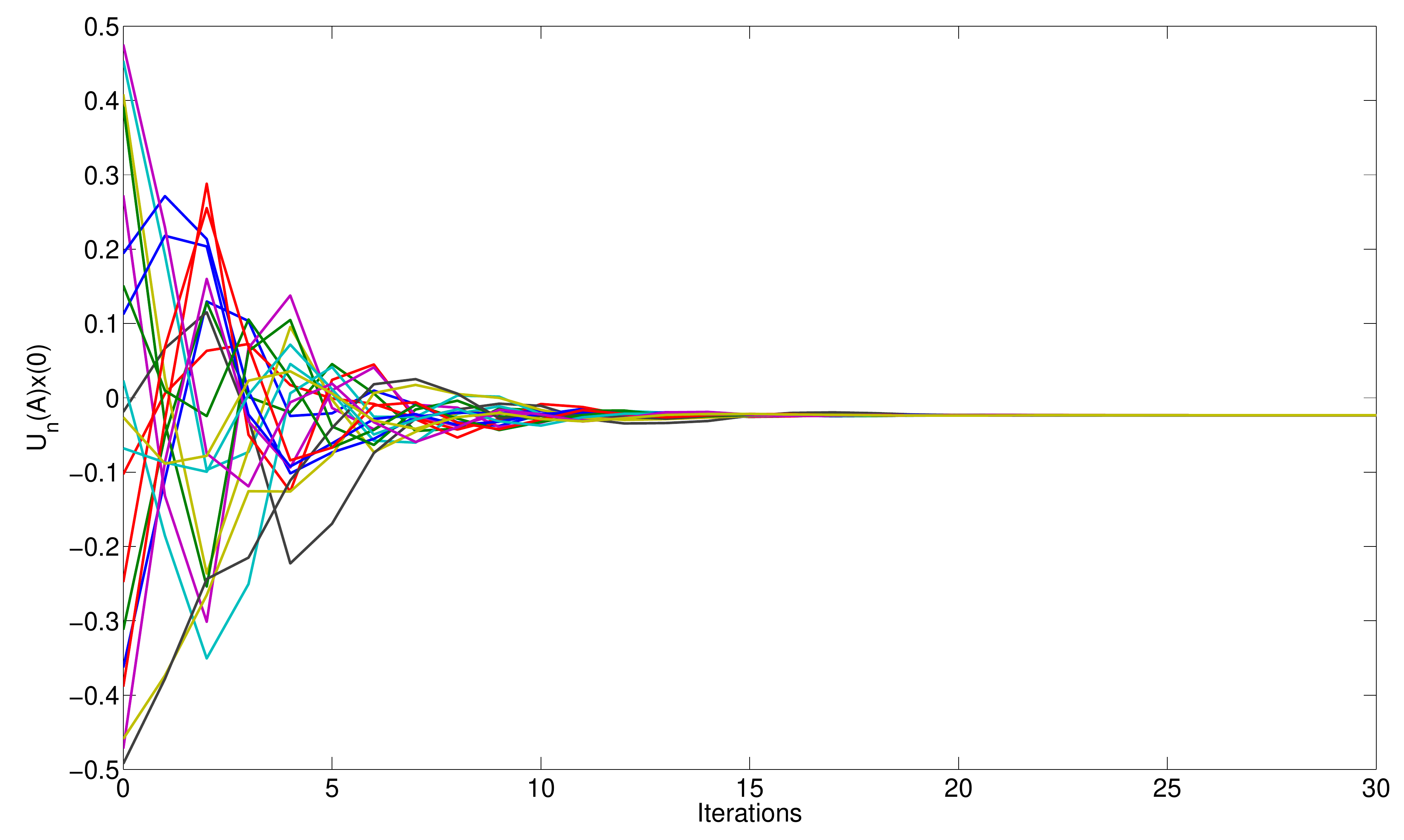}&
 \includegraphics[width=0.32\linewidth]{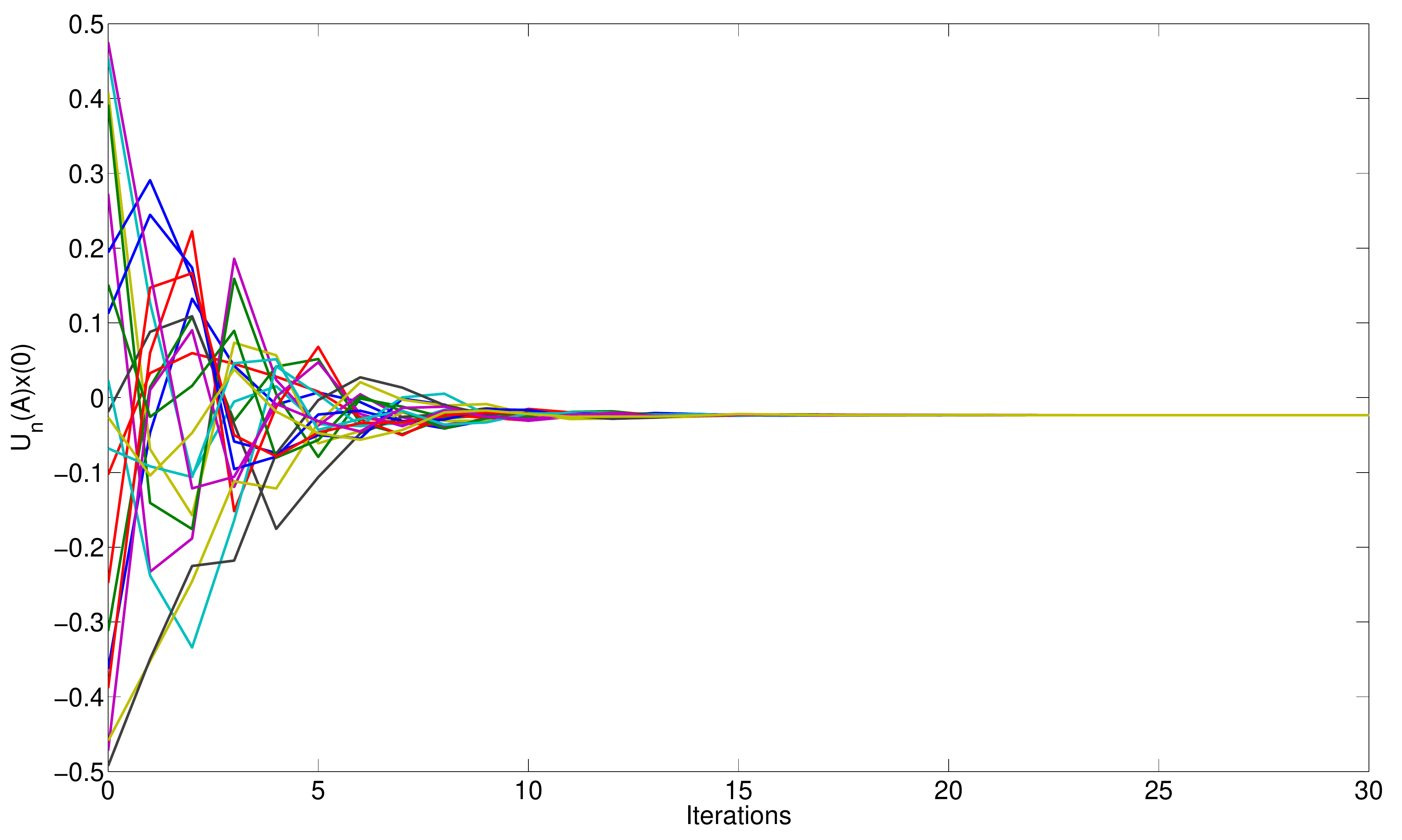}\\
  $\Pi_n \textbf{A}(n)\textbf{x}(0)$&
  $\lambda_M=0.8,\ \lambda_m=-0.8$&
  $\lambda_M=0.9,\ \lambda_m=-0.5$
  \end{tabular}
  \caption{Illustrative example of the convergence speed of the algorithm
  with a switching communication topology.
  The initial network is shown at the top left graphic.
  The evolution using \eqref{Consensus_Laplacian} is shown at the bottom left
  and four different executions of \eqref{ChebyshevRecurrenceSwitch}
  with the same changes in the topology and different parameters
  are depicted in the rest of the graphics. Notice that even when the conditions
  of Theorem \ref{ConvergeSwitchingTopTh} are not satisfied (bottom middle and bottom right graphics),
  the algorithm still achieves the consensus.
  }
  \label{IllustrativeExampleFig}
\end{figure*}

\subsubsection{Analysis of convergence depending on the evolution of network and the parameters
of the algorithm}
\begin{comment}
Random Network
xx =

   12.7381   13.1085   19.6706   71.1459       Inf
   13.2064   12.8969   17.7095   43.7363  266.4952
   13.7809   13.0239   17.0446   36.6125  105.2806
   14.1414   13.1759   16.9186   34.6240   85.3825
   14.2618   13.2338   16.8918   34.1398   81.2287
Miter =
   13.6959

EVOLUTION
xx =

  499.7320  298.5506  198.1569  125.3963   81.8005
  479.9560  319.0362  214.3118  138.3504   90.1594
  482.5467  342.4559  229.3373  149.6142   98.5891
  489.0352  339.3222  237.7992  155.5602  102.6776
  492.2408  342.0014  240.5471  157.4218  104.0553
Miter =
  336.5861

Link Failures
xx =

       1519.7       479.52       184.02       114.03       77.775
       1342.3       474.35       201.69       124.96       85.172
         1228       465.45       217.95       135.01       91.998
       1200.4       470.46       227.14       140.71        95.89
       1195.2       472.29       230.12       142.56       97.161
Miter =
       332.46
   \end{comment}

We have generated again 100 random networks of 100 nodes like in the
fixed topology case.
To model the changes in the communication topology
we have considered three different scenarios in the experiment.
The first one assumes a fixed initial communication topology
and, at each iteration the links can fail with constant probability
equal to $0.05$ (Link Failures).
This is a usual way to model networks with unreliable or noisy communications.
In the second scenario we consider a set of mobile agents that randomly move
in the environment. In this way, at each iteration the
communication topology evolves with the proximity graph
defined by the new positions of the agents (Evolution with Motion).
The last scenario assumes a new random network
at each iteration (Random Network).
Although in reality this situation will be uncommon,
it is interesting to analyze it in order to study the properties
of our algorithm.
In the three scenarios we have used the local degree weights
to define the weight matrix at each iteration.
We have not worried about the network connectivity,
letting the experiment to possibly have several iterations
with disconnected networks.
We have set a maximum of 3000 iterations per trial.

Table \ref{table_switching_laplacian} shows the number of
iterations required by iteration \eqref{Consensus_Laplacian}
to achieve a precision of $10^{-3}$.
We can see that when the network has link failures or evolves
with the motion of the nodes the number of iterations required by
the algorithm is slightly greater than when the topology of the network
remains fixed (1087.2 and 1032.4 compared to 899.0 in Table \ref{table_comparison_powers}).
On the other hand, when the network randomly changes at each step,
in a few iterations (9.4) the consensus is achieved, which makes
sense because in this situation the information is spread in a fast way.
\begin{table}[!ht]
\caption{\small Number of iterations with tolerance $10^{-3}$}
\label{table_switching_laplacian} \centering
\begin{tabular}{|c|c|c|}
\hline
Link Failures   & Evolution with Motion  & Random Networks\\[2pt]
\hline
1087.2  &1032.4  & 9.4\\[2pt]
\hline
\end{tabular}
\end{table}

The number of iterations required to achieve the same accuracy (tolerance of $10^{-3}$)
using \eqref{ChebyshevRecurrenceSwitch} with different parameters
is shown in Tables
\ref{table_switching_linkfailures},
\ref{table_switching_motionevolution} and
\ref{table_switching_randomnetwork}
for the Link Failures, Evolution with Motion and Random Networks
scenarios respectively.
\begin{table}[!ht]
\caption{\small Number of iterations for Link Failures}
\label{table_switching_linkfailures} \centering
\begin{tabular}{|c|c|c|c|c|c|}
\hline
$\lambda_m\backslash\lambda_M$&0.25&0.5&0.75&0.9&0.95\\[2pt]
\hline
-0.25   &$\geq3000$       &$\geq3000$       &1298.1       &383.3       &267.9\\[2pt]
-0.5    &$\geq3000$       &$\geq3000$       &1328.6       &418.9       &293.5\\[2pt]
-0.75   &$\geq3000$       &$\geq3000$       &1356.6       &452.3       &316.8\\[2pt]
-0.9    &$\geq3000$       &$\geq3000$       &1321.0       &470.9       &330.0\\[2pt]
-0.95   &$\geq3000$       &$\geq3000$       &1326.4       &476.9       &334.5\\[2pt]
\hline
\end{tabular}
\end{table}
\begin{table}[!ht]
\caption{\small Number of iterations for Evolution with Motion}
\label{table_switching_motionevolution} \centering
\begin{tabular}{|c|c|c|c|c|c|}
\hline
$\lambda_m\backslash\lambda_M$&0.25&0.5&0.75&0.9&0.95\\[2pt]
\hline
-0.25   &$\geq3000$  &1738.0  &600.1  &457.2  &260.9\\[2pt]
-0.5    &$\geq3000$  &1765.2  &665.6  &461.9  &306.5\\[2pt]
-0.75   &1726.5      &1793.5  &703.6  &506.3  &309.8\\[2pt]
-0.9    &1740.0      &1813.0  &708.5  &564.9  &311.0\\[2pt]
-0.95   &1744.5      &1818.0  &710.4  &564.9  &311.5\\[2pt]
\hline
\end{tabular}
\end{table}
\begin{table}[!ht]
\caption{\small Number of iterations for Random Networks}
\label{table_switching_randomnetwork} \centering
\begin{tabular}{|c|c|c|c|c|c|}
\hline
$\lambda_m\backslash\lambda_M$&0.25&0.5&0.75&0.9&0.95\\[2pt]
\hline
-0.25   &8.1   &8.3    &11.8   &25.4  &$\infty$\\[2pt]
-0.5    &8.3   &8.9    &11.6   &22.3  & 42.1 \\[2pt]
-0.75   &8.7   &9.6    &11.8   &21.7  & 37.8 \\[2pt]
-0.9    &8.9   &10.0   &12.0   &21.7  & 36.8 \\[2pt]
-0.95   &9.0   &10.1   &12.0   &21.7  & 36.5 \\[2pt]
\hline
\end{tabular}
\end{table}

With these results we can extract some interesting remarks.
First of all, for the parameters tested in the experiment,
the algorithm is convergent in almost all the cases.
Only in the Random Networks the algorithm diverges when
$\lambda_M=0.95$ and $\lambda_m=-0.25$
(Table \ref{table_switching_randomnetwork} first row and sixth column).
The cells with ``$\geq3000$'' iterations point that for these
parameters the algorithm converges but in a slow way.
A second interesting detail is that, similarly to the fixed topology
case, we can always find parameters that make our algorithm
achieve the consensus faster than using \eqref{Consensus_Laplacian}
(results of Table \ref{table_switching_laplacian}).
However, it is surprising which parameters achieve this goal
in the different scenarios.
For the Link Failures and the Evolution with Motion, the best parameters
are exactly the parameters that make the algorithm
diverge for the Random Networks scenario,
i.e., $\lambda_M=0.95$ and $\lambda_m=-0.25$ with 267.9
and 260.9 iterations respectively.
On the other hand, the best parameters for the Random Networks are those
who give the slowest convergence rate for the other two scenarios,
i.e., $\lambda_M=0.25$ and $\lambda_m=-0.25$ with 8.1 iterations in Table \ref{table_switching_randomnetwork}
versus more than 3000 in Tables \ref{table_switching_linkfailures} and \ref{table_switching_motionevolution}.
The explanation for this phenomenon appears in the variability of the eigenvectors
of the weight matrices.
When the topology changes arbitrarily at each iteration, there is
a great variability in the eigenvectors of the weight matrices,
which turns out in a great variability of $\textbf{x}(n)$.
This situation is closer to the worst case we have shown in section IV
to proof the convergence of the algorithm.
Therefore, a good convergence rate requires a large value of $c-d,$
achieved when $\lambda_M$ and $\lambda_m$ have small modulus.
When the topology changes smoothly, as in the Link Failures and the Motion Evolution,
the eigenvectors almost do not change and the algorithm behaves similarly to the fixed case.
For that reason, the parameters that achieve the best convergence rate are the same
as in the fixed case.
However, we must be careful because for larger values of $\lambda_M$ the algorithm
may diverge.

A final detail is that, in all the cases, the convergence seems
to be more affected by $\lambda_M$ than $\lambda_m.$
This is explained by the use of the local degree weights.
As we have mentioned earlier, these matrices do not have symmetric eigenvalues
with respect to zero.
In these matrices $\lambda_{\max}$ dominates the convergence rate,
so the convergence is more sensible to
the parameter $\lambda_M$.
%Again, we can see this as an additional benefit of using these matrices
%for the consensus (besides they can be computed in a distributed way)
%combined with our approach.

In conclusion, when the topology of the network changes,
the parameters should be chosen taking into account the nature of these changes.
For small changes similar parameters to the fixed case should be assigned
whereas if the network is expected to change a lot we should pick
small parameters for the algorithm to guarantee convergence.

\section{\bf Conclusions}
\label{conclusions}
In this paper we have analyzed the properties
of Chebyshev polynomials to design a
fast distributed consensus algorithm.
%In this paper
%We have presented a new
%distributed consensus algorithm using
%Chebyshev polynomials.
We have shown that the proposed algorithm significantly reduces the
number of communication rounds required by the
network to achieve the consensus.
We have provided a theoretical analysis of the
properties of the algorithm
in both fixed and switching communication topologies.
%convergence speed and the properties of the algorithm.
We have also evaluated our method with
an extensive set of simulations.
Both theoretical and empirical analysis show
the goodness of our proposal.
%In addition, we have shown that the evaluation
%of the minimal polynomial in large networks
%can be badly affected by numerical errors.

\section{\bf Acknowledgments}
This work was supported by the project DPI2009-08126
and grant AP2007-03282 Ministerio de Educacion y Ciencia.

%%%%%%%%%%%%%%%%%%%%%%%%%%%%%%%%%%%%%%%%%%%%%%%%%%%%%%%%%%%%%%%%%%%%%%%%%%%%%%%%%%%%%%%%%%%%%%%%

%\balance
%%%%%%%%%%%%%%%%%%%%%%%%%%%%%%%%%%%%%%%%%%%%%%%%%%%%%%%%%%%%%%%%%%%%%%%%%%%%%%%%%%%%%%%%%%%%%%%%
%\bibliographystyle{plain}
\bibliographystyle{unsrt}
\bibliography{Bibliography}

\appendix
\subsection{Proof of Theorem \ref{ConvergenceTh}}
We introduce two auxiliary results to proof the convergence.
%%%%%%%%%%%%%%%%%%%%%%%%%%%%%%%%%%%%%%%%%%%%%%%%%%%%%%%%%%%%%%
%LEMMA
%%%%%%%%%%%%%%%%%%%%%%%%%%%%%%%%%%%%%%%%%%%%%%%%%%%%%%%%%%%%%%
\begin{lemma}
\label{lemma2}
Given $x_1 > 1,$
for any $x_2$ such that $|x_2| < x_1$ it holds that
%$\lim_{n\to \infty} \dfrac{T_n(x_2)}{T_n(x_1)} =0.$
\begin{equation}
\label{limitTozero}
\lim_{n\to \infty} \dfrac{T_n(x_2)}{T_n(x_1)} =0.
\end{equation}
\end{lemma}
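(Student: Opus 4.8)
The plan is to reduce everything to the closed-form expression \eqref{direct_Expression}, which writes $T_n$ through the auxiliary function $\tau$, and then to compare the exponential growth rates $|\tau(x_1)|^{-n}$ and $|\tau(x_2)|^{-n}$ of denominator and numerator. First I would record that, since $x_1 > 1$, the paper's definition gives $\tau(x_1) = x_1 - \sqrt{x_1^2-1} \in (0,1)$, and more generally that for every $x$ with $|x|>1$ one has $|\tau(x)| = (|x|+\sqrt{x^2-1})^{-1}$, while $|\tau(x)| = 1$ whenever $|x|\le 1$. The crucial monotonicity observation is that $t \mapsto t + \sqrt{t^2-1}$ is strictly increasing on $(1,\infty)$, so that $|\tau(x)|$ is a strictly decreasing function of $|x|$ for $|x|>1$; combined with the boundary value $|\tau|=1$ on $[-1,1]$, this yields the single inequality I need, namely $|\tau(x_2)| > \tau(x_1)$ whenever $|x_2| < x_1$.

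Next I would substitute \eqref{direct_Expression} into the ratio and factor out the dominant terms,
\[
\frac{T_n(x_2)}{T_n(x_1)} = \left(\frac{\tau(x_1)}{\tau(x_2)}\right)^{n}\,\frac{1+\tau(x_2)^{2n}}{1+\tau(x_1)^{2n}}.
\]
Taking moduli, the second factor is harmless: since $|\tau(x_2)|\le 1$ and $0<\tau(x_1)<1$, its numerator is at most $2$ and its denominator at least $1$, so it stays bounded by $2$ for all $n$. The first factor is $\bigl(\tau(x_1)/|\tau(x_2)|\bigr)^{n}$, and by the inequality above its base is strictly less than $1$, so it tends to $0$ geometrically. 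The product therefore tends to $0$, which is the claim.

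Finally I would note that the argument is uniform across the three regimes $|x_2|<1$, $|x_2|=1$ and $1<|x_2|<x_1$, because the single fact $|\tau(x_2)|>\tau(x_1)$ covers all of them; in the first two regimes one may alternatively argue even more directly, using $|T_n(x_2)|\le 1$ together with $|T_n(x_1)|\to\infty$. The only point demanding care — and the step I would treat as the main obstacle — is establishing the strict inequality $|\tau(x_2)| > \tau(x_1)$ cleanly for all admissible $x_2$, i.e. verifying the monotonicity of $|\tau|$ and folding the unit-modulus case $|x_2|\le 1$ into the same formula; once that comparison is in place, the limit follows immediately from the geometric decay.
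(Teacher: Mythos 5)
Your proof is correct and follows essentially the same route as the paper: both rest on the closed form \eqref{direct_Expression}, the factorization of the ratio into $(\tau(x_1)/\tau(x_2))^{n}$ times a bounded fraction, and the key inequality $|\tau(x_2)|>\tau(x_1)$. The only cosmetic difference is that you fold the case $|x_2|\le 1$ into the same $\tau$-comparison (the paper instead dispatches it separately via $|T_n(x_2)|\le 1$ and $T_n(x_1)\to\infty$, an alternative you also note), and you supply the monotonicity of $|\tau|$ explicitly where the paper simply asserts $1>|\tau(x_2)|>\tau(x_1)>0$.
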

\vskip 0.2cm
%%%%%%%%%%%%%%%%%%%%%%%%%%%%%%%%%%%%%%%%%%%%%%%%%%%%%%%%%%%%%%
%PROOF
%%%%%%%%%%%%%%%%%%%%%%%%%%%%%%%%%%%%%%%%%%%%%%%%%%%%%%%%%%%%%%
\begin{proof}
For $|x_2|\leq1,$ $|T_n(x_2)|\leq1,\ \forall n,$ and since $T_n(x_1)\to \infty$
with $n,$
eq. \eqref{limitTozero} is true.
Now, if $1<|x_2| < x_1,$ then
using \eqref{direct_Expression} we have
\begin{equation}
\label{fractionTn}
\dfrac{T_n(x_2)}{T_n(x_1)} = \dfrac{\tau(x_1)^{n}}{\tau(x_2)^{n}} \;
\dfrac{1+\tau(x_2)^{2n}}{1+\tau(x_1)^{2n}}.
%\dfrac{\tau(b)^{n}}{\tau(a)^{n}} \;\dfrac{\dfrac{1}{\tau(b)^{2n}}+
%1}{\dfrac{1}{\tau(a)^{2n}}+ 1},
\end{equation}
%and if $1< |x_2| < x_1$, then
But in this case $1>|\tau(x_2)|>\tau(x_1)>0$ and the result holds immediately.
\end{proof}
%%%%%%%%%%%%%%%%%%%%%%%%%%%%%%%%%%%%%%%%%%%%%%%%%%%%%%%%%%%%%%
%END PROOF
%%%%%%%%%%%%%%%%%%%%%%%%%%%%%%%%%%%%%%%%%%%%%%%%%%%%%%%%%%%%%%
%%%%%%%%%%%%%%%%%%%%%%%%%%%%%%%%%%%%%%%%%%%%%%%%%%%%%%%%%%%%%%
%LEMMA
%%%%%%%%%%%%%%%%%%%%%%%%%%%%%%%%%%%%%%%%%%%%%%%%%%%%%%%%%%%%%%
\begin{lemma}
\label{lemma3}
Given $x >1$, for any complex number $z,$ such that
$|\tau(z)| = \min\{|z+\sqrt{z^2-1}|, |z-\sqrt{z^2-1}|\} > \tau(x),$ then
$\lim_{n\to \infty} T_n(z)/T_n(x) =0.$
\end{lemma}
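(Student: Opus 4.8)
The plan is to mirror the proof of Lemma \ref{lemma2}, the only difference being that $\tau(z)$ is now a complex number, so comparisons of growth rates must be made through moduli. First I would invoke the direct expression \eqref{direct_Expression}, which remains valid for complex arguments once $\tau(z)$ is chosen so that $|\tau(z)| \le 1$. Writing both the numerator and the denominator in this form yields, exactly as in \eqref{fractionTn},
\begin{equation}
\frac{T_n(z)}{T_n(x)} = \left(\frac{\tau(x)}{\tau(z)}\right)^{n} \frac{1 + \tau(z)^{2n}}{1 + \tau(x)^{2n}},
\end{equation}
the difference being that $\tau(z)$ is complex while $\tau(x)$ is real.

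Next I would take absolute values and separate the geometric factor from the bounded correction factor. Since $x>1$ we have $\tau(x)\in(0,1)$, and by hypothesis $|\tau(z)| > \tau(x) > 0$, so the base $\tau(x)/|\tau(z)|$ of the geometric factor is strictly less than $1$. For the remaining fraction, $|\tau(z)| \le 1$ gives $|1 + \tau(z)^{2n}| \le 1 + |\tau(z)|^{2n} \le 2$, while $1 + \tau(x)^{2n} \ge 1$. Combining these,
\begin{equation}
\left|\frac{T_n(z)}{T_n(x)}\right| \le 2\left(\frac{\tau(x)}{|\tau(z)|}\right)^{n},
\end{equation}
and since $\tau(x)/|\tau(z)| < 1$ the right-hand side tends to $0$, which establishes the claim.

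There is no genuine obstacle beyond careful bookkeeping. The one point requiring attention is that $\tau(z)$ is complex, so one must pass to moduli before comparing decay rates and must verify that the correction factor $1+\tau(z)^{2n}$ stays bounded, which follows immediately from $|\tau(z)| \le 1$. The strict inequality $|\tau(z)| > \tau(x)$ is precisely what drives the dominant geometric factor to zero, playing the role that $|\tau(x_2)| > \tau(x_1)$ played in Lemma \ref{lemma2}.
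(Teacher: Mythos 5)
Your proof is correct and follows exactly the route the paper intends: its own proof of this lemma is just the one-line remark that the claim is a ``straightforward consequence of \eqref{fractionTn}'', and your argument is precisely the spelled-out version of that, passing to moduli and bounding the correction factor by $2$.
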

\vskip 0.2cm
%%%%%%%%%%%%%%%%%%%%%%%%%%%%%%%%%%%%%%%%%%%%%%%%%%%%%%%%%%%%%%
%PROOF
%%%%%%%%%%%%%%%%%%%%%%%%%%%%%%%%%%%%%%%%%%%%%%%%%%%%%%%%%%%%%%
\begin{proof}
It is a straightforward consequence of \eqref{fractionTn}.
\end{proof}
\textbf{Proof of Theorem \ref{ConvergenceTh}.}
%Let $\textbf{Q}= \textbf{A}- \frac{1}{N}\textbf{1}\textbf{1}^T$,
Let $\textbf{Q}= \textbf{A}- \textbf{1}\textbf{w}_1^T/\textbf{w}_1^T\textbf{1}$,
whose eigenvalues are $0,$ with $\textbf{v}_1$ its corresponding right eigenvector,
and $\lambda_2, \ldots, \lambda_N$ with the same eigenvectors as $\textbf{A}$.
Since $\textbf{v}_1 = \textbf{w}_1^T\textbf{x}(0)\textbf{1}/\textbf{w}_1^T\textbf{1}$,
then $\textbf{1}\textbf{w}_1^T(\textbf{x}(0)-\textbf{v}_1) = 0.$
Taking this into account it is easy to see that
\begin{equation}
\textbf{A}^n (\textbf{x}(0) - \textbf{v}_1) = \textbf{Q}^n (\textbf{x}(0) - \textbf{v}_1),\quad \forall n \in \mathbb{N},
\end{equation}
and therefore $P_n(\textbf{A})(\textbf{x}(0)-\textbf{v}_1) = P_n(\textbf{Q})(\textbf{x}(0)-\textbf{v}_1).$

Also $\textbf{A} \textbf{v}_1= \textbf{v}_1$ and
$P_n(1)=1$, then $P_n(\textbf{A})\textbf{v}_1=\textbf{v}_1$ and
%\begin{equation}
%\begin{split}
%&\Vert \textbf{x}(n) - \textbf{v}_1 \Vert_2 =
%\Vert P_n(\textbf{A})(\textbf{x}(0) -\textbf{v}_1 )\Vert_2=\\
%&\Vert P_n(\textbf{Q}) (\textbf{x}(0) -\textbf{v}_1) \Vert_2
%\le \Vert P_n(\textbf{Q}) \Vert_2 \Vert \textbf{x}(0) -\textbf{v}_1 \Vert_2.
%\end{split}
%\end{equation}
\begin{equation}
\Vert \textbf{x}(n) - \textbf{v}_1 \Vert_2 =
\Vert P_n(\textbf{A})(\textbf{x}(0) -\textbf{v}_1 )\Vert_2=\\
\Vert P_n(\textbf{Q}) (\textbf{x}(0) -\textbf{v}_1) \Vert_2
\le \Vert P_n(\textbf{Q}) \Vert_2 \Vert \textbf{x}(0) -\textbf{v}_1 \Vert_2.
\end{equation}
In addition, since $\textbf{A}$ is diagonalizable,
so is $\textbf{Q},$
which implies that $\textbf{Q}$
can be decomposed, $\textbf{Q}=\textbf{P}\textbf{D}\textbf{P}^{-1},$
with $\textbf{D} = $diag$(0,\lambda_2,\ldots,\lambda_N)$.
Using algebra rules we get that
$P_n(\textbf{Q})= \textbf{P} P_n(\textbf{D}) \textbf{P}^{-1}$
and then
\begin{equation}
%\small
\Vert P_n(\textbf{Q}) \Vert_2 \leq
\Vert \textbf{P} \Vert_2\ \rho(P_n(\textbf{Q}))\ \Vert \textbf{P}^{-1} \Vert_2 =
K \max_{i\ne 1} |P_n(\lambda_i)| =
K \max_{i\ne 1} \dfrac{|T_n(c \lambda_i -d)|}{T_n(c-d)},
\end{equation}
with $K$ the condition number of $\textbf{P}.$
%which implies that its spectral norm
%coincides with the spectral radius,
%\begin{equation}
%\small
%\Vert P_n(\textbf{Q}) \Vert_2= \rho(p_n(\textbf{Q})) = \max_{i\ne 1} |p_n(\lambda_i)| = \max_{i\ne 1} \dfrac{|T_n(c %\lambda_i -d)|}{T_n(c-d)}.
%\end{equation}

For any $x \in(\lambda_M+\lambda_m-1,1)$
we have that $|cx -d| < c-d,$ then
for all the real eigenvalues of $\textbf{A}$ but $\lambda_1,$
$|c\lambda_i -d| < c-d.$
Noting that $c-d$ is strictly larger than 1
and $\tau(c-d)<\tau(c\lambda_z-d),$
for any complex eigenvalue $\lambda_z$,
by Lemmas \ref{lemma2} and \ref{lemma3},  $p_n(\lambda_i) \to 0$
for all $i\ne 1$, which proves the convergence of the algorithm.

%Now, if $\lambda_i \in [\lambda_m, \lambda_M]$ for all $i\ne 1$, $|c\lambda_i -d| \le 1$ and
%\begin{equation}
%%\max_{i\ne 1} |p_n(\lambda_i)| \le \dfrac{1}{T_n(c-d)}.
%\max_{i\ne 1} |p_n(\lambda_i)| \le 1/T_n(c-d).
%\end{equation}
%The value in \eqref{convRate} is extracted from this and Lemma \ref{lemma1}.
\EndProof 
\subsection{Proof of Theorem \ref{Th_optimalParameters}}
In order to proof Theorem \ref{Th_optimalParameters}
we will use the following auxiliary results.
\begin{lemma}
\label{lemmaOptimal1}
Let $\lambda_m,\ \lambda_M$ such that
$[\lambda_N, \lambda_2] \not\subseteq [\lambda_m, \lambda_M]$
and $|c\lambda_N-d| <c\lambda_2 -d$.
Then, for fixed $c$, $\nu(c,d)$ is a decreasing function of $d$.
\end{lemma}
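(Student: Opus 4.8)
The plan is to first collapse the maximum in the definition of $\nu(c,d)$ to a single term and then differentiate that term in $d$. Recall that $|\tau(\cdot)|$ equals $1$ on $[-1,1]$ and is strictly decreasing in $|x|$ for $|x|>1$. The hypothesis $|c\lambda_N-d|<c\lambda_2-d$ first forces $\lambda_2>\lambda_M$: otherwise $\lambda_2\le\lambda_M$ gives $c\lambda_2-d\le 1$, while the failure of the containment $[\lambda_N,\lambda_2]\subseteq[\lambda_m,\lambda_M]$ would then have to come from $\lambda_N<\lambda_m$, i.e.\ $c\lambda_N-d<-1$, whence $|c\lambda_N-d|>1\ge c\lambda_2-d$, contradicting the hypothesis. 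Hence $c\lambda_2-d>1$, and since $|c\lambda_2-d|=c\lambda_2-d>|c\lambda_N-d|$ the monotonicity of $|\tau|$ yields $|\tau(c\lambda_2-d)|\le|\tau(c\lambda_N-d)|$. Because the common numerator $\tau(c-d)$ is positive, the $\lambda_2$ term realizes the maximum, so that
\[
\nu(c,d)=\frac{\tau(c-d)}{\tau(c\lambda_2-d)},
\]
the absolute value being dropped since $c\lambda_2-d>1>0$.

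With this reduction I would view $\nu$ as a function of $d$ for fixed $c$ and show $\partial\nu/\partial d<0$. Writing $a=c-d$ and $b=c\lambda_2-d$, the key structural fact is $a-b=c(1-\lambda_2)>0$, so $a>b>1$ throughout. Differentiation is made painless by the identity $\tau'(x)=-\tau(x)/\sqrt{x^2-1}$, immediate from $\tau(x)=x-\sqrt{x^2-1}$. Since $da/dd=db/dd=-1$, the quotient rule produces a numerator that factors as
\[
\tau(a)\,\tau(b)\left(\frac{1}{\sqrt{a^2-1}}-\frac{1}{\sqrt{b^2-1}}\right),
\]
over the positive quantity $\tau(b)^2$. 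As $a>b>1$ we have $\sqrt{a^2-1}>\sqrt{b^2-1}$, so the bracket is negative while $\tau(a),\tau(b)>0$; hence $\partial\nu/\partial d<0$, which is exactly the assertion.

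A slicker alternative uses the substitution $x=\cosh\theta$ for $x>1$, under which $\tau(x)=e^{-\theta}$; writing $a=\cosh\alpha$ and $b=\cosh\beta$ with $\alpha>\beta>0$ gives $\nu=e^{-(\alpha-\beta)}$, and since $\frac{d}{dd}(\alpha-\beta)=\frac{1}{\sqrt{b^2-1}}-\frac{1}{\sqrt{a^2-1}}>0$, the factor $\nu$ decreases in $d$. Either way, the only genuine subtlety is the reduction in the first paragraph: one must exploit the hypothesis $|c\lambda_N-d|<c\lambda_2-d$ to guarantee that $\lambda_2$, and not $\lambda_N$, is the \emph{binding} eigenvalue for $\nu$. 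Once that is settled, the monotonicity in $d$ is an elementary consequence of the comparison $a>b>1$, and I would expect no further obstacle.
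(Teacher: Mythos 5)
Your proposal is correct and follows essentially the same route as the paper: reduce $\nu(c,d)$ to $\tau(c-d)/\tau(c\lambda_2-d)$, differentiate using $\tau'(x)=-\tau(x)/\sqrt{x^2-1}$, and conclude from $1<(c\lambda_2-d)^2<(c-d)^2$ that the bracketed factor is negative. Your opening paragraph justifying why the $\lambda_2$ term is the binding one in the maximum is a welcome elaboration of a step the paper leaves implicit, but it does not change the argument.
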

\begin{proof}
Let us see that $\partial \nu(c,d) /\partial d <0$.
\begin{equation*}
\nu(c,d)= \dfrac{\tau(c -d)}{|\tau(c\lambda_2 -d)|}=\dfrac{\tau(c -d)}{\tau(c\lambda_2 -d)}>0
\end{equation*}
Then
\begin{equation*}
\dfrac{\partial\nu}{\partial d}= \dfrac{-\tau'(c -d) \tau(c\lambda_2 -d) + \tau(c -d) \tau'(c\lambda_2 -d)}{\tau(c\lambda_2 -d)^2}.
\end{equation*}
But since for $x>0$, $\tau'(x) = -\tau(x) / \sqrt{x^2-1},$ then
\begin{equation*}
\dfrac{\partial\nu}{\partial d}= \dfrac{\tau(c -d)}{\tau(c\lambda_2 -d)}
 \left[\dfrac{1}{\sqrt{(c-d)^2-1}} - \dfrac{1}{\sqrt{(c\lambda_2 -d)^2-1}}
\right]
\end{equation*}
which is negative because $1 < (c\lambda_2 -d)^2 < (c-d)^2$.
\end{proof}
\vskip 0.5pt
%%%%%%%%%%%%%%%%%%%%%%%%%%%%%%%%%%%%%%%%%%%%%%%%%%%%%%%%%%%%%%%%%%%%
%% END OF PROOF LEMMA
%%%%%%%%%%%%%%%%%%%%%%%%%%%%%%%%%%%%%%%%%%%%%%%%%%%%%%%%%%%%%%%%%%%%

\begin{lemma}
\label{lemmaOptimal2}
Let $\lambda_m,\ \lambda_M$ such that
$[\lambda_N, \lambda_2] \not\subseteq [\lambda_m, \lambda_M]$
and $|c\lambda_N-d| >|c\lambda_2 -d|$ with $c\lambda_N-d <0$.
Then, for fixed $c$, $\nu(c,d)$ is an increasing function of $d$.
\end{lemma}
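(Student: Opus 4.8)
The plan is to mirror the proof of Lemma~\ref{lemmaOptimal1} essentially verbatim, the only --- but decisive --- change being the branch of $\tau$ that governs the denominator of $\nu$.

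First I would note that, in the present regime, $|c\lambda_N-d|>|c\lambda_2-d|$, so the maximum defining $\nu$ is attained at $\lambda_N$ and
\begin{equation*}
\nu(c,d)=\frac{\tau(c-d)}{|\tau(c\lambda_N-d)|}.
\end{equation*}
Because $\lambda_N$ falls outside $[\lambda_m,\lambda_M]$ on the left we have $c\lambda_N-d<c\lambda_m-d=-1$, so the argument $c\lambda_N-d$ is negative with modulus larger than $1$; the relevant branch is then $\tau(x)=x+\sqrt{x^2-1}$, and $\tau(c\lambda_N-d)$ is itself negative. Hence $|\tau(c\lambda_N-d)|=-\tau(c\lambda_N-d)$, and since $c-d>1$ gives $\tau(c-d)>0$, I may write $\nu(c,d)=-\tau(c-d)/\tau(c\lambda_N-d)>0$.

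Next I would differentiate with respect to $d$ with the quotient rule, exactly as for Lemma~\ref{lemmaOptimal1}, using $\partial(c-d)/\partial d=\partial(c\lambda_N-d)/\partial d=-1$. The essential difference is the sign of $\tau'$: for the factor $c-d>0$ one uses $\tau'(x)=-\tau(x)/\sqrt{x^2-1}$, whereas for the factor $c\lambda_N-d<0$ one uses $\tau'(x)=+\tau(x)/\sqrt{x^2-1}$. This sign flip is the whole point of the lemma: after substituting and simplifying, the bracketed difference of Lemma~\ref{lemmaOptimal1} turns into a sum, and one obtains
\begin{equation*}
\frac{\partial\nu}{\partial d}=\nu(c,d)\left[\frac{1}{\sqrt{(c-d)^2-1}}+\frac{1}{\sqrt{(c\lambda_N-d)^2-1}}\right],
\end{equation*}
a product of the positive quantity $\nu(c,d)$ with a strictly positive bracket. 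Therefore $\partial\nu/\partial d>0$ and $\nu$ is an increasing function of $d$.

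The one delicate point, which I would verify carefully, is the sign bookkeeping for $\tau$ and $\tau'$ at the negative argument $c\lambda_N-d$, together with the check that $|c\lambda_N-d|>1$ so that the square roots are real and $\tau$ sits in the correct (negative) branch; everything else is the routine calculation already performed for Lemma~\ref{lemmaOptimal1}.
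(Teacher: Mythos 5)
Your proof is correct and follows essentially the same route as the paper's: write $\nu(c,d)=\tau(c-d)/(-\tau(c\lambda_N-d))$, differentiate by the quotient rule, and use the sign flip $\tau'(x)=+\tau(x)/\sqrt{x^2-1}$ for $x<0$ to turn the difference of Lemma~\ref{lemmaOptimal1} into a sum of positive terms. In fact your final bracket, with $(c\lambda_N-d)^2$ in the second square root, corrects a small typo in the paper's displayed formula (which writes $c\lambda_2-d$ there), and your preliminary check that $c\lambda_N-d<-1$ is a worthwhile addition the paper leaves implicit.
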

\begin{proof}
Let us see that $\partial \nu(c,d) /\partial d >0$.
\[
\nu(c,d)= \dfrac{\tau(c -d)}{|\tau(c\lambda_N -d)|}=\dfrac{\tau(c -d)}{-\tau(c\lambda_N -d)}>0
\]
Then
\[
\dfrac{\partial\nu}{\partial d}= \dfrac{\tau'(c -d) \tau(c\lambda_N -d) - \tau(c -d) \tau'(c\lambda_N -d)}{\tau(c\lambda_N -d)^2}
\]
But since, for $x<0$,  $\tau'(x) = \tau(x) / \sqrt{x^2-1},$ then
\begin{equation*}
\small
\frac{\partial\nu}{\partial d}= \frac{\tau(c -d)}{ -\tau(c\lambda_N -d)}
\left[\frac{1}{\sqrt{(c-d)^2-1}} + \frac{1}{\sqrt{(c\lambda_2 -d)^2-1}}
\right]
\end{equation*}
which is positive.
\end{proof}
\vskip 0.5pt

\begin{proposition}
Let $\lambda_m,\ \lambda_M$ such that $\lambda_M-\lambda_m=2/c$ is fixed and
$[\lambda_N, \lambda_2] \not\subseteq [\lambda_m, \lambda_M]$. Then
\begin{itemize}
\item[i)]
If $\lambda_2- \lambda_N > \lambda_M- \lambda_m$,
$\nu(c,d) \ge \nu(c,d^*)$, $d^*$ being the value such that $\lambda_M+\lambda_m=\lambda_2+\lambda_N$,
that is, for a fixed $c$, $\nu(c,d)$ is minimum when $\lambda_m, \lambda_M$ are symmetrically  placed with respect to $\lambda_N, \lambda_2$.
\item[ii)]
If $\lambda_2- \lambda_N \le \lambda_M- \lambda_m$ and $\lambda_M < \lambda_2$
then $\nu(c,d) \ge \nu(c,d^*)$, $d^*$ being such that $\lambda_M=\lambda_2$,
and in this case $[\lambda_N, \lambda_2] \subseteq [\lambda_m, \lambda_M]$
\item[iii)]
If $\lambda_2- \lambda_N \le \lambda_M- \lambda_m$ and $\lambda_m > \lambda_N$
then $\nu(c,d) \ge \nu(c,d^*)$, $d^*$ being such that $\lambda_m=\lambda_N$,
and in this case $[\lambda_N, \lambda_2] \subseteq [\lambda_m, \lambda_M]$
\end{itemize}
\end{proposition}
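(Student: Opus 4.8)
The plan is to regard $\nu(c,d)$, under fixed $c$ (equivalently fixed width $\lambda_M-\lambda_m=2/c$), as a function of the single parameter $d$ that slides the interval $[\lambda_m,\lambda_M]$ rigidly along the real line, and to locate its minimum by a monotonicity argument assembled from Lemmas~\ref{lemmaOptimal1} and~\ref{lemmaOptimal2}. First I would record the elementary identities produced by the affine map $x\mapsto cx-d$: since $c\lambda_M-d=1$ and $c\lambda_m-d=-1$, one has $c\lambda_2-d=2(\lambda_2-m)/(\lambda_M-\lambda_m)$ and $c\lambda_N-d=2(\lambda_N-m)/(\lambda_M-\lambda_m)$, where $m=(\lambda_M+\lambda_m)/2$ is the sliding midpoint and $d=cm$. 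Because the hypothesis $[\lambda_N,\lambda_2]\not\subseteq[\lambda_m,\lambda_M]$ keeps us in the second branch of $\nu$, throughout the argument
\begin{equation}
\nu(c,d)=\max\left\{\dfrac{\tau(c-d)}{|\tau(c\lambda_N-d)|},\dfrac{\tau(c-d)}{|\tau(c\lambda_2-d)|}\right\}.
\end{equation}

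The engine is the observation that the two lemmas describe complementary regimes. Whenever $\lambda_2$ is the binding eigenvalue, i.e.\ $c\lambda_2-d>|c\lambda_N-d|$, the maximum defining $\nu$ is realized by the $\lambda_2$ term and Lemma~\ref{lemmaOptimal1} gives $\partial\nu/\partial d<0$; whenever $\lambda_N$ binds, i.e.\ $|c\lambda_N-d|>|c\lambda_2-d|$ with $c\lambda_N-d<0$, Lemma~\ref{lemmaOptimal2} gives $\partial\nu/\partial d>0$. As $d$ increases, $c\lambda_2-d$ decreases while $|c\lambda_N-d|$ increases, so the $\lambda_2$-regime precedes the $\lambda_N$-regime and $\nu(c,\cdot)$ is first decreasing and then increasing. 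Hence its minimum over the admissible range of $d$ is attained either at the balance point $|c\lambda_2-d|=|c\lambda_N-d|$, when that point is interior, or at the endpoint of the range.

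It then remains to identify, case by case, where the minimizer sits. In case i) the width is too small to ever contain $[\lambda_N,\lambda_2]$, so the admissible range is all of $d$ and the balance point is interior; solving $|c\lambda_2-d|=|c\lambda_N-d|$ forces $m=(\lambda_2+\lambda_N)/2$, i.e.\ $\lambda_M+\lambda_m=\lambda_2+\lambda_N$ (the symmetric placement), and at that $d^*$ both $\lambda_2,\lambda_N$ lie symmetrically outside $[\lambda_m,\lambda_M]$ with $c\lambda_N-d^*<-1<0$, so the two lemmas apply on either side and pin the minimum there. In case ii) the bound $\lambda_M-\lambda_m\ge\lambda_2-\lambda_N$ together with $\lambda_M<\lambda_2$ forces $\lambda_m<\lambda_N$, so $\lambda_2$ remains the binding constraint throughout ($c\lambda_2-d>|c\lambda_N-d|$); the $\lambda_2$-regime of Lemma~\ref{lemmaOptimal1} holds, $\nu$ is strictly decreasing, and the minimum is at the right endpoint $\lambda_M=\lambda_2$, where the interval just captures $[\lambda_N,\lambda_2]$. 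Case iii) is the mirror image, with $\lambda_N$ binding throughout, $\nu$ increasing by Lemma~\ref{lemmaOptimal2}, and the minimum at the left endpoint $\lambda_m=\lambda_N$.

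The hard part will be the bookkeeping that determines which eigenvalue is binding in each regime and that the two monotonicity lemmas cover the whole admissible range without a gap: one must check that the switch of dominance occurs exactly at the balance point, verify the sign conditions ($c\lambda_N-d<0$ in case i, and $c\lambda_2-d>|c\lambda_N-d|$ in case ii) needed to invoke the lemmas, and confirm the continuity of $\nu$ as $d^*$ is approached from the second branch in cases ii) and iii) (there $c\lambda_2-d^*=1$ or $c\lambda_m-d^*=-1$ make the denominator $|\tau(\cdot)|=1$, so $\nu$ matches the first-branch value $\tau(c-d^*)$). Once these transitions are confirmed, the first-decreasing-then-increasing behaviour furnished by the two lemmas yields $\nu(c,d)\ge\nu(c,d^*)$ in all three cases.
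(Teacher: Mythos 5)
Your proposal is correct and follows essentially the same route as the paper: it invokes Lemmas \ref{lemmaOptimal1} and \ref{lemmaOptimal2} to show $\nu(c,\cdot)$ is decreasing while $\lambda_2$ is the binding eigenvalue and increasing while $\lambda_N$ is, so that for fixed $c$ the minimum sits at the balance point $|c\lambda_2-d|=|c\lambda_N-d|$ in case i) and at the boundary where the interval just captures $\lambda_2$ (resp.\ $\lambda_N$) in cases ii) and iii). Your added checks — that $\lambda_M<\lambda_2$ forces $\lambda_m<\lambda_N$ in case ii), and that $\nu$ is continuous across the branch change at $d^*$ — are worthwhile bookkeeping that the paper leaves implicit, but they do not change the argument.
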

\begin{proof}
\begin{itemize}
\item[i)]
The result follows from Lemmas \ref{lemmaOptimal1} and \ref{lemmaOptimal2}.
If $\lambda_2 >\lambda_M$, then $c\lambda_2-d > |c\lambda_N-d|$
and $\nu(c,d)$ is a decreasing function of
$d=(\lambda_M+\lambda_m)c/2$ which means that
it decreases as $\lambda_M$ increases.
The maximum value of $\lambda_M$
for which these conditions hold is
$\lambda_M=1/c + (\lambda_2+\lambda_N)/2$
for which  $c\lambda_2-d = |c\lambda_N-d|$.

If $\lambda_N  < \lambda_m$, then $c\lambda_2-d < |c\lambda_N-d|$ and $\nu(c,d)$ is an increasing function of $d=(\lambda_M+\lambda_m)c/2$ which means that it increases when $\lambda_M$ increaseses.  The minimum value of $\lambda_M$ for which these conditions hold is
$\lambda_M=1/c + (\lambda_2+\lambda_N)/2$ for which  $c\lambda_2-d = |c\lambda_N-d|$.
\item[ii)]
In this case $c\lambda_2-d > |c\lambda_N-d|$, and $\nu(c,d)$ is a decreasing function of $d=(\lambda_M+\lambda_m)c/2$ which means that it decreases when  $\lambda_M$ increases.
The maximum value of $\lambda_M$ for which these conditions hold is
$\lambda_M=\lambda_2$.
\item[iii)]
In this case $c\lambda_2-d < |c\lambda_N-d|$, and $\nu(c,d)$ is an increasing function of $d=(\lambda_M+\lambda_m)c/2$ which means that it increases when $\lambda_m$ increases.
The minimum value of $\lambda_m$ for which these conditions hold is
$\lambda_m=\lambda_N$.
\end{itemize}
\end{proof}

And finally, we are able to proof the theorem.

\textbf{Proof of Theorem \ref{Th_optimalParameters}.}
If $[\lambda_2, \lambda_N] \subseteq [\lambda_m, \lambda_M]$
the result was proved in~\cite{EM-JIM-CS:11}.
Let us suppose then that $[\lambda_2, \lambda_N] \not\subseteq [\lambda_m, \lambda_M]$.
If $\lambda_2- \lambda_N \le \lambda_M- \lambda_m$, it has been shown in Proposition 1.1
that $\nu(c,d)$ has smaller values for $c,d$ such that
$[\lambda_N, \lambda_2] \subseteq [\lambda_m, \lambda_M]$, and in this case
$\lambda_2=\lambda_M$ and $\lambda_N=\lambda_m$ yields to the minimum
$\nu(c,d)$.

If $\lambda_2- \lambda_N > \lambda_M- \lambda_m$, we have seen in Proposition 1.1
that $\nu(c,d)$ is smaller for $c,d$ such that
$\lambda_m, \lambda_M$ are symmetrically  placed with respect to $\lambda_N, \lambda_2$,
that is, $\lambda_M= \lambda_2 - \alpha$ and $\lambda_m=\lambda_N+\alpha$, $\alpha \ge 0$.  Let us see
that $\nu(c,d)$ is minimum for $\alpha=0$.
First, note that
%\begin{equation*}
%\begin{split}
%&c=\dfrac{2}{\lambda_M-\lambda_m}=\dfrac{2}{\lambda_2-\lambda_N-2\alpha},\\
%&d=\dfrac{\lambda_M+\lambda_m}{\lambda_M-\lambda_m}=
%\dfrac{\lambda_2+\lambda_N}{\lambda_2-\lambda_N-2\alpha}.
%\end{split}
%\end{equation*}
\begin{equation*}
c=\dfrac{2}{\lambda_M-\lambda_m}=\dfrac{2}{\lambda_2-\lambda_N-2\alpha},\hbox{ and }
d=\dfrac{\lambda_M+\lambda_m}{\lambda_M-\lambda_m}=
\dfrac{\lambda_2+\lambda_N}{\lambda_2-\lambda_N-2\alpha}.
\end{equation*}
Thus
\[
\nu(c,d)= \dfrac{\tau(c-d)}{\tau(c \lambda_2 -d)} = \dfrac{\tau(c-d)}{-\tau(c \lambda_N -d)}
\]
and taking into account that
\[
\dfrac{\hbox{d }}{\hbox{d }\alpha}(c\lambda-d)=2\dfrac{2\lambda -\lambda_2-\lambda_N}{(\lambda_2-\lambda_N-2\alpha)^2}=
2\dfrac{c\lambda -d}{(\lambda_2-\lambda_N-2\alpha)},
\]

\begin{equation*}
\dfrac{\hbox{d }\nu(c,d)}{\hbox{d }\alpha}=
\dfrac{-2\tau(c-d)}{\tau(c \lambda_2 -d)(\lambda_2-\lambda_N-2\alpha)} \left[\dfrac{c-d}{\sqrt{(c-d)^2-1}}
- \dfrac{c\lambda_2-d}{\sqrt{(c\lambda_2-d)^2-1}}  \right]>0.
\end{equation*}
Then $\nu(c,d)$ is increasing with $\alpha$ and the minimum value is obtained for $\alpha=0$.
%\end{proof}
\EndProof 
%\subsection{Proof of Theorem \ref{Th_convFaster}}
%\input{proof_convergenceFaster}
\subsection{Proof of Theorem \ref{ConvergeSwitchingTopTh}}
First of all, let us state the notation we will follow
along the proof.
%For any weight matrix $\textbf{A}(n)$
%we denote its eigenvalues and eigenvectors by
%$\lambda_i(n)$ and $\textbf{v}_i(n),\ i=1,\ldots,N,$ respectively.
For any weight matrix $\textbf{A}(n)$
we denote its eigenvectors by
$\textbf{v}_i(n),\ i=1,\ldots,N$.
Let us denote $\textbf{V}(n) = [\textbf{v}_1(n),\ldots,\textbf{v}_N(n)]$
the matrix with all the eigenvectors of $\textbf{A}(n)$.
Thus, $\textbf{A}(n)\textbf{V}(n) = \textbf{V}(n)\textbf{D}(n),$ with
$\textbf{D}(n) = \hbox{diag}(\lambda_1(n),\ldots,\lambda_N(n)).$
Since $\textbf{A}(n)$ is symmetric, it is diagonalizable and
we can choose the base of eigenvectors in such a way that $\textbf{V}(n)$ is orthogonal.
Therefore,
$\textbf{v}_1(n)^T\textbf{v}_i(n) = 0, \forall i=2,\ldots,N,$
and
$\textbf{v}_1(n) = \textbf{1}/\sqrt{N} = \textbf{v}_1$,
for all $n$.
%We denote
%\begin{equation}
%\lambda_{\max} = \max_{n} \max_{i=2,\ldots,N} \lambda_i(n)<1, \hbox{ and }
%\lambda_{\min} = \min_{n} \min_{i=2,\ldots,N} \lambda_i(n)>-1,
%\end{equation}

Let $\textbf{Q}(n)= \textbf{A}(n)- \frac{1}{N}\textbf{1}\textbf{1}^T$,
whose eigenvalues are $0,$ with $\textbf{v}_1(n)=\textbf{1}/\sqrt{N}$ its corresponding eigenvector,
and $\lambda_2(n), \ldots, \lambda_N(n),$ with the same eigenvectors as $\textbf{A}(n)$.
Taking all of this into account it is easy to see that
$\textbf{1}\textbf{1}^T(\textbf{x}(0)-(\textbf{1}^T\textbf{x}(0))\textbf{v}_1) = 0,$
and
\begin{equation}
\textbf{A}(n) (\textbf{x}(n) - (\textbf{1}^T\textbf{x}(0))\textbf{v}_1) =
\textbf{Q}(n) (\textbf{x}(n) - (\textbf{1}^T\textbf{x}(0))\textbf{v}_1).
\end{equation}
Given two consecutive matrices, $\textbf{Q}(n)$ and $\textbf{Q}(n-1),$
let $\textbf{P}(n)$ be the matrix such that $\textbf{V}(n-1) = \textbf{V}(n)\textbf{P}(n),$
that is, the matrix that changes from the base of eigenvectors of $\textbf{Q}(n-1)$ to the
base of eigenvectors of $\textbf{Q}(n).$
In a similar way, $\textbf{R}(n)$ will be such that $\textbf{V}(n-2) = \textbf{V}(n)\textbf{R}(n).$
The orthogonality of $\textbf{V}(n)$,
implies that the matrices $\textbf{P}(n) = \textbf{V}(n)^{-1}\textbf{V}(n-1)$ and
$\textbf{R}(n) = \textbf{V}(n)^{-1}\textbf{V}(n-2)$ are also orthogonal,
and $\|\textbf{P}(n)\|_2=\|\textbf{R}(n)\|_2=1$.
%By Assumption \ref{TimeVaryingStochasticMatrices} $\textbf{A}(n)$ is symmetric,
%which implies that the matrices $\textbf{P}(n)$ and $\textbf{R}(n)$ have all the elements
%in the first row and the first column equal to zero but the elements $p_{11}(n)=q_{11}(n)=1.$

%and therefore $P_n(\textbf{A})(\textbf{x}(0)-\textbf{v}_1) = p_n(\textbf{Q})(\textbf{x}(0)-\textbf{v}1).$

Recalling the Chebyshev recurrence \eqref{ChebyshevRecurrenceSwitch},
we define the error at iteration $n$
by $\textbf{x}(n) - (\textbf{1}^T\textbf{x}(0))\textbf{v}_1.$
The equivalence
\begin{equation}
\label{RecurrenceV1}
%\textbf{v}_1 = \frac{2(c\textbf{Q}(n)-d\textbf{I})\textbf{v}_1 - \textbf{v}_1}{T_n(c-d)}.
\textbf{v}_1 = 2\dfrac{T_{n}(c-d)}{T_{n+1}(c-d)}(c \textbf{A}(n) -d \textbf{I})\textbf{v}_1
-\dfrac{T_{n-1}(c-d)}{T_{n+1}(c-d)} \textbf{v}_1.
\end{equation}
allows us to express the error by $\textbf{e}(n)/T_n(c-d)$,
%$\textbf{e}(n) = \textbf{x}(n) - (\textbf{1}^T\textbf{x}(0))\textbf{v}_1,$
%and recalling the Chebyshev recurrence \eqref{ChebyshevRecurrenceSwitch2},
%the error can be expressed as $\textbf{e}(n) = \textbf{u}(n)/T_n(c-d)$,
with $\textbf{e}(0) = \textbf{x}(0) - (\textbf{1}^T\textbf{x}(0))\textbf{v}_1,$
$\textbf{e}(1) = (c\textbf{Q}(1)-d\textbf{I})\textbf{e}(0)$ and
\begin{equation}
\label{RecurrenceMultTn}
\textbf{e}(n) = 2(c\textbf{Q}(n)-d\textbf{I})\textbf{e}(n-1) - \textbf{e}(n-2).
\end{equation}
%In the previous equivalence we have used the following equivalence
%\begin{equation}
%\label{RecurrenceV1}
%\textbf{v}_1 = 2\dfrac{T_{n}(c-d)}{T_{n+1}(c-d)}(c \textbf{A}(n) -d \textbf{I})\textbf{v}_1
%-\dfrac{T_{n-1}(c-d)}{T_{n+1}(c-d)} \textbf{v}_1.
%\end{equation}

Each vector $\textbf{e}(n)$ can be expressed as a linear combination of the
eigenvectors of $\textbf{Q}(n),$
\begin{equation}
\label{combinationEig}
\textbf{e}(n) = \sum_{i=1}^N \alpha_i(n) \textbf{v}_i(n) = \textbf{V}(n) \bm \alpha(n).
\end{equation}
Replacing $\textbf{e}(n)$ by \eqref{combinationEig} in \eqref{RecurrenceMultTn},
\begin{equation}
\begin{split}
\textbf{e}(n) &= 2(c\textbf{Q}(n)-d\textbf{I})\textbf{V}(n-1)\bm\alpha(n-1) - \textbf{V}(n-2)\bm\alpha(n-1)\\
&= 2(c\textbf{Q}(n)-d\textbf{I})\textbf{V}(n)\textbf{P}(n)\bm\alpha(n-1) - \textbf{V}(n)\textbf{R}(n)\bm\alpha(n-2)\\
&= 2\textbf{V}(n)(c\textbf{D}(n)-d\textbf{I})\textbf{P}(n)\bm\alpha(n-1) -
\textbf{V}(n)\textbf{R}(n)\bm\alpha(n-2)\\
&= \textbf{V}(n)[2(c\textbf{D}(n)-d\textbf{I})\textbf{P}(n)\bm\alpha(n-1) -
\textbf{R}(n)\bm\alpha(n-2)]=
\textbf{V}(n)\bm\alpha(n).
\end{split}
\end{equation}
Therefore, the vectors $\bm\alpha(n)$ satisfy the recurrence
\begin{equation}
\bm\alpha(n) = 2(c\textbf{D}(n)-d\textbf{I})\textbf{P}(n)\bm\alpha(n-1) -
\textbf{R}(n)\bm\alpha(n-2),
\end{equation}
with $\bm\alpha(0) = \bm\alpha(1)$.

Taking spectral norms,
\begin{equation}
\begin{split}
\|\bm\alpha(n)\|_2 &= \|2(c\textbf{D}(n)-d\textbf{I})\textbf{P}(n)\bm\alpha(n-1) -
\textbf{R}(n)\bm\alpha(n-2)\|_2 \leq\\
&\leq 2\|(c\textbf{D}(n)-d\textbf{I})\|_2 \|\textbf{P}(n)\|_2 \|\bm\alpha(n-1)\|_2
+ \|\textbf{R}(n)\|_2\|\bm\alpha(n-2)\|_2 \leq\\
&\leq (2\max_{i}|c\lambda_i(n)-d| \|\bm\alpha(n-1)\|_2 + \|\bm\alpha(n-2)\|_2).
\end{split}
\end{equation}
%with
%\begin{equation}
%K = \max_n \{\|\textbf{P}(n)\|,\|\textbf{R}(n)\|,1\}.
%\end{equation}
By Lemma \ref{LemmaSwitch}
we can bound the norm of $\|\bm\alpha(n)\|$ by
\begin{equation}
\|\bm\alpha(n)\| \leq \kappa_1(x_{\max})^n \|\bm\alpha(0)\|,
\end{equation}
where the parameter $x_{\max}$ in this case is
\begin{equation}
\begin{split}
x_{\max} &= \max_{n} \max_{i=2,\ldots,N} |c\lambda_i(n)-d| =
\max_{n} \{ |c\lambda_2(n)-d| , |c\lambda_N(n)-d| \}=\\
&=\max \{ |c\lambda_{\max}-d|, |c\lambda_{\min} - d| \}.
\end{split}
\end{equation}

Therefore, in order to make the error go to zero we require
that
\begin{equation}
\lim_{n\to\infty}\frac{\kappa_1(x_{\max})^n}{T_n(c-d)}= 0.
\end{equation}
Using \eqref{direct_Expression}
\begin{equation}
\frac{\kappa_1(x_{\max})^n}{T_n(c-d)}= \frac{\kappa_1(x_{\max})^n\tau(c-d)^n}{1+\tau(c-d)^{2n}},
\end{equation}
which goes to zero if $\kappa_1(x_{\max})\tau(c-d)<1.$
When this happens
$\lim_{n\to\infty} \textbf{x}(n)=(\textbf{1}^T\textbf{x}(0)/\textbf{1}^T\textbf{1})\textbf{1},$
and the consensus is achieved.

%The denominators tends to $1$ with $n,$ then
%the consensus is achieved if $\kappa\tau(c-d) < 1$.
%It is equivalent to
%\begin{equation}
%\begin{split}
%(c-d)^2 - (c\lambda-d)^2 &> 1\\
%c(c-d + c\lambda-d) &> \frac{1}{(1-\lambda)}
%c^2(1+\lambda-(\lambda_M+\lambda_m)) &> \frac{1}{(1-\lambda)}
%\end{split}
%\end{equation}

\EndProof

\end{document}